\tikzstyle{overbrace text style}=[font=\tiny, above, pos=.5, yshift=5pt]
\tikzstyle{overbrace style}=[decorate,decoration={brace,raise=5pt,amplitude=3pt}]
\newcommand{\agents}{\ensuremath{A}} %
\newcommand{\parts}{\ensuremath{k}} %
\newcommand{\wFn}{\ensuremath{\omega}} %
\newcommand{\util}{\ensuremath{\operatorname{u}}} %
\newcommand{\Fr}{\ensuremath{F}} %
\newcommand{\pttn}{\ensuremath{\pi}} %
\newcommand{\probName}[1]{\textsc{#1}\xspace}
\renewcommand{\P}{\textsf{P}\xspace}
\newcommand{\NP}{\textsf{NP}\xspace}
\newcommand{\NPhness}{\NP-hardness\xspace}
\newcommand{\NPc}{\NP-complete\xspace}
\newcommand{\FPT}{\textsf{FPT}\xspace}
\newcommand{\XP}{\textsf{XP}\xspace}
\newcommand{\W}[1][1]{\textsf{W[#1]}\xspace}
\newcommand{\Wh}[1][1]{\W[#1]-hard\xspace}
\newcommand{\Whness}[1][1]{\W[#1]-hardness\xspace}
\newcommand{\Oh}[1]{\ensuremath{{\mathcal{O}(#1)}}}
\newcommand{\tw}{\operatorname{tw}}
\newcommand{\vc}{\operatorname{vc}}
\newcommand{\N}{\ensuremath{\mathbb{N}}}
\newtheorem{theorem}{Theorem}
\newtheorem{proposition}{Proposition}
\newtheorem{observation}{Observation}
\newtheorem{corollary}{Corollary}
\newtheorem{definition}{Definition}
\newtheorem{claim}{Claim}
\crefname{claim}{Claim}{Claims}
\newenvironment{claimproof}[1]{\par\noindent\emph{Proof.}\hspace{0.15cm}#1}{\hfill~$\blacktriangleleft$\smallskip}
\newcommand{\pgfextractangle}[3]{%
\pgfmathanglebetweenpoints{\pgfpointanchor{#2}{center}}
{\pgfpointanchor{#3}{center}}
\global\let#1\pgfmathresult
}
\newcommand{\convexpath}[2]{
[
create hullnodes/.code={
	\global\edef\namelist{#1}
	\foreach [count=\counter] \nodename in \namelist {
		\global\edef\numberofnodes{\counter}
		\node at (\nodename) [draw=none,name=hullnode\counter] {};
	}
	\node at (hullnode\numberofnodes) [name=hullnode0,draw=none] {};
	\pgfmathtruncatemacro\lastnumber{\numberofnodes+1}
	\node at (hullnode1) [name=hullnode\lastnumber,draw=none] {};
},
create hullnodes
]
($(hullnode1)!#2!-90:(hullnode0)$)
\foreach [
evaluate=\currentnode as \previousnode using \currentnode-1,
evaluate=\currentnode as \nextnode using \currentnode+1
] \currentnode in {1,...,\numberofnodes} {
	let
	\p1 = ($(hullnode\currentnode)!#2!-90:(hullnode\previousnode)$),
	\p2 = ($(hullnode\currentnode)!#2!90:(hullnode\nextnode)$),
	\p3 = ($(\p1) - (hullnode\currentnode)$),
	\n1 = {atan2(\y3,\x3)},
	\p4 = ($(\p2) - (hullnode\currentnode)$),
	\n2 = {atan2(\y4,\x4)},
	\n{delta} = {-Mod(\n1-\n2,360)}
	in
	{-- (\p1) arc[start angle=\n1, delta angle=\n{delta}, radius=#2] -- (\p2)}
}
-- cycle
}
\title{Balanced and Fair Partitioning of Friends%
    \footnote{%
        An extended abstract of this work has been published in the Proceedings of the 39th {AAAI} Conference on Artificial Intelligence, AAAI~'25~\cite{DeligkasEIKS2025}. %
    }%
}
\author[1]{Argyrios Deligkas}
\author[1]{Eduard Eiben}
\author[1]{Stavros D. Ioannidis}
\author[2]{Dušan Knop}
\author[2]{Šimon~Schierreich}
\affil[1]{Royal Holloway, University of London, United Kingdom}
\affil[2]{Czech Technical University in Prague, Czech Republic}
\begin{document}

\maketitle

\begin{abstract}
    In the recently introduced model of {\em fair partitioning of friends}, there is a set of agents located on the vertices of an underlying graph that indicates the friendships between the agents. 
    The task is to partition the graph into~$k$ balanced-sized groups, keeping in mind that the value of an agent for a group equals the number of edges they have in that group.
    The goal is to construct partitions that are ``{\em fair}'', i.e., no agent would like to replace an agent in a different group.
    We generalize the standard model by considering utilities for the agents that are beyond binary and additive. 
    Having this as our foundation, our contribution is threefold
    (a) we adapt several fairness notions that have been developed in the fair division literature to our setting;
    (b) we give several existence guarantees supported by polynomial-time algorithms;
    (c) we initiate the study of the computational (and parameterized) complexity of the model and provide an almost complete landscape of the (in)tractability frontier for our fairness concepts.
\end{abstract}

\section{Introduction}
\label{sec:intro}

You are the coordinator of your organization's annual banquet, and your task is to allocate seats on tables for the employees. 
As you aim for perfection, you want to ensure that every employee believes that they are part of one of the ``best'' tables of the banquet.
In other words, you want each employee to value the company of their table ``almost'' as much as the company they would get if they replace an employee at a different table.
However, you have the following constraints: (a) there are only~$k$ tables of the same size that exactly fit the participants; (b) friendships between employees vary greatly; (c) you want to be ``fair'' to every employee. 
So, what seat allocation would bring all employees to the tables?

Scenarios like the above appear in several other domains where the goal is to partition a group of people into~$k$ groups of almost the same size: students for team projects, employees to training groups, %
kids for camp houses, or clustering in general.
Recently, \citet{LiMNS2023} introduced an elegant framework in order to formally study those situations. 
In their model, there is a {\em friendship graph} where every node corresponds to an agent and every edge corresponds to a friendship between the agents. 
The task is to create~$\parts$ groups of almost equal size such that the resulting grouping is ``fair''. They have focused on two fairness notions: a version of the {\em core}, refined for this model, and {\em almost envy-free} partitions, denoted EF$r$, which is most relevant to this work. 
In an EF$r$ partition, no agent could increase the number of friends by at least~$r$ by replacing an agent from another group.
\citeauthor{LiMNS2023} showed that for general friendship graphs and~$r=\Oh{\sqrt{\frac{n}{\parts} \log \parts}}$, where~$n$ is the number of agents and~$\parts$ is the number of parts, EF$r$ partitions always exist and can be computed in polynomial time. 
In addition, they showed that EF1 partitions do not always exist, even when~$\parts=2$.

Although the model of \citeauthor{LiMNS2023} is elegant and sets the foundation of the framework, it implicitly assumes that: 
(a) every agent values their friends equally, i.e., every friend counts the same, thus friendships are symmetric;
(b) the value over friends is additive.
Furthermore, the computational complexity of finding fair partitions remained unexplored.
As one can imagine, the aforementioned assumptions can be rather restrictive for a variety of real-life situations.
Can we augment the model in order to handle a richer variety of situations?
In addition, the factor~$r=\Oh{\sqrt{\frac{n}{k} \log k}}$ for EF$r$ in general graphs could be prohibitive to convince the agents that the partition is fair.
Having said this, the space between general graphs and more restricted graph classes is vast. Can one exploit the structure of the friendship graph and provide stronger fairness guarantees?
The goal of the paper is to shed some light on the above-mentioned questions.

\subsection{Our Contribution}
Our initial contribution is the generalization of the model of \citeauthor{LiMNS2023}~\cite{LiMNS2023}, which goes beyond binary, symmetric and additive utilities, and thus can capture more real-life scenarios. Having this as our foundation, our contribution is threefold:
\begin{enumerate}[label=(\alph*)]
    \item we adapt several fairness notions that have been developed in the fair division literature to our setting; that allows for the choice of the most appropriate one depending on the scenario in which our model is applied;
    \item we give several existence guarantees supported by polynomial-time algorithms;
    \item we initiate the study of the computational (and parameterized) complexity of the model and provide an almost complete landscape of the (in)tractability frontier for our fairness concepts with respect to various restrictions of the problem.
\end{enumerate}

More specifically, in~\Cref{sec:fairness}, we define the fairness notions for our model (namely, EF,~$\text{EFX}_0$, EFX, EF1, PROP, and MMS), and we provide a complete taxonomy for their interconnections. Next, in~\Cref{sec:complexity}, we provide a thorough study of the computational complexity of deciding the existence of fair allocations. Unfortunately, the main takeaway message there is negative: all fairness notions are intractable, even in very simple cases, such as if the friendship graph is a path (\Cref{thm:PROP:NPh}), or is bipartite, with vertex cover of size 2, and we aim only for two parts (\Cref{thm:EFX:NPh,thm:EF1_NPh_vc2,thm:MMS:nopolytime}). On the positive side, we show that the problem becomes {\em fixed-parameter tractable}, for {\em all} fairness notions of our interest, if all friendships have the same value and we parameterize by the vertex cover number (\Cref{thm:binaryFPTbyVC}). Note that the vertex cover number is a very natural parameter for our problem. In many scenarios similar to our initial motivational story, there is a small set of super-stars/influencers/politicians, and many people who do not know each other want to be close to them.

The bulk of our positive results is given in \Cref{sec:trees}, which focuses on tree-like friendship graphs. The main message is that the existence of a fair allocation and its tractability essentially depends on the concept of fairness we adopt. In fact, we provide a strong dichotomy. For half of the fairness notions we consider (MMS, EFX, and EF1), we prove that they {\em always} admit a solution, even if the utilities are monotone. Actually, we provide algorithms that compute such allocations that become polynomial-time if the utilities are additive (\Cref{thm:monotone:MMS:exists}). However, the problem is intractable for the remaining notions even on trees with binary utilities (\Cref{thm:unweighted:EF:tree:NPh}). On the positive side, we complement our negative results with ``semi''-efficient algorithm: when the number of parts~$\parts$ is constant, utilities are binary, and the friendship graph is of constant tree-width, the problem can be solved in polynomial time (\Cref{thm:tw:XP}). The algorithm is based on dynamic programming and is the best possible given our previous hardness results.

Before we conclude in \Cref{sec:conclusion} with some interesting directions for future research, in \Cref{sec:discussion}, we extensively discuss how our hardness and algorithmic results generalize beyond balanced partitions. Specifically, we show that our results completely characterize the complexity landscape of fairness in the more general model of additively separable hedonic games with fixed-size coalitions, as introduced by \citeauthor{BiloMM2022}~\cite{BiloMM2022}.

\subsection{Related Work}

Our work, perhaps surprisingly, builds upon the foundations of three different subfields of AI research: \emph{fair division}, \emph{coalition formation}, and \emph{clustering}.

In the \emph{fair division of indivisible items} model~\cite{BramsT1996,BouveretCM2016,AmanatidisABFLMVW2023}, we are usually given a set of indivisible items and a set of agents together with their preferences over the items, and the goal is to find an allocation of items to the agents that is fair with respect to some well-defined notion of fairness. The crucial difference from our model is that, in our case, the set of agents and the set of items coincide. Additionally, we are given a friendship graph that further restricts the agents' preferences. Few papers have been published on fair division in the presence of an additional social network~\cite{AzizBCGL2018,BeynierCGHLMW2019,BredereckKN2022,EibenGHO2023}. However, in these works, the social network restricts the communication between agents and does not encode their preferences as in our case. A different line of work on fair division also considered the presence of graphs~\cite{BouveretCEIP2017,ChristodoulouFKS2023,ZhouWLL2024,MisraS2024}; however, here, the edges correspond to items and do not encode the preferences.

\emph{Additively-Separable Hedonic Games} (ASHGs)~\cite{BogomolnaiaJ2002} is an important special case of \emph{hedonic games}~\cite{DrezeG1980,AzizS2016}, where we are given a set of agents, the utility each agent~$a$ receives from being in the same group as an agent~$b$, and the task is to partition the agents into several groups, called coalitions, such that the outcome is stable. There are also several works that study fairness in the context of coalition formation~\cite{AzizBS2013,WrightV2015,Peters2016a,Peters2016b,Ueda2018,BarrotY2019,KerkmannNR2021}, but none of these works requires coalitions of fixed size. Such settings, where either the number of coalitions or the coalitions' sizes are prescribed, were studied in recent works~\cite{BiloMM2022,SlessHKW2018,CsehFH2019,LevingerAH2023,MonacoM2023}. However, all of these papers suppose stability and not fairness as a solution concept. In this context, related are also \emph{friend and enemies games}~\cite{DimitrovBHS2006,OhtaBISY2017,BarrotOSY2019,FlamminiKV2022,SkibskiSGSMY2022,ChenCRS2023}, where the set of agents is given together with a friendship graph, where each edge represents that two agents are either friends or enemies. This is clearly different from our work, as we allow only friends, and, at the same time, the strength of friendships can be different for every pair of agents. Additionally, we require a fixed number of parts.

The third field where our results may find application is \emph{clustering}~\cite{Jain2010,EverittLLS2011} and specifically \emph{data microaggregation}~\cite{DomingoFerrer2009,BlazejGKPSS2023}. In related problems, the goal is usually to partition a set of data points into~$k$ \emph{clusters} such that the clusters have some desirable properties that vary based on a particular application. The closest problem to our work is proportionally fair clustering~\cite{ChenFLM2019,MichaS2020,caragiannis2024proportional}. However, in contrast to our work, they do not suppose an underlying graph, and they require the solution to consist additionally of~$k$ representatives/centroids of every cluster.

Finally, we would like to mention that balanced partitioning of graphs is also a widely studied problem in graph theory. Arguably, the closest problem to what we study in this work is the so-called \probName{Equitable Connected Partition} (ECP for short), where we are given a graph and a number of parts~$\parts$, and the goal is to partition the graph into~$\parts$ connected sub-graphs of almost equal sizes. This problem, heavily studied from the computational complexity perspective~\cite{GareyJ1979,Altman1997,EncisoFGKRS2009,BlazejKPS2024}, differs from ours in the sought solution concept; in ECP, all the parts need to be connected, while in our model, every solution partition needs to be fair, which does not imply connectedness. It should be pointed out that ECP has many applications, including redistricting theory~\cite{Williams1995,Altman1997,LandauRY2009,Schutzman2020,KoTAM2022,ChenBCLRL2023}, an important sub-field of computational social choice.

\section{Preliminaries}

By~$\N$, we denote the set of positive integers~$\{1,2,3,\ldots\}$. For every~$i\in\N$, we let~$[i] = \{1,\ldots,i\}$ and~$[i]_0 = \{0\}\cup[i]$.

The input of our problem consists of a set~${\agents = \{a_1,\ldots,a_n\}}$ of~$n$ \emph{agents}. The relations between the agents are represented by an undirected graph~$G=(\agents,E)$, called a \emph{friendship graph}, where each edge~$\{a,b\}\in E$ represents the friendship between agents~$a$ and~$b$. Observe that the set of vertices and the set of agents coincide and that the friendship is always mutual. By~$\Fr(a) = \{ b \mid \{a,b\}\in E\}$, we denote the set of \emph{friends} of an agent~$a\in\agents$. The number of friends, denoted by~$\deg(a) = |\Fr(a)|$, is called the \emph{degree} of agent~$a$. By~$\Delta$, we denote the maximum degree in our friendship graph, that is,~$\Delta = \max_{a\in\agents} \deg(a)$. For a graph~$G$, we denote by~$V(G)$ the set of its vertices and by~$E(G)$ the set of its edges. For additional graph-theoretical notation, we follow the monograph of \citet{Diestel2017}.

For every agent~$a\in \agents$, we have a \emph{utility function}~$\util_a\colon \agents\to\N\cup\{0\}$ such that, for an agent~$b\in A$, we have~$\util_a(b) > 0$ if~$b\in \Fr(a)$ and~$0$ otherwise. If~$\util_a(b) = \util_b(a)$ for every~$a,b\in \agents$, we say that the utilities are \emph{symmetric}. Additionally, if for every~$b\in\agents$ it holds that~$\util_a(b)$ is the same for every~$a\in \Fr(b)$, the utilities are called \emph{objective}, and we write~$\util_\forall(a) = x$ to denote that the utility of all~$\Fr(a)$ for~$a$ is the same~$x > 0$. If for some~$a\in\agents$ we have~$\util_a(b) = \util_a(b')$ for every pair of~$b,b'\in F(a)$, we say that the agent~$a$ is \emph{indistinguishing}. Finally, if~$\util_a(b) \in \{0,1\}$ for every~$a,b\in\agents$, we say that the utilities are \emph{binary}. Observe that binary utilities are symmetric and objective, and all agents are indistinguishing.

Our goal is to find a \emph{partition} of agents into~$\parts\leq n$ \emph{groups} of almost the same sizes, which is fair with respect to some well-defined notion of fairness. A \emph{$\parts$-partition} of~$\agents$ is a list~$\pttn = (\pttn_1,\ldots,\pttn_k)$ such that~$\bigcup_{i\in[k]} \pttn_i = \agents$,~$\pttn_i \cap \pttn_j = \emptyset$ for every pair of distinct~$i,j\in[k]$, and no~$\pttn_i$ is an empty set. Observe that a~$k$-partition always exists. If~$\parts$ is clear from the context, we refer to a~$\parts$-partition just as a partition. By~$\pttn(a)$, we denote the part in which agent~$a\in\agents$ is placed in~$\pttn$. 
A~$\parts$-partition~$\pttn$ is called \emph{balanced} if every pair of groups differs in their sizes by at most one. In other words, in a balanced partition we have~$\lfloor \frac{n}{\parts} \rfloor \leq |\pttn_i| \leq \lceil \frac{n}{\parts} \rceil$ for every~$i\in[k]$. Otherwise,~$\pttn$ is called \emph{imbalanced}. Unless stated explicitly, we assume only balanced partitions. By~$\Pi_\parts$, we denote the set of all (balanced)~$\parts$-partitions.

To keep the notation clear, we extend utility functions from single agents to groups of agents. More specifically, let~$S\subseteq\agents$ be a set of agents. Then, to denote the utility of an agent~$a\in\agents$ in set~$S$, we write~$\util_a(S)$. A utility function is \emph{additive}, if it holds that~$\util_a(S) = \sum_{b\in S} \util_a(b)$. 
In some parts of the paper, we also study monotone utilities. 
A utility function is called \emph{monotone}, if for every pair of sets~$S,T$ it holds that~$S \subset T \implies \util(S) \leq \util(T)$. Unless stated otherwise, we assume that the utilities under consideration are additive. Similarly, we set the utility of an agent~$a\in\agents$ in a partition~$\pttn$ to be~$\util_a(\pttn) = \util_a(\pttn(a))$.

\subsection{Parameterized Complexity}

Parameterized complexity studies the complexity of a problem with respect to its input size,~$n$,  and the size of a parameter~$k$. A problem is called {\em fixed-parameter tractable} with respect to a parameter~$k$ if it can be solved in time~$f(k)\cdot \operatorname{poly}(n)$, where~$f$ is a computable function. 
A less favorable, but still positive, outcome is a so-called \XP \emph{algorithm}, which has running-time~$n^\Oh{f(k)}$; problems admitting such algorithms belong to the complexity class~$\XP$.
Showing that a problem is \Wh rules out the existence of a fixed-parameter algorithm under the well-established assumption that~$\W\neq \FPT$. For a more comprehensive introduction to the framework of parameterized complexity, we refer the interested reader to the monograph of \citet{CyganFKLMPPS2015}.

\subsection{Structural Parameters} 
Many of our results indicate that the problem of our interest is not tractable in its full generality. Therefore, we study different restrictions of the underlying input friendship graph. The first structural parameter we use is the vertex cover number which, informally, is the minimum number of vertices we need to remove from~$G$ to obtain an edge-less friendship graph.

\begin{definition}[Vertex Cover Number]
	Let~$G=(\agents,E)$ be a friendship graph. A set~$C\subseteq \agents$ is a \emph{vertex cover} of~$G$ if for every~$\{u,v\}\in E$ we have~$u\in C$ or~$v\in C$. The \emph{vertex cover number} of~$G$, denoted by~$\vc(G)$, is the size of the smallest vertex cover of~$G$.
\end{definition}

In the second half of our paper, we are interested in tree-like friendship graphs. One structural parameter that captures how close a graph is to being a tree is the celebrated treewidth, defined below.

\begin{definition}[Treewidth]
	Let~$G=(\agents,E)$ be a friendship graph. A \emph{tree decomposition} of~$G$ is a triple~$(T,\beta,r)$, where~$T$ is a tree rooted in node~$r$ and~$\beta\colon V(T)\to 2^\agents$ is a mapping such that
	\begin{enumerate}
		\item every agent~$a\in\agents$ is mapped to at least one node~$x\in V(T)$, that is,~$\bigcup_{x\in V(T)} \beta(x) = \agents$,
		\item for every edge~$\{a,b\}\in E$, there exists a node~$x\in V(T)$ such that~$\{a,b\} \subseteq \beta(x)$, and
		\item for every agent~$a\in \agents$, the nodes~$\{ x \in V(T) \mid a\in \beta(x) \}$ for a connected sub-tree of~$T$.
	\end{enumerate}
	
	For a node~$x\in V(T)$, we refer to the set~$\beta(x)$ as a \emph{bag} of~$x$. The width of a tree decomposition is~${\max_{x\in V(T)} |\beta(x)| - 1}$ and the \emph{treewidth} of~$G$, denoted~$\tw(G)$, is the minimum width of a tree decomposition over all tree decompositions of~$G$.
\end{definition}

For algorithmic purposes, we will work with a slight variation of the basic tree decomposition.

\begin{definition}[Nice Tree Decomposition]
	A tree decomposition~$(T,\beta,r)$ is called \emph{nice} if the following conditions hold
	\begin{itemize}
		\item~$\beta(r) = \beta(\ell) = \emptyset$ for every leaf~$\ell\in V(T)$.
		\item every non-leaf node~$x\in V(T)$ is one of the following types:
		\begin{description}
			\item[Introduce Node:] a node with exactly one child~$y\in V(T)$ such that~$\beta(x) = \beta(y) \cup \{a\}$ for some agent~$a\not\in \beta(y)$,
			\item[Forget Node:] a node with exactly one child~$y\in V(T)$ such that~$\beta(x) = \beta(y) \setminus \{a\}$ for some agent~$a\in \beta(y)$, or
			\item[Join Node:] a node with exactly two children~$y,z\in V(T)$ such that~$\beta(x) = \beta(y) = \beta(z)$.
		\end{description}
	\end{itemize}
\end{definition}

It is known that, given a tree decomposition of width~$w$, one can compute an equivalent nice tree decomposition of the same width with~$\Oh{w\cdot|V(T)|}$ nodes in~$\Oh{w^2\cdot\max\{|V(T)|,|V(G)|\}}$ time~\cite{CyganFKLMPPS2015}.

\section{Fairness Concepts}\label{sec:fairness}

In this section, we formally define the studied notions of fairness and settle their basic properties and relationships. We mirror the respective definitions for the standard fair division of indivisible items~\cite{BramsT1996,BouveretCM2016,AmanatidisABFLMVW2023}.

The first notion of fairness is based on envy between pairs of agents and was already studied by \citet{LiMNS2023}.%

\begin{definition}[EF]
	A~$\parts$-partition~$\pttn$ is called \emph{envy-free (EF)} if, for every pair of agents~$a,b\in\agents$,~$\util_a(\pttn(a)) \geq \util_a(\pttn(b)\setminus\{b\})$.
\end{definition}

It is easy to see that envy-free partitions are not guaranteed to exist: assume an instance where the friendship graph is a path with three agents, the utilities are binary, and~$\parts=2$. If the smaller part consists of a degree-two agent, then this agent has envy towards any agent in the larger part. Similarly, if the smaller part consists of a degree-one agent, this agent is envious towards the other degree-one agent.

As illustrated by the previous example, envy-freeness is generally not guaranteed to exist, even in very simple instances. Therefore, we introduce several relaxations of EF, which are very popular and heavily studied in the standard fair division literature.

The first relaxation is inspired by the work of \citet{PlautR2020} and imposes that any envy between two agents can be eliminated by the removal of an arbitrary agent from the target part.

\begin{definition}[EFX$_0$]
	A~$\parts$-partition~$\pttn$ is called
 \emph{envy-free up to any agent (EFX$_0$)} if, for every pair of agents~$a,b\in\agents$ and every agent~$c\in\pttn(b)\setminus\{b\}$ it holds that 
~$\util_a(\pttn(a)) \geq \util_a(\pttn(b)\setminus\{b,c\})$.
\end{definition}

It is easy to observe that every EF partition is also EFX$_0$. To see that this is not the case in the opposite direction, recall the counterexample for the existence guarantee of EF; if we are allowed to remove any agent in the target part, we obtain that any 2-partition for this instance is EFX$_0$.

By the definition of EFX$_0$, the envy of~$a$ towards~$b$ should be eliminated by the removal of any \emph{agent}~$c$, even if this agent is not a friend of agent~$c$. That is,~$\util_a(c) = 0$. This may be counter-intuitive in some scenarios, and therefore, we also introduce a more restricted variant, which requires that the envy is eliminated by the removal of arbitrary \emph{friend}. This notion of fairness is in line with the envy-freeness up to any good, which is due to \citet{CaragiannisKMPSW19}.

\begin{definition}[EFX]
	A~$\parts$-partition~$\pttn$ is called \emph{envy-free up to any friend (EFX)} if, for every pair of agents~$a,b\in\agents$ and every agent~$c\in(\pttn(b)\setminus\{b\})\cap \Fr(a)$ it holds that 
   ~$\util_a(\pttn(a)) \geq \util_a(\pttn(b)\setminus\{b,c\})$.
\end{definition}

Both above-defined relaxations can also be understood such that the envy between two agents can be eliminated by removing a least-valued friend (or agent in the case of EFX$_0$) from the target part. We can give an even more relaxed envy-based notion that allows the removal of the {\em most valued} friend. This is an adaptation of EF1 from the standard fair division literature~\cite{LiptonMMS04,Budish11}.

\begin{definition}[EF1]
	A~$\parts$-partition~$\pttn$ is called \emph{envy-free up to one agent (EF1)} if, for every pair of agents~$a,b\in\agents$, there exists an agent~$c\in\pttn(b)\setminus\{b\}$ such that~$\util_a(\pttn(a)) \geq \util_a(\pttn(b)\setminus\{b,c\})$.
\end{definition}

It was already proved by \cite{LiMNS2023} that, even if the utilities are binary, there are instances with no EF1 partition for any~$\parts \geq 2$. Consequently, the same non-existence result also holds for EFX and EFX$_0$.

The fairness notions introduced so far were based on comparing the agent's current part with all other parts they can move to. The following set of fairness notions deviates from comparison with other agents and is based solely on certain threshold values that should be guaranteed for every agent in each fair outcome.

\begin{definition}[PROP]
	For every agent~$a\in\agents$, we set~$\operatorname{PROP-share}(a) = \util_a(\Fr(a))/\parts$. A~$\parts$-partition~$\pttn$ is called \emph{proportional (PROP)} if for every agent~$a\in\agents$ it holds that~$\util_a(\pttn(a)) \geq \operatorname{PROP-share}(a)$.
\end{definition}

It is again easy to see that proportional partitions are not guaranteed to exist: assume again the instance where the friendship graph is a path on three agents, utilities are binary, and~$\parts = 2$. The~$\operatorname{PROP-share}$ of the degree-two agent is exactly one, and therefore, this agent is necessarily in part with one of its friends. Consequently, the other friend is alone in the second part; the utility of this agent is clearly zero, but its~$\operatorname{PROP-share}$ is~$1/2$. Hence, no partition is proportional for this instance. 

A more relaxed fairness notion compared to PROP is the maxi-min share, denoted MMS.
There each agent performs a though experiment where their threshold is the maximum value they can get among all possible allocations, where in each allocation they gets allocated to the ``worst'' part of it.

\begin{definition}[MMS]
	For every agent~$a\in\agents$, we set~$\operatorname{MMS-share}(a) = \max_{\pttn'\in\pttn_\parts}\min_{i\in[k]} \util_a(\pttn'_i)$.
	A~$\parts$-partition~$\pttn$ is called \emph{maxi-min share (MMS)} if for every agent~$a\in\agents$ it holds that~$\util_a(\pttn(a)) \geq \operatorname{MMS-share(a)}$.
\end{definition}

In our first result, we show that MMS partitions are not guaranteed to exist for any number of parts~$\parts \geq 2$.

\begin{proposition}
\label{thm:MMS:nonexistence}
	For every~$\parts \geq 2$, an MMS partition is not guaranteed to exist, even if the utilities are binary.
\end{proposition}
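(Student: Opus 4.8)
The plan is to exhibit, for each $\parts \geq 2$, a single hard instance built on a complete graph. Concretely, I would let the friendship graph be $K_{\parts+1}$, the clique on $\parts+1$ agents, and set all utilities to be binary (every edge has value~$1$); this is symmetric, objective, additive, and thus a legal instance. Since $n = \parts+1$, we have $\lfloor n/\parts\rfloor = 1$ and $\lceil n/\parts\rceil = 2$, so a balanced $\parts$-partition must consist of exactly one part of size~$2$ and $\parts-1$ parts of size~$1$. The whole argument then rests on the tension between the MMS-share each agent can guarantee itself and the singletons that balance forces upon every partition.

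The first step is to compute $\operatorname{MMS-share}(a) = 1$ for every agent~$a$. For the lower bound I would exhibit one balanced partition witnessing value~$1$: place~$a$ together with one friend in the size-$2$ part and leave the remaining $\parts-1$ agents as singletons. Since $K_{\parts+1}$ is complete, every other agent is a friend of~$a$, so~$a$ has exactly one friend in its own part and exactly one friend in each singleton part, whence $\min_{i} \util_a(\pttn'_i) = 1$. For the matching upper bound I would observe that every balanced partition contains at least one singleton part (indeed $\parts-1 \geq 1$ of them), and a singleton part contains at most one friend of~$a$; hence $\min_i \util_a(\pttn'_i) \leq 1$ for every $\pttn' \in \Pi_\parts$, so the maximum over balanced partitions is exactly~$1$. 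The only mild subtlety is to remember that the MMS thought-experiment ranges over \emph{balanced} partitions, which is precisely what pins the value at~$1$ rather than something larger.

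The second step is the impossibility itself. Take any balanced $\parts$-partition~$\pttn$. As noted, it has at least one singleton part $\pttn_i = \{a\}$, and such an agent receives $\util_a(\pttn(a)) = \util_a(\{a\}) = 0$, because an agent is never its own friend. Since $\operatorname{MMS-share}(a) = 1 > 0$, agent~$a$ violates the MMS guarantee, so~$\pttn$ is not an MMS partition. As~$\pttn$ was arbitrary, no balanced $\parts$-partition is MMS, which is the claim.

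I do not expect a genuine obstacle here; the only point needing care is the upper bound $\operatorname{MMS-share}(a) \le 1$, where one must quantify over all balanced partitions and use that singleton parts cap the per-part friend count. I would also remark that the construction generalizes: any clique $K_n$ with $\parts \nmid n$ and $\lfloor n/\parts\rfloor \geq 1$ works verbatim, since agents landing in the smaller parts always fall one friend short of their share; I single out $K_{\parts+1}$ only because it is the smallest such instance and keeps the bookkeeping trivial.
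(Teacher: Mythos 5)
Your proof is correct, but it takes a genuinely different and simpler route than the paper. The paper builds a two-tier instance on $2\parts+1$ agents ($\parts$ guard-agents forming a clique, each adjacent to all $\parts+1$ standard-agents), in which every balanced part has size at least $2$; non-existence then follows from a pigeonhole argument showing each part must contain exactly one guard-agent, forcing some guard-agent into a size-$2$ part where its utility $1$ falls below its MMS-share of $2$. You instead take $K_{\parts+1}$, where balancedness forces $\parts-1$ singleton parts, and the contradiction is immediate: every agent's MMS-share is $1$ (your two-sided computation of this value is careful and correct, including letting the thought-experiment partition place the agent itself in the size-$2$ part), while any singleton occupant gets utility $0$. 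Your construction is shorter and your side remark about $K_n$ with $\parts \nmid n$ is a valid generalization. What the paper's heavier construction buys is a stronger form of the phenomenon: there the obstruction is not driven by trivially isolated agents, since no part is a singleton, so non-existence cannot be dismissed as an artifact of the rounding in the balancedness condition; the paper also reuses that exact instance later (in the observation that EF1 does not imply MMS), so the extra structure pays off downstream. As a proof of the proposition as literally stated, however, your argument is complete and gap-free.
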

\begin{proof}
	Let~$k\geq 2$ be fixed. We start by defining the friendship graph. First, we create~$\parts+1$ \emph{standard-agents}~$i_1,\ldots,i_{\parts+1}$. Next, we create~$\parts$ \emph{guard-agents}~$a_1,\ldots,a_\parts$, which are all friends with each other; that is, they form a clique. Finally, we connect each guard-agent with every standard-agent. Clearly, it holds that~$\Fr(i_j) = \Fr(i_\ell)$ for all~$j,\ell\in[\parts+1]$ and~$\Fr(a_j)\cup\{a_j\} = \Fr(a_\ell)\cup\{a_\ell\}$ for all~$j,\ell\in[\parts]$. Moreover, the MMS-share of every standard-agent is exactly~$\parts/\parts = 1$, and the MMS-share of every guard-agent is exactly~$(\parts-1+\parts+1)/\parts = 2\parts/\parts = 2$. Overall, we have~$2\parts + 1$ agents, and therefore, in every balanced partition, there are~$\parts-1$ parts of size~$2$ and one part of size~$3$. For the sake of contradiction, assume that there exists an MMS partition~$\pttn=(\pttn_1,\ldots,\pttn_\parts)$ and, without loss of generality, let~$|\pttn_1| = 3$. Suppose that there is at least one part consisting only of standard-agents. Such a partition is clearly not MMS, as the MMS-share of every standard-agent is one, but no two standard-agents are friends. Therefore, in every MMS partition, each part contains at least one guard-agent. By the Pigeonhole principle, we have that each part contains exactly one guard-agent. Thus, at least one guard-agent is in a part of size two with exactly one standard-agent. However, the utility of such guard-agent is~$1$, which is strictly smaller than its MMS-share, which is~$2$. We showed that for our instance, no MMS~$\parts$-partition exists.
\end{proof}

On a positive note, there is a simple structural condition that guarantees the existence of MMS partitions. Informally, if the agents do not have too many friends compared to the number of parts, an MMS partition always exists and can be found efficiently.

\begin{observation}\label{thm:MMS:exists:if:parts:greater:maxDeg}
	If~$\parts > \Delta$, an MMS partition is guaranteed to exist and can be found in linear time.
\end{observation}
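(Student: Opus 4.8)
The plan is to show that when $\parts > \Delta$, every agent's MMS-share is zero, which makes the MMS condition trivially satisfiable by any balanced partition. First I would establish the key quantitative claim: for an arbitrary agent $a \in \agents$, the MMS-share is $\operatorname{MMS-share}(a) = 0$. To see this, recall that $\operatorname{MMS-share}(a) = \max_{\pttn' \in \Pi_\parts} \min_{i \in [\parts]} \util_a(\pttn'_i)$, so it suffices to exhibit a single balanced $\parts$-partition $\pttn'$ in which at least one part contains none of $a$'s friends; that part has utility $0$ for $a$, forcing the inner minimum to be $0$, and since utilities are non-negative the maximum over all such partitions is also $0$.

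The heart of the argument is therefore constructing, for each agent $a$, a balanced partition isolating $a$'s friends from at least one empty-of-friends part. Agent $a$ has $\deg(a) = |\Fr(a)| \leq \Delta < \parts$ friends. The plan is to distribute these at most $\Delta$ friends into distinct parts, one friend per part; this uses at most $\Delta < \parts$ parts, so at least one part receives no friend of $a$ at all. I would then fill in the remaining agents (including $a$ itself and all non-friends) arbitrarily while respecting the balanced-size constraint, placing them so that the designated friend-free part stays friend-free—since there are $n - 1 - \deg(a)$ remaining agents and we only need to avoid putting friends of $a$ into that one part, this is always possible. The resulting $\pttn'$ is balanced and has a part with $\util_a$-value $0$, certifying $\operatorname{MMS-share}(a) = 0$.

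Once every agent has MMS-share $0$, I would conclude that \emph{any} balanced $\parts$-partition $\pttn$ is an MMS partition, since $\util_a(\pttn(a)) \geq 0 = \operatorname{MMS-share}(a)$ holds for all $a$ by non-negativity of utilities. Because a balanced $\parts$-partition always exists and can be produced in linear time (for instance by assigning agents to parts in round-robin order), the existence and the linear-time computation both follow immediately.

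The main obstacle, modest as it is, lies in the balanced-partition bookkeeping: I must verify that after placing $a$'s friends into $\deg(a)$ separate parts I can still complete the assignment so that every part has size in $\{\lfloor n/\parts \rfloor, \lceil n/\parts \rceil\}$ while keeping one part free of $a$'s friends. Since the constraint $\parts > \Delta$ guarantees a spare part and the balanced sizes only constrain cardinalities (not which specific non-friends go where), this packing is always feasible; I would state it cleanly rather than belabor the arithmetic. A subtle point worth a sentence is that the argument does not even require the witnessing partition $\pttn'$ to be the \emph{same} for all agents—each agent certifies its own zero MMS-share independently—so no global compatibility condition is needed.
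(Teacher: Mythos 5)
Your proposal contains a genuine logical gap: the quantifier in your key claim points the wrong way. You write that to establish $\operatorname{MMS-share}(a) = 0$ it ``suffices to exhibit a single balanced $\parts$-partition'' with a part free of $a$'s friends. But $\operatorname{MMS-share}(a)$ is a \emph{maximum} over all balanced partitions of the inner minimum; exhibiting one partition whose inner minimum is $0$ only shows that this particular partition contributes $0$ to that maximum. It does not rule out that some \emph{other} partition places a friend of $a$ in every part, in which case its inner minimum --- and hence the MMS-share --- could be strictly positive. Consequently, the construction in your second paragraph and the balanced-partition bookkeeping in your final paragraph are effort spent on the trivial inequality $\operatorname{MMS-share}(a) \geq 0$, rather than on the upper bound $\operatorname{MMS-share}(a) \leq 0$ that the statement actually needs.

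The fix is exactly the paper's argument, and it renders your construction unnecessary: the hypothesis $\parts > \Delta$ guarantees, by the Pigeonhole principle, that in \emph{every} $\parts$-partition (in particular in one achieving the maximum in the definition of the MMS-share) the at most $\Delta < \parts$ friends of $a$ occupy at most $\Delta$ parts, so at least one part contains no friend of $a$ and the inner minimum is $0$. Since this holds for all partitions simultaneously, the maximum of these minima is $0$ for every agent, and then --- as in your third paragraph, which is the part of your argument that is sound --- any balanced $\parts$-partition (e.g., round-robin, computable in linear time) is MMS.
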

\begin{proof}
	Let~$a\in\agents$ be an agent with~$\deg(a) = \Delta$ and~$\pttn$ be a~$\parts$-partition maximizing~$a$'s MMS-share. Since~$\parts > \Delta$, we obtain by the Pigeonhole principle that at least one part of~$\pttn$ contains no friend of the agent~$a$. Therefore,~$a$'s MMS-share in such instances is~$0$. Consequently, every~$\parts$-partition of~$\agents$ is MMS.
\end{proof}

\subsection{Verifying Fairness}

Now, we discuss the complexity of verifying whether a given partition is fair with respect to each notion of interest. It is easy to see that, given a partition, it can be verified in polynomial time whether this partition is EF, EFX$_0$, EFX, or EF1.  We just enumerate all pairs of agents and check whether the corresponding fairness criterion is violated or not. Similarly, it is easy to determine the~$\operatorname{PROP-share}$ for each agent and check whether their utility in the given partition exceeds this value. For MMS though, as we show in the following result, the situation is not as positive as for the other fairness notions we study.

\begin{theorem}\label{thm:MMS:NPh:verify}
	It is \NPc to decide whether the MMS-share of an agent~$a\in\agents$ is at least a given~$\sigma\in\N$, even if~$\parts=2$ and the utilities are symmetric.
\end{theorem}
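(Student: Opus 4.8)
The plan is to reduce from the classical \NPc problem \probName{Partition}: given positive integers $z_1,\dots,z_m$ with $\sum_{j} z_j = 2T$, decide whether some subset sums to exactly $T$. The first observation is that, since utilities are additive and agent $a$ values only its friends, computing $\operatorname{MMS-share}(a)$ for $\parts = 2$ amounts to splitting the multiset of friend-values $\{\util_a(b) : b\in\Fr(a)\}$ into two balanced-size parts so as to maximize the smaller of the two value-sums; all non-friends (including $a$ itself, for which $\util_a(a)=0$) contribute nothing to either side. This is exactly the structure of a weight-balancing problem, so \probName{Partition} is the natural source. Membership in \NP is immediate: a balanced $2$-partition $\pttn'$ witnessing $\min_{i}\util_a(\pttn'_i)\ge\sigma$ is a polynomial-size, polynomial-time-checkable certificate.

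Concretely, I would build the friendship graph as a star: a centre $a$ together with leaves $b_1,\dots,b_m$, and set $\util_a(b_j)=\util_{b_j}(a)=z_j$ while all other pairs have utility $0$, which keeps the instance symmetric. I would then set $\parts = 2$ and the threshold $\sigma = T$, and ask whether $\operatorname{MMS-share}(a)\ge \sigma$.

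The main obstacle is that a balanced partition constrains the \emph{sizes} of the two parts, whereas \probName{Partition} constrains only the \emph{sums}: a subset of friends summing to $T$ need not have the cardinality required to form half of a balanced partition. I would resolve this by padding the instance with a pool of $D\ge m$ isolated \emph{dummy} agents, which have zero utility for and from everyone and can therefore be distributed freely to fix cardinalities without affecting any value-sum. A short counting argument then shows that, with $D\ge m$ dummies, for every subset $S\subseteq\{b_1,\dots,b_m\}$ one can place $S$ in one part and its complement in the other and subsequently distribute $a$ and the dummies so that both parts reach the prescribed balanced sizes $\lfloor N/2\rfloor$ and $\lceil N/2\rceil$, where $N=1+m+D$.

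It then remains to verify the equivalence. If the \probName{Partition} instance is a yes-instance, the subset summing to $T$ yields, after padding, a balanced partition in which both parts have value exactly $T$, so $\min_i\util_a(\pttn'_i)=T=\sigma$ and $\operatorname{MMS-share}(a)\ge\sigma$. Conversely, because the total value is $2T$, any balanced partition with $\min_i\util_a(\pttn'_i)\ge T$ must split the value as $T+T$; the friends of $a$ lying in one part then form a subset summing to exactly $T$, giving a yes-instance. Since the values $z_j$ are encoded in binary, this is a genuine \NPc reduction, and the construction uses only $\parts=2$ and symmetric utilities, as required.
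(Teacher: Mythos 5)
Your proof is correct, and it reaches the result by a genuinely different route than the paper. The core gadget is the same in both arguments --- a star centered at the distinguished agent $a$ whose leaves carry the input integers as symmetric utilities, with $\sigma$ set to half the total --- but the source problems differ. The paper reduces from \probName{Equitable Partition} (split $2n$ integers into two sets of cardinality exactly $n$ with equal sums), so the cardinality requirement of balancedness is inherited directly from the source and no padding is needed. You reduce from plain \probName{Partition} and repair the sum-versus-cardinality mismatch by adding $D\ge m$ isolated zero-utility dummy agents; your counting argument is sound (every subset $S$ of leaves satisfies $|S|\le\lfloor N/2\rfloor$ when $D\ge m$, and the remaining $D+1$ agents exactly fill both parts to the balanced sizes), and the backward direction then needs only the observation that a total value of $2T$ with both parts at least $T$ forces a $T+T$ split. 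Your padding buys a real advantage: it makes the placement of $a$ and the part cardinalities completely irrelevant to the value sums. In the paper's unpadded star, a balanced partition may place the center in the \emph{smaller} part, so the leaves split $(n-1,n+1)$; the paper's backward step (``since $\pttn$ is balanced, $|I|$ is of size $n$'') silently skips this case, and it genuinely matters, since a no-instance of \probName{Equitable Partition} such as $(3,1,1,1)$ still admits an equal-sum $(n-1,n+1)$ split and hence would make the paper's constructed instance a yes-instance. Your construction requires no such case analysis, at the mild cost of a larger instance and one extra (easy) padding lemma.
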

\begin{proof}
	We show the result by reduction from the \probName{Equitable Partition} problem. Here, we are given a sequence~$S=(s_1,\ldots,s_{2n})$ of integers, and our goal is to decide whether there exists~$I\subseteq[2n]$ of size~$n$ such that~$\sum_{i\in I} s_i = \sum_{i\in [2n]\setminus I} s_i$. This problem is known to be \NPc~\cite{GareyJ1979}.
	
	We construct an equivalent instance~$\mathcal{J}$ of our problem as follows. The friendship graph is a star with~${2n}$ leaves~$a_1,\ldots,a_{2n}$ and a center~$c$. The leaves are in one-to-one correspondence with elements of~$S$; to ensure this, we set~$\util_c(a_i) = \util_{a_i}(c) = s_i$ for every~$i\in[2n]$. Observe that the utilities are symmetric. To finalize the construction, we set~$a=c$ and~$\sigma = \sum_{i\in[2n]} s_i / 2$.
	
	For correctness, assume that~$S$ is a \emph{yes}-instance and~$I$ is a solution partition. We create the following partition~$\pttn = (\pttn_1,\pttn_2)$, where~$\pttn_1 = \{a_i \mid i \in I\}$ and~$\pttn_2 = \{a_i \mid i \not\in I\}$. The partition is clearly balanced. Since~$I$ is a solution for~$S$, it holds that~$\sum_{i\in I} s_i = \sum_{i\in [2n]\setminus I} s_i = \sum_{i\in[2n]} s_i / 2$ and therefore also~$\util_a(\pttn_1) = \sum_{i\in I} \util_a(a_i) = \sum_{i\in I} s_i = \sum_{i\in[2n]} s_i / 2 = \sigma$ and similarly for~$\pttn_2$. Therefore,~$\mathcal{J}$ is also a \emph{yes}-instance.
	
	In the opposite direction, let~$\mathcal{J}$ be a \emph{yes}-instance, that is,~$\operatorname{MMS-share}(c) \geq \sigma = \sum_{i\in[2n]} s_i / 2$. Additionally, let~$\pttn=(\pttn_1,\pttn_2)$ be a partition such that~$\min_{i\in [\parts]} \util_a(\pttn_i) \geq \sigma$, and without loss of generality, let~$\util_a(\pttn_1) \leq \util_a(\pttn_2)$. Clearly, it holds that~$\util_a(\pttn_1) = \sigma$, as otherwise, the utility of~$a$ in~$\pttn_2$ would be smaller than in~$\pttn_1$. Consequently, we have that~$\util_a(\pttn_1) = \util_a(\pttn_2) = \sigma$. Now, we create a solution~$I$ for~$S$ by setting~$I = \{i \mid a_i \in \pttn_1 \}$. Since~$\pttn$ is balanced,~$|I|$ is of size~$n$. From the utilities, we have that~$\sum_{i\in I} s_i = \sum_{i\in I} \util_a(a_i) = \sum_{a_i \in \pttn_1} \util_a(a_i) = \sigma = \sum_{i\in[2n]} s_i / 2$. Hence,~$I$ is indeed a solution, and~$S$ is also a \emph{yes}-instance.%
\end{proof}

On the other hand, if we restrict ourselves to binary utilities, MMS-shares of each agent can be computed efficiently.

\begin{observation}
	If the utilities are binary, the MMS-share of any agent~$a\in A$ can be computed in linear time.
\end{observation}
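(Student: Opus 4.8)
The plan is to compute the MMS-share of an agent $a\in\agents$ directly from its degree $\deg(a)$ under binary utilities. The key observation is that with binary utilities, the utility of $a$ in any set $S$ equals simply $|S\cap\Fr(a)|$, the number of $a$'s friends placed in $S$. Hence $a$'s MMS-share depends only on how its $\deg(a)$ friends can be distributed among the $\parts$ balanced parts, and the adversary (the maximizer over partitions) wants to spread the friends as evenly as possible so that even the worst part receives as many friends as possible. Since a single agent only cares about its own friends and is indifferent to everyone else, the placement of non-friends is irrelevant except through the balancedness constraint on part sizes.

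First I would recall that $\operatorname{MMS-share}(a) = \max_{\pttn'\in\Pi_\parts}\min_{i\in[k]} \util_a(\pttn'_i) = \max_{\pttn'}\min_{i\in[k]} |\pttn'_i\cap\Fr(a)|$. I would then argue that the friends of $a$ can be distributed essentially freely across the $\parts$ parts, the only restriction being the balanced part sizes. I would verify that there are always enough ``filler'' agents (the non-friends of $a$, together with $a$ itself) to realize the even spread of friends while respecting the size bounds $\lfloor n/\parts\rfloor \le |\pttn'_i| \le \lceil n/\parts\rceil$; this is where a small amount of care is needed, since a part can hold at most $\lceil n/\parts\rceil$ agents. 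The natural candidate value is $\lfloor \deg(a)/\parts\rfloor$, obtained by placing friends as evenly as possible so that the least-loaded part still gets this many.

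The computation then reduces to the following claim: $\operatorname{MMS-share}(a) = \lfloor \deg(a)/\parts\rfloor$, provided the part-size constraints permit this even distribution, which I would check holds in general (and in particular the degenerate case $\deg(a)<\parts$ yields share $0$, consistent with \Cref{thm:MMS:exists:if:parts:greater:maxDeg}). To prove the claim, I would establish the upper bound by a pigeonhole argument: in any partition the $\deg(a)$ friends are split into $\parts$ groups, so some part contains at most $\lfloor \deg(a)/\parts\rfloor$ friends, capping the min. For the matching lower bound, I would exhibit a concrete balanced partition that places exactly $\lfloor \deg(a)/\parts\rfloor$ or $\lceil \deg(a)/\parts\rceil$ friends in each part, padding with non-friends to meet balancedness; this witnesses that the min is at least $\lfloor \deg(a)/\parts\rfloor$.

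The main obstacle I anticipate is not the value itself but verifying the feasibility of the even-spread partition against the balancedness constraint: one must confirm that assigning $\lceil \deg(a)/\parts\rceil$ friends to some parts and $\lfloor \deg(a)/\parts\rfloor$ to others can always be completed to a balanced partition using the remaining $n-\deg(a)$ agents (including $a$). I would handle this by a counting check that the number of available non-friend slots suffices and that no part is forced to exceed $\lceil n/\parts\rceil$; once this is settled, computing $\lfloor \deg(a)/\parts\rfloor$ takes linear time to determine $\deg(a)$, giving the claimed running time.
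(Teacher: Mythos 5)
Your proposal is correct and follows essentially the same route as the paper: both establish $\operatorname{MMS-share}(a) = \lfloor \deg(a)/\parts \rfloor$ via a counting (pigeonhole) upper bound and an explicit balanced witness partition that spreads $a$'s friends as evenly as possible. The feasibility concern you flag is handled in the paper by a round-robin assignment that lists the friends of $a$ first and then the remaining agents, which simultaneously guarantees balancedness and the even spread, so no separate counting check is needed.
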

\begin{proof}
	First, we fix an ordering~$(a_1,\ldots,a_{n-1})$ of agents in~$\agents\setminus\{a\}$, where all~$a_1,\ldots,a_{\deg(a)}$ agents are friends of~$a$. Next, we create a partition~$\pttn = (\pttn_1,\ldots,\pttn_\parts)$ by assigning the agents to parts in a round-robin way with respect to the fixed ordering of agents. Clearly, the partition is balanced and, by the Pigeonhole principle, each part contains at least~$\lfloor \deg(a)/\parts \rfloor$ friends of~$a$, that is,~$\min_{i\in [\parts]} \util_a(\pttn_i) = \lfloor \deg(a)/\parts \rfloor$. For the sake of contradiction, let there exist a partition~$\pttn^*$ with~$\min_{i\in[\parts]} \util_a(\pttn^*_i) \geq \lfloor \deg(a)/\parts \rfloor + 1$. Then, as the utilities are binary, the number of friends of~$a$ is~$\parts \cdot (\lfloor \deg(a)/\parts \rfloor + 1) = \deg(a) + \parts$, which is not possible. Therefore,~$\pttn$ achieves~$\operatorname{MMS-share}$ of~$a$ and can be clearly constructed in linear time. 
\end{proof}

\subsection{Relationships Between Studied Fairness Notions}

The definitions of envy-based fairness concepts clearly indicate the relationships between them: EF~$\Rightarrow$ EFX$_0$, EFX$_0$~$\Rightarrow$ EFX, and EFX~$\Rightarrow$ EF1, and that none of the implications is actually an equivalence.
On the other hand, though, the relations between our two share-based fairness concepts and their connections with the envy-based notions are not straightforward. The rest of the section is devoted to clarifying these relationships, which are concisely depicted in \Cref{fig:unrestricted-results-overview}.

First, we show that the existence of an EF partition does not imply the existence of a PROP partition.

\begin{proposition}
	For every~$\parts \geq 2$, there is an instance that admits an EF~$\parts$-partition and no PROP~$\parts$-partition.
\end{proposition}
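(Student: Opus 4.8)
The plan is to exhibit, for each~$\parts \ge 2$, a single complete friendship graph with binary utilities in which a balanced partition is envy-free but fails proportionality. Concretely, I would take~$G = K_n$, the complete graph on~$n$ agents with~$\util_a(b) = 1$ for all distinct~$a,b\in\agents$, where~$n$ is chosen to be a multiple of~$\parts$; the cleanest choice is~$n = \parts$, in which case every balanced partition consists of singletons. Because~$G$ is complete and the utilities are binary, all agents are interchangeable, so every balanced~$\parts$-partition looks identical from each agent's viewpoint. This symmetry is precisely what will make envy-freeness easy to certify, while the ``all friends in one large clique'' structure is what inflates the proportional share beyond what any single part can deliver.

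Next I would verify EF. Fix any balanced partition~$\pttn$. In the instance with~$n = \parts$, each part is a singleton, so every agent~$a$ has~$\util_a(\pttn(a)) = 0$; moreover, for any~$b$ the set~$\pttn(b)\setminus\{b\}$ is empty, whence~$\util_a(\pttn(b)\setminus\{b\}) = 0$, and the EF inequality~$\util_a(\pttn(a)) \ge \util_a(\pttn(b)\setminus\{b\})$ holds for every pair. For the more robust version with~$n = m\parts$ and parts of size~$m$, an agent's utility is~$m-1$, and deleting the occupant~$b$ of a different part leaves~$m-1$ agents, all friends of~$a$, so the inequality holds with equality; deleting an agent from~$a$'s own part leaves only~$m-2$ friends, which is even smaller. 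Hence~$\pttn$ is EF in either version.

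Then I would show PROP fails. Each agent is friends with all~$n-1$ others, so~$\operatorname{PROP-share}(a) = (n-1)/\parts$. In the~$n = \parts$ instance this equals~$(\parts-1)/\parts > 0$, yet each agent's realized utility is~$0$, violating proportionality. In the size-$m$ version the share is~$m - 1/\parts$ while the realized utility is only~$m-1$, and~$m - 1 < m - 1/\parts$ exactly because~$\parts \ge 2$ forces~$1/\parts < 1$. Either way the proportionality constraint is violated for every agent, so no balanced~$\parts$-partition of this instance is PROP, which completes the separation.

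There is no deep obstacle here: the entire content lies in choosing an instance that decouples the two notions. The one subtlety worth flagging is the EF verification in the non-degenerate version, where envy-freeness holds with \emph{equality} — the value an agent would obtain by swapping into another full part coincides exactly with its current value, so the~$1$-versus-$1/\parts$ gap that kills proportionality is simply invisible to the pairwise-envy comparison. Arguing this equality (rather than a strict inequality) cleanly, and distinguishing the within-part from the cross-part case, is the only place where care is needed.
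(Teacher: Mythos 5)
Your proof is correct and takes essentially the same approach as the paper: a complete graph with binary utilities, where every balanced partition is EF (by symmetry, with the cross-part comparison holding with equality) but proportionality fails because each agent's PROP-share of $\frac{n-1}{\parts}$ strictly exceeds the utility any part can provide. The paper's instance is exactly your $n = m\parts$ family with $m=2$, so your construction subsumes it.
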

\begin{proof}
	Let~$\parts \geq 2$ be a fixed number of parts and let the friendship graph be a complete graph with~$2\parts$ agents~$a_1,\ldots,a_{2\parts}$. Since there are~$2\parts$ agents, the size of each part in every balanced partition is exactly~$2$. Also, as the friendship graph is a complete graph, the PROP-share of every agent is the same:~$\frac{2\parts-1}{\parts} = 2 - \frac{1}{\parts} \in (1,2)$. Hence, proportionality requires that every agent is in part with at least two friends, which is not possible due to the balancedness requirement. Therefore, such an instance admits no PROP~$\parts$-partition. On the other hand, let~$\pttn$ be any balanced~$\parts$-partition. Clearly, each agent is in a part with precisely one of its friends, and every other part contains exactly two of its friends. However, no agent can improve by replacing an agent in any other part, as the number of friends in the target part decreases to one by the replacement. Therefore,~$\pttn$ is~EF.
\end{proof}

Next, we show that, while for~$\parts=2$ every PROP partition (if it exists) is also an EF, this no longer holds when~$\parts \geq 3$.

\begin{proposition}\label{thm:EF:PROP:equal:ifTwoParts}
	If~$\parts=2$, every PROP partition is also EF. For every~$\parts \geq 3$, there is a PROP partition which is not EF, even if the utilities are binary.
\end{proposition}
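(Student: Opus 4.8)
The plan is to treat the two halves of the statement independently.

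\smallskip
\noindent\textbf{First claim ($\parts=2$).} I would fix a PROP partition $\pttn=(\pttn_1,\pttn_2)$ and an arbitrary agent~$a$, assuming without loss of generality that $a\in\pttn_1$. The crux is a single additivity identity: since $a$ values itself and every non-friend at~$0$, we have $\util_a(\pttn_1)+\util_a(\pttn_2)=\util_a(\agents)=\util_a(\Fr(a))$. Feeding this into the PROP guarantee $\util_a(\pttn_1)\ge \util_a(\Fr(a))/2$ gives $\util_a(\pttn_2)\le \util_a(\Fr(a))/2\le \util_a(\pttn_1)$; that is, each agent weakly prefers its own part to the other one. I would then establish EF through a two-line case split on the location of a second agent~$b$: in both cases removing~$b$ cannot increase $a$'s utility, so $\util_a(\pttn(b)\setminus\{b\})\le\util_a(\pttn(b))\le\util_a(\pttn_1)=\util_a(\pttn(a))$, where the middle inequality is trivial when $b\in\pttn_1$ and is exactly the displayed bound when $b\in\pttn_2$. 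No step here is longer than a line.

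\smallskip
\noindent\textbf{Second claim ($\parts\ge 3$).} Here I would exhibit an explicit binary instance together with a PROP partition that fails EF. The plan is to take an agent~$a$ with three friends $f_1,f_2,f_3$, insert the single extra edge $\{f_2,f_3\}$, and pad the graph with $3\parts-4$ isolated agents so that $n=3\parts$ and every part has size three. The partition to analyse is $\pttn_1=\{a,f_1,x\}$ and $\pttn_2=\{f_2,f_3,y\}$ (with $x,y$ isolated), the remaining $\parts-2$ parts being triples of leftover isolated agents. I would check PROP directly: $\util_a(\pttn_1)=1\ge 3/\parts=\operatorname{PROP-share}(a)$ for $\parts\ge 3$, each friend of~$a$ has in-part utility~$1$ against a PROP-share of at most $2/\parts<1$, and every isolated agent has PROP-share~$0$. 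Finally, $a$ envies the non-friend $y\in\pttn_2$, since $\util_a(\pttn_2\setminus\{y\})=|\{f_2,f_3\}|=2>1=\util_a(\pttn(a))$, so the partition is not EF.

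\smallskip
The step I expect to be the real obstacle is guaranteeing PROP \emph{simultaneously} for all agents in the counterexample rather than only for~$a$. A degree-one friend of~$a$ placed outside $a$'s own part would violate its own PROP-share, which is precisely why the edge $\{f_2,f_3\}$ is added: it lets $f_2$ and $f_3$ certify proportionality through each other once they are separated from~$a$, while leaving $\util_a(\pttn_2)=2$ untouched so that the envy of~$a$ survives. Once this design point is in place, everything reduces to the one-line comparisons above.
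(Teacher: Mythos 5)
Your proposal is correct. The first half is essentially the paper's own argument: the additivity identity $\util_a(\pttn_1)+\util_a(\pttn_2)=\util_a(\Fr(a))$ combined with the PROP guarantee shows each agent weakly prefers its own part, and dropping $b$ from the target part only helps; your explicit case split on where $b$ lives is in fact slightly more careful than the paper, which only treats $b$ in the opposite part and writes the key bound as an equality where it should be an inequality when $b\in\Fr(a)$. The second half rests on the same mechanism as the paper's counterexample (a degree-3 agent pinned at utility $1$ while two of its friends sit together in another part), but your construction differs in two details. First, the paper keeps those two friends proportional via a third hub agent $a_1$ adjacent to both, whereas you add the edge $\{f_2,f_3\}$ so they certify each other; both work. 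Second, and more interestingly, your generalization to $\parts>3$ pads with isolated agents, which keeps $\operatorname{PROP-share}(a)=3/\parts\le 1$ for every $\parts\ge 3$ and so verifies uniformly. The paper instead adds $\parts-3$ extra path-coalitions whose \emph{endpoints are neighbors of $a_0$}; as literally stated this inflates $a_0$'s degree to $2\parts-3$ and its PROP-share to $(2\parts-3)/\parts>1$ for $\parts\ge 4$ while its utility stays $1$, so the paper's padded instance is no longer PROP (the fix is to make the added agents non-adjacent to $a_0$, which is exactly what your isolated-agent padding does). In that respect your variant is the more robust of the two.
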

\begin{proof}
	Let~$\pttn$ be a PROP~$2$-partition and let~$a\in\agents$ be an arbitrary agent. By definition, it holds that~$\util_a(\pttn) \geq \util_a(\Fr(a))/2$. Let~$b\in\agents\setminus\pttn(a)$ be an agent of the other part. Then~$\util_a(\pttn(b)\setminus\{b\})) = \util_a(\Fr(a)) - \util_a(\pttn(a)) \leq \util_a(\Fr(a))/2$. Thus,~$\pttn$ is also an EF partition. 
	
	For the second part of the proposition, assume an instance as of \Cref{fig:EF:PROP:equalIfTwoParts} and let~$\parts = 3$. The utilities are binary.

    \begin{figure}[bt!]
        \centering
        \begin{tikzpicture}[every node/.style={draw,circle,inner sep=3pt}]
            \node (a0) at ( 0, 0) {$a_0$};
            \node (a1) at ( 2, 0) {$a_1$};
            \node (a2) at (-2,-2) {$a_2$};
            \node (a3) at ( 0,-2) {$a_3$};
            \node (a4) at ( 2,-2) {$a_4$};
            \node (a5) at ( 4,-2) {$a_5$};
            \node (a6) at ( 6,-2) {$a_6$};

            \begin{pgfonlayer}{background}
                \fill[blue!20] \convexpath{a0,a2}{15pt};
                \fill[red!20] \convexpath{a1,a4,a3}{15pt};
                \fill[green!20] \convexpath{a5,a6}{15pt};
            \end{pgfonlayer}
   
            \draw (a0) edge (a2) edge (a3) edge (a4);
            \draw (a1) edge (a3) edge (a4);
        \end{tikzpicture}
        \caption{An instance and a PROP partition which is not EF.}
        \label{fig:EF:PROP:equalIfTwoParts}
    \end{figure}
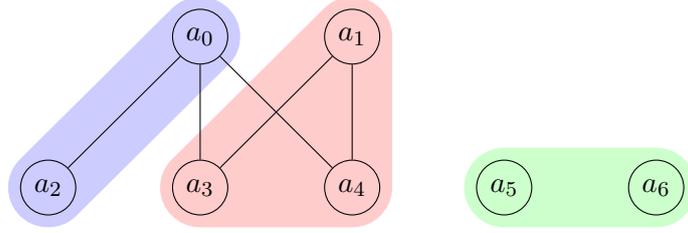
        
	The partition~$\pttn = (\{a_0,a_2\},\{a_1,a_3,a_4\},\{a_5,a_6\})$ is clearly PROP: the utility of agents~$a_2,\ldots,a_6$ is trivially larger than their PROP-share, the agent~$a_1$ is in part with all of its friends, and PROP-share of agent~$a_0$ is, as well as its utility, exactly~$1$. However, if the agent~$a_0$ replaces agent~$a_1$ in part~$\pttn_2$, its utility increases to~$2$. Therefore, the partition~$\pttn$ is not EF. To generalize the construction for any~$\parts > 3$, we just add~$\parts-3$ copies of the coalition~$\pttn_2$: paths on three agents such that the endpoints of each path are neighbors of~$a_0$. This finishes the proof.
\end{proof}

Then, we study the connection between PROP and MMS partitions.

\begin{proposition}
	For every~$\parts \geq 2$, every PROP~$\parts$-partition is MMS. Also, for every~$\parts\geq 2$, there are~$\parts$-partitions that are MMS and not PROP, even if the utilities are binary. 
\end{proposition}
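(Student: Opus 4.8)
The plan is to handle the two claims separately: the first follows from a standard averaging inequality, and the second from an explicit star construction.

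For the first claim, I would show that for every agent~$a\in\agents$ we have~$\operatorname{MMS-share}(a) \leq \operatorname{PROP-share}(a)$; the implication PROP~$\Rightarrow$~MMS is then immediate, since a PROP partition gives each agent~$a$ utility at least~$\operatorname{PROP-share}(a) \geq \operatorname{MMS-share}(a)$. To bound the MMS-share, I would fix an arbitrary balanced~$\parts$-partition~$\pttn'$. Since utilities are additive and~$\util_a$ is supported on~$\Fr(a)$, the parts of~$\pttn'$ split~$\Fr(a)$, and hence~$\sum_{i\in[\parts]} \util_a(\pttn'_i) = \util_a(\Fr(a))$. The minimum part value is at most the average, so~$\min_{i\in[\parts]} \util_a(\pttn'_i) \leq \util_a(\Fr(a))/\parts = \operatorname{PROP-share}(a)$. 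Taking the maximum over all~$\pttn'$ preserves this upper bound, which yields~$\operatorname{MMS-share}(a) \leq \operatorname{PROP-share}(a)$, as desired.

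For the second claim, I would exhibit, for each~$\parts \geq 2$, an instance with binary utilities together with a partition that is MMS but not PROP. Take the friendship graph to be the star~$K_{1,\parts}$ with center~$c$ and leaves~$\ell_1,\ldots,\ell_\parts$, so that~$n = \parts+1$ and every balanced partition consists of one part of size two and~$\parts-1$ singletons. Consider the partition~$\pttn$ with~$\pttn_1 = \{c,\ell_1\}$ and~$\pttn_i = \{\ell_i\}$ for~$i\in\{2,\ldots,\parts\}$. I would first record the MMS-shares: as the utilities are binary, the MMS-share of~$c$ equals~$\lfloor \deg(c)/\parts \rfloor = \lfloor \parts/\parts \rfloor = 1$, while each leaf has a single friend~$c$ and (since~$\parts \geq 2$) can always be separated from~$c$, so its MMS-share is~$0$. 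In~$\pttn$ the center attains utility~$1$ and each leaf attains utility~$0$ or~$1$, so every agent meets its MMS-share and~$\pttn$ is MMS. However, the PROP-share of each leaf is~$\util_{\ell_i}(\Fr(\ell_i))/\parts = 1/\parts > 0$, whereas the leaves~$\ell_2,\ldots,\ell_\parts$ receive utility~$0$; thus~$\pttn$ violates PROP.

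The routine parts are the averaging inequality and the balancedness check. The only genuinely design-dependent step is choosing the star so that the leaves simultaneously have positive PROP-share and zero MMS-share, while the center's part still meets its (strictly smaller) MMS requirement; I expect no real obstacle beyond verifying these MMS-share values, which follow directly from the earlier computation of binary MMS-shares.
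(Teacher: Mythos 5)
Your proof is correct and follows essentially the same route as the paper: the first claim via the inequality $\operatorname{MMS-share}(a) \leq \operatorname{PROP-share}(a)$ (which you justify by the averaging argument the paper leaves implicit), and the second via the same star instance $K_{1,\parts}$ with binary utilities, where leaves have positive PROP-share but MMS-share zero. The only cosmetic difference is that you exhibit and verify a concrete MMS partition directly, whereas the paper argues that no PROP partition exists and invokes its earlier observation on MMS existence; your explicit verification is, if anything, slightly cleaner, since that observation formally requires $\parts > \Delta$ while the star has $\Delta = \parts$.
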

\begin{proof}
	By the definition of PROP and MMS, it clearly holds that~$\operatorname{MMS-share}(a) \leq \operatorname{PROP-share}(a)$, for every agent~$a$. Therefore, if a partition~$\pttn$ is PROP, then~$\util_a(\pttn) \geq \operatorname{PROP-share}(a) \geq \operatorname{MMS-share(a)}$, which shows that~$\pttn$ is also MMS. 
 
    In the opposite direction, let~$\parts\geq 2$ be fixed. The instance contains~$\parts+1$ agents~$a_1,\ldots,a_\parts,c$, and the friendship graph is a star with~$c$ being its center. The utilities are binary. The PROP-share of every leaf~$a_i$,~$i\in[\parts]$, is exactly~$\frac{1}{\parts}$. Consequently, they all need to be in the same part as agent~$c$. However, this is not possible due to the balancedness requirement. Therefore, no PROP allocation exists in such an instance. The existence of MMS partition follows by the fact that~$\Delta = \parts$ combined with \Cref{thm:MMS:exists:if:parts:greater:maxDeg}.
\end{proof}

Next, we show the analog of \Cref{thm:EF:PROP:equal:ifTwoParts}, this time though for MMS and EF1.

\begin{proposition}
	If~$\parts=2$, every MMS partition is also EF1. For every~$\parts \geq 3$, there is an instance that admits an MMS~$\parts$-partition which is not EF1, even if the utilities are binary.
\end{proposition}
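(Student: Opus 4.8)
The statement splits into two independent halves, and I would treat them separately.

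\paragraph{Upper half ($\parts = 2$).} Here I would argue directly from the definitions via a single, carefully chosen swap. Fix an MMS $2$-partition $\pttn = (\pttn_1,\pttn_2)$ and a pair $a,b$. If $a$ and $b$ lie in the same part the EF1 inequality is immediate, since deleting agents from the target part only lowers its (additive, nonnegative) value; so assume $a\in\pttn_1$ and $b\in\pttn_2$. Let $c^\ast$ be an agent of $\pttn_2$ maximizing $\util_a(c^\ast)=:m$, and let $\pttn'$ be the balanced partition obtained from $\pttn$ by swapping $a$ and $c^\ast$. Because $\util_a(a)=0$, we get $\util_a(\pttn'_1)=\util_a(\pttn_1)+m$ and $\util_a(\pttn'_2)=\util_a(\pttn_2)-m$. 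Since $\pttn'$ is balanced, the MMS property of $\pttn$ for $a$ gives $\util_a(\pttn_1)\ge \operatorname{MMS-share}(a)\ge \min\{\util_a(\pttn_1)+m,\ \util_a(\pttn_2)-m\}$. A one-line case analysis (the first term can be the minimum only if $m=0$, which forces $\util_a(\pttn_2)=0$) shows that in every case $\util_a(\pttn_1)\ge \util_a(\pttn_2)-m$. I would then convert this into EF1 by deleting $c^\ast$: choosing $c=c^\ast$ when $c^\ast\ne b$, and any other agent of $\pttn_2$ when $c^\ast=b$, additivity yields $\util_a(\pttn_2\setminus\{b,c\})\le \util_a(\pttn_2)-m\le \util_a(\pttn_1)$, which is exactly the EF1 condition. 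The only degenerate case is $\pttn_2=\{b\}$, where $a$ envies nothing and EF1 holds trivially.

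\paragraph{Lower half ($\parts \ge 3$).} Here I would exhibit a small binary instance. Take an agent $x$ with exactly two friends $f_1,f_2$ (so $\deg(x)=2$), an isolated vertex $b$, and $3\parts-4$ further isolated filler agents, giving $3\parts$ agents in total. Consider the balanced partition into $\parts$ parts of size three with $\pttn_2=\{f_1,f_2,b\}$, with $x$ placed in a part $\pttn_1$ alongside two fillers, and with the remaining $\parts-2$ parts consisting of fillers. Using the earlier observation that binary utilities give $\operatorname{MMS-share}(a)=\lfloor \deg(a)/\parts\rfloor$, every MMS-share in this instance equals $0$ (in particular $\lfloor 2/\parts\rfloor=0$ for $\parts\ge 3$), so the partition is MMS. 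It fails EF1 through the pair $(x,b)$: the agent $x$ has no friend in $\pttn_1$, yet for every $c\in\pttn_2\setminus\{b\}=\{f_1,f_2\}$ the set $\pttn_2\setminus\{b,c\}$ still contains one friend of $x$, whence $\util_x(\pttn_1)=0<1=\util_x(\pttn_2\setminus\{b,c\})$.

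\paragraph{Main obstacle.} The point that needs care is that the EF1 comparison removes \emph{two} agents from the target part — the reference agent $b$ together with the extra agent $c$. In the upper half this is precisely why I swap $a$ with the globally maximal $c^\ast$ rather than with $b$: it is the deletion of $c^\ast$ that the MMS bound ``pays for'', while the reference agent $b$ only helps the inequality. In the lower half the same double deletion forces the witnessing part to still contain a friend of $x$ after two of its members are removed; this is why $\pttn_2$ must have size at least three (a target part of size two can never witness an EF1 violation), and why I pair the two friends with a \emph{non}-friend reference $b$ instead of a third friend — a third friend would push $\deg(x)$ to $3$ and break the MMS-share bound at $\parts=3$, where $\lfloor 3/3\rfloor=1>0$.
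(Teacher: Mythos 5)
Your proposal is correct and follows essentially the same strategy as the paper on both halves: for $k=2$ a swap argument that turns the MMS guarantee into the EF1 inequality (the paper swaps a low-valued member $d$ of $\pi(a)$ with an agent of $\pi(b)$ and argues by contradiction, whereas you swap $a$ itself with the most-valued agent $c^\ast$ of the other part — a cleaner instantiation that avoids the paper's imprecisely justified intermediate claims and handles the $c^\ast=b$ and singleton-part edge cases explicitly), and for $k\ge 3$ a counterexample built from a low-degree agent with MMS-share $0$ whose friends are placed in another part together with a non-friend reference agent (the paper uses a star with $k-1$ leaves plus isolated agents on $2k+1$ vertices with parts of sizes $2$ and $3$; you use a path on three vertices plus fillers on $3k$ vertices with uniform parts of size $3$). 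Both halves check out, including the balancedness of the swapped partition and the computation $\operatorname{MMS-share}(a)=\lfloor \deg(a)/k\rfloor$ under binary utilities.
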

\begin{proof}
	For the first part of the proposition, let~$\pttn$ be an MMS~$2$-partition. Suppose that~$\pttn$ is not EF1, that is, there is a pair of agents~$a,b$ such that~$\util_a(\pttn(a)) < \util_a(\pttn(b)\setminus\{b\}) - \util_a(c)$ for every agent~$c\in \pttn(b)\setminus\{b\}$. 
    This means two things. 
    First, {\em every} agent~$c\in \pttn(b)\setminus\{b\}$ is a friend of~$a$. 
    Second, there exists an agent~$d \in \pttn(a)$, that~$a$ values strictly less than any agent~$c \in \pttn(b)$.
    If any of the above was not true, then the partition {\em would} be EF1. But then agent~$a$ could swap agent~$d\in \pttn(a)$ with any agent~$c \in \pttn(b)$ and get that~$\util_a((\pttn(a)\setminus\{d\})\cup c) > \util_a(\pttn(a)$ and~$\util_a((\pttn(b)\setminus\{b,c\})\cup d) > \util_a(\pttn((\pttn(b)\setminus\{b\}))$. This contradicts the initial assumption that the starting 2-partition was an MMS-partition.
	
	Let~$\parts \geq 3$ be a fixed number of parts. Our instance consists of~$2k+1$ agents~$c,\ell_1,\ldots,\ell_{\parts-1},g_1,\ldots,g_{\parts+1}$ and the set of edges of the friendship graph~$G$ contains the edge~$\{c,\ell_i\}$ for every~$i\in[\parts-1]$. That is, the friendship graph is a disjoint union of a star with center~$c$ and~$\parts$ leaves and an edgeless graph with~$\parts-1$ vertices. Observe that the~$\operatorname{MMS-share}$ of every agent is zero. We partition the agents such that~$\pttn_1 = \{c,g_{\parts+1}\}$,~$\pttn_2 = \{\ell_1,\ell_2,\ell_3\}$, and the rest of the agents are partitioned arbitrarily into parts of size~$2$. This partition is clearly MMS by \Cref{thm:MMS:exists:if:parts:greater:maxDeg}. However,~$\pttn$ is not EF1: the current utility of agent~$c$ is~$0$ and the utility of~$c$ after replacing~$\ell_1$ in~$\pttn_2$ and removing any remaining agent of~$\pttn_1\setminus\{\ell_1\}$ is~$3-2 = 1$. That is, the envy towards~$\ell_1$ cannot be eliminated by the removal of a single friend.
\end{proof}

Our last observation, which concludes the section, states that EF1 does not imply MMS.

\begin{observation}
    For every~$\parts\geq 2$, there exists an instance that admits an EF1 partition and no MMS partition, even if the utilities are binary.
\end{observation}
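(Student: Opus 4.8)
The plan is to reuse the very instance constructed in the proof of \Cref{thm:MMS:nonexistence}, which already witnesses the non-existence of an MMS partition for binary utilities and every~$\parts \geq 2$. Recall that this instance has~$\parts$ guard-agents~$a_1,\ldots,a_\parts$ forming a clique, each of them adjacent to all of the~$\parts+1$ standard-agents~$i_1,\ldots,i_{\parts+1}$, and no other edges. Since this graph admits no MMS partition, it suffices to exhibit a single balanced~$\parts$-partition of it that is EF1; the observation then follows immediately.

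For the EF1 partition, I would place exactly one guard in each part, pair it with a single standard-agent, and dump the two leftover standards into the unique large part. Concretely, set~$\pttn_j = \{a_j, i_j\}$ for~$j \in [\parts-1]$ and~$\pttn_\parts = \{a_\parts, i_\parts, i_{\parts+1}\}$. This partition is balanced: it consists of~$\parts-1$ parts of size two and one part of size three, covering all~$2\parts+1$ agents. The point of this layout is that every guard sits with exactly one friend, and the two ``surplus'' standards are hidden behind a guard in the large part, so no agent can reach utility more than two in any part.

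To verify EF1 I would split by agent type. A standard-agent has utility exactly one in its part (its single guard neighbor), and no matter which occupant of any other part it replaces, its utility after the swap is at most one, since standards are pairwise non-adjacent; hence standards exhibit no envy at all. A guard~$a_j$ in a size-two part has utility one, and the only part toward which it can feel envy is the large part~$\pttn_\parts$: replacing any one of its three members leaves~$a_j$ with two friends (either the two remaining standards, or one standard together with~$a_\parts$), giving utility two. This is exactly where EF1 must do its work: removing one further agent from the target part drops~$a_j$'s utility back to one, matching its current utility, so the envy is eliminated by a single deletion. The remaining cases are immediate: the guard~$a_\parts$ in the large part has utility two, and replacing an occupant of any size-two part yields at most one, so it feels no envy; and in a comparison between two size-two parts, replacing an occupant yields exactly one, again no envy.

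The main obstacle, and the only genuinely load-bearing step, is the guard-versus-large-part comparison: one must confirm that a guard's best possible utility after invading~$\pttn_\parts$ is exactly two and that it necessarily collapses to one after deleting one more agent. This is precisely where binarity of the utilities and the size-three cap on the large part (forced by balancedness on~$2\parts+1$ agents) are used; everything else reduces to the trivial observation that standards contribute nothing to one another.
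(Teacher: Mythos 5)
Your proposal is correct and follows essentially the same route as the paper: it reuses the instance from \Cref{thm:MMS:nonexistence}, forms the same balanced partition (one guard paired with one standard in each small part, the two leftover standards joining a guard in the single size-three part, just with different index labels), and verifies EF1 by the same case analysis, with the guard-versus-large-part comparison resolved by deleting one agent. No gaps.
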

\begin{proof}
	Let~$\parts\geq 2$ be a fixed number of parts. Recall the construction from \Cref{thm:MMS:nonexistence}. It was shown that such an instance does not admit any MMS~$\parts$-partition. Consider now the following~$\parts$-partition~$\pttn=(\pttn_1,\ldots,\pttn_\parts)$: for every~$j\in[2,\parts]$, we set~$\pttn_j = \{a_j,i_j\}$, and we set~$\pttn_1 = \{a_1,i_1,i_{\parts+1}\}$. Clearly,~$\pttn$ is a balanced~$\parts$-partition. As every part contains exactly one guard-agent and every standard-agent belongs to a part with exactly one guard-agent, there is no envy from standard-agents towards any other agent. Also, there is no envy from agent~$a_1$ towards any other agent, as its current utility is~$2$ and by moving to any other coalition, its utility can only decrease. On the other hand, every guard-agent~$a$,~$a\not\in\pttn_1$, has envy towards any agent in the part~$\pttn_1$. Let~$b,c\in\pttn_1$ be a pair of distinct agents of~$\pttn_1$. Then, we have~$\util_a(\pttn(b)\setminus\{b\}) = 2 = \util_a(\pttn(a)) + 1 = \util_a(\pttn(a)) + \util_a(c)$. Thus,~$\pttn$ is EF1.
\end{proof}

\section{Algorithms and Complexity}\label{sec:complexity}

\begin{figure}[bt!]
	\centering
    \begin{tikzpicture}[every node/.style={minimum width=1.5cm}]
		\node[draw] (EF) at (1.5,0) {EF};
		\node[draw] (EFX0) at (1.5,-1) {EFX$_0$};
		\node[draw] (EFX) at (1.5,-2) {EFX};
		\node[draw] (EF1) at (1.5,-3) {EF1};
		\node[draw] (PROP) at (-2,0) {PROP};
		\node[draw] (MMS) at (-2,-3) {MMS};
		
		\draw (EF) edge[->] (EFX0);
		\draw (EFX0) edge[->] (EFX);
		\draw[->] (PROP) edge (MMS);
		\draw[->] (EFX) edge (EF1);
		\draw[->,dashed] (MMS) -- (EF1) node[midway,above] {if~$k=2$};
		\draw[->,dashed] (PROP.east) -- (EF.west) node[midway,above,sloped] {if~$k=2$};

		\node[red,below of=MMS,text width=2.75cm,align=center,node distance=1cm] {\small no poly-time algo. unless \P{}=\NP};
		\begin{scope}[on background layer]
			\fill[red!40,opacity=0.5,rounded corners=5]
			(-1,-2.6) rectangle (-3,-3.4);
		\end{scope}

		\node[red,below of=EF1,text width=3cm,align=center,node distance=1cm] {\small \NP-hard even if~${\vc = \parts = 2}$}; 
		\begin{pgfonlayer}{background}
			\fill[red!40,opacity=0.5,rounded corners=5]
			(0.5,-2.6) rectangle (2.5,-3.4);
		\end{pgfonlayer}
	
		\node[red,right of=EFX,text width=2cm,align=center,node distance=2.75cm,yshift=0cm] (l4) {\small \NP-hard even if~$G$ bipartite and ${\vc = \parts = 2}$}; 
		\begin{pgfonlayer}{background}
			\fill[red!40,opacity=0.5,rounded corners=10]
				(-3.25,0.75) rectangle (2.75,-2.5);
		\end{pgfonlayer}
		\draw[red!40,opacity=0.5,ultra thick] (l4) -- (2.75,-2);
		
		\node[red,left of=PROP,node distance=2.75cm,text width=1.5cm,align=center] (l5) {\small \NP-hard\\ on paths};
		\begin{pgfonlayer}{background}
			\fill[red!40,opacity=0.9,rounded corners=10]
				(-3,0.5) rectangle (2.5,-1.45);
		\end{pgfonlayer}
		
		\draw[ultra thick, red!40] (l5) -- (PROP);
	\end{tikzpicture}
	\caption{A basic overview of our algorithmic and complexity results for unrestricted friendship graphs. An arrow from notion~$A$ to~$B$ denotes that if a partition is fair with respect to~$A$, then it is also fair with respect to~$B$. Here,~$\parts$ is the number of parts,~$n$ is the number of agents, and~$\vc$ denotes the vertex cover number of the friendship graph~$G$.}
	\label{fig:unrestricted-results-overview}
\end{figure}
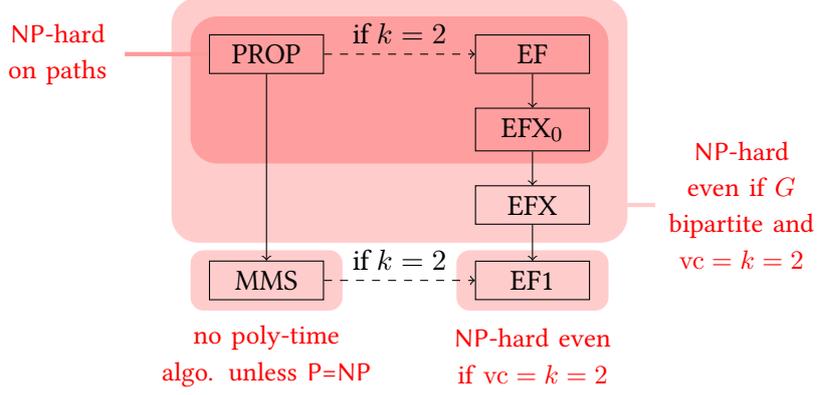

In this section, we provide the complexity landscape for the fair partitioning of friends with respect to all of the above-defined notions of fairness. 
A basic overview of our results is provided in \Cref{fig:unrestricted-results-overview}.

We start with the case of the most general fairness notions, which are envy-freeness and proportionality. Specifically, we show that in this setting, even if the friendship graph is as simple as a path, the problem is intractable. 

\begin{theorem}\label{thm:EF:NPh}\label{thm:PROP:NPh}
	It is \NPc and \Wh when parameterized by the number of parts~$\parts$ to decide whether an EF, EFX$_0$, or PROP allocation exists, even if~$G$ is a path and the utilities are objective. For EF and EFX$_0$, the hardness holds even if the utilities are encoded in unary.
\end{theorem}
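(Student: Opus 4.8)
The plan is to prove \NPhardness by reduction from a strongly \NPhard
numerical problem whose structure naturally lives on a path. Since the
friendship graph must be a path and the utilities must be objective, I would
start from \probName{3-Partition} (which is strongly \NPhard, so the unary
encoding claim comes for free). Recall that in \probName{3-Partition} we are
given $3m$ positive integers summing to $m\cdot B$, each strictly between
$B/4$ and $B/2$, and we ask whether they can be split into $m$ triples each
summing to exactly $B$. The key modelling idea is that on a path, an agent's
utility in a part is determined \emph{only} by how many of its (at most two)
path-neighbours share its part; so to encode arithmetic I would build the
path out of gadget-blocks whose \emph{sizes} (numbers of agents), rather than
individual weights, carry the numeric values $s_i$. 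The number of parts
$\parts$ would be set to $m$ (or $m+1$ with a padding block), which
simultaneously yields the \Whness claim, since $\parts$ is then bounded by a
function of the reduction's combinatorial parameter.

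\smallskip
The core construction I would use is a long path partitioned into consecutive
``item-segments'' $P_1,\dots,P_{3m}$, where segment $P_i$ consists of $c\cdot
s_i$ agents for a suitable scaling constant $c$, interleaved with short
``separator'' gadgets whose role is to force every balanced part to consist of
a union of whole item-segments. The objectivity of utilities means every agent
assigns the same value $\util_\forall$ to its neighbours, so along a path of
length $L$ inside a single part an agent has utility $0,\util_\forall,$ or
$2\util_\forall$ according to whether it has $0,1,$ or $2$ path-neighbours in
its part; I would tune $\util_\forall$ and the separators so that the
\emph{only} EF (resp.\ EFX$_0$, PROP) partitions are exactly those that cut
the path at segment boundaries and group segments into blocks of total size
$cB$ each --- i.e.\ the endpoints of a part must be low-degree ``boundary''
agents whose envy or PROP-share would otherwise be violated. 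The balancedness
requirement, which forces $\lfloor n/\parts\rfloor \le |\pttn_i|\le \lceil
n/\parts\rceil$, is what pins each part to exactly $cB$ agents once $n=m\cdot
cB$ and $\parts=m$; this converts the equal-cardinality constraint of balanced
partitions into the equal-sum constraint of \probName{3-Partition}.

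\smallskip
I would then argue correctness in both directions. For the forward direction,
a valid $3$-partition into triples summing to $B$ yields a balanced
$\parts$-partition in which each part is a contiguous run of three
item-segments, and I would verify directly that every agent meets its EF /
EFX$_0$ / PROP threshold (here the ``objective, unary'' regime makes the
threshold computations explicit small constants, so nothing exponential
appears in the weights). For the reverse direction, I would show that any fair
balanced partition must respect the segment boundaries --- this is where the
separator gadgets and the degree-one boundary agents do the work, forcing any
part that ``splits'' a segment to contain an agent whose envy cannot be
avoided or whose PROP-share is unmet --- and that the resulting grouping of
segments has each block summing to exactly $cB$, which (using the $B/4<s_i<B/2$
bound) means exactly three items per block, recovering a valid $3$-partition.
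The main obstacle, and the delicate part of the write-up, is designing the
separator/boundary gadgets so that a \emph{single} path and \emph{objective}
utilities simultaneously rule out all three kinds of ``cheating'' (an EF agent
improving by a swap, an EFX$_0$ violation, and an unmet PROP-share) while
keeping balancedness exactly aligned with the target block size; I expect to
need a careful case analysis on the possible positions of the two endpoints of
each part along the path, and to choose $\util_\forall$ large enough that a
single missing neighbour already breaks the relevant inequality.
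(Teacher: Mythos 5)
There are two genuine gaps in your proposal, and each one is fatal to part of the claimed statement. First, the forcing mechanism. Your entire reduction hinges on making every fair balanced partition cut the path only at segment boundaries, and you propose to achieve this with separator gadgets plus a single tuned value~$\util_\forall$, reasoning that ``objectivity means every agent assigns the same value~$\util_\forall$ to its neighbours.'' That misreads the definition: \emph{objective} means each agent~$b$ carries one value~$\util_\forall(b)$ that all of \emph{$b$'s friends} assign to~$b$, and different agents may carry different values. More importantly, a construction in which all agents carry the same value is just a scaled binary instance, and for binary utilities on a path the paper itself proves (\Cref{thm:unweighted:EF:path:P}) that whenever every part has size at least two, a partition into contiguous subpaths is simultaneously EF and PROP: every agent keeps at least one neighbour, and no agent has more than one friend outside its own part. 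Since your instances have~$n = m\cdot cB$ agents and~$\parts = m$ parts of size~$cB \geq 2$, they are \emph{always} yes-instances, regardless of the \probName{3-Partition} instance; no arrangement of separator subpaths can prevent this, because separators built from the same uniform value are indistinguishable from item agents. The paper closes exactly this hole with the idea your sketch is missing: the objective values \emph{strictly increase along the path} ($+1$ inside a segment, $+2$ at each segment start for EF/EFX$_0$; powers of~$\parts$ for PROP), so every agent strictly prefers its successor to its predecessor, and splitting a segment creates envy that survives the removal of any agent (using part size~$\geq 4$) or an unmet PROP-share --- with no separators at all.

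Second, the parameterized claim. Reducing from \probName{3-Partition} with~$\parts = m$ cannot establish \Whness: in \probName{3-Partition} the number of items is exactly~$3m$, so the size of the universe is bounded by the parameter, and brute force over all partitions of~$3m$ numbers into~$m$ triples takes time depending on~$m$ alone (times a polynomial check). Hence \probName{3-Partition} parameterized by~$m$ is trivially \FPT, and a parameterized reduction from an \FPT problem proves nothing. This is precisely why the paper reduces instead from \probName{Unary Bin Packing} parameterized by the number of bins~$B$, which is \Wh by the result of Jansen et al.\ that the paper cites --- crucially, there a single bin may contain unboundedly many items, which is what makes the parameterization nontrivial. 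Strong \NPhness of \probName{3-Partition} would salvage your plain \NPhness claim \emph{if} the forcing gadget were repaired, and your observation that segment sizes (rather than weights) carry the numeric values is indeed the same device the paper uses; but note finally that the unary condition in the statement concerns the \emph{utilities}, not the item sizes, and the paper obtains it only for EF and EFX$_0$ (where the increasing values stay in~$\Oh{n}$), not for PROP (where they grow like~$\parts^i$) --- a distinction your proposal does not engage with.
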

\begin{proof}
	We prove the theorem by a reduction from the \probName{Unary Bin Packing} problem. In this problem, we are given a multi-set~$S=(s_1,\ldots,s_N)$ of positive integers, a number of bins~$B$, and a capacity~$c$ of every bin. Without loss of generality, we can assume that~$\sum_{i\in[N]} s_i = c\cdot B$. The goal is to find an allocation~$\alpha\colon S\to [B]$ such that for every~$i\in[B]$ it holds that~$\sum_{j\colon \alpha(s_j) = i} s_j = c$. The \probName{Unary Bin Packing} problem is known to be \NPc and \Wh when parameterized by the number of bins~$B$, even if all the integers are encoded in unary,~$c \geq 4$,~$B\geq3$, and every~$s_i\in S$ is at least two~\cite{JansenKMS2013}.
	
	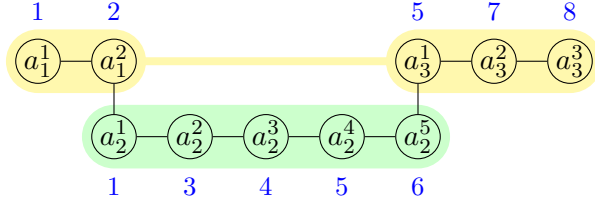
\begin{figure}
		\centering
		\begin{tikzpicture}[every node/.style={draw,circle,inner sep=0.25pt,},label distance=2mm]
			\node[label=90:\textcolor{blue}{\small$1$}] (a11) at (-3,0) {$a_1^1$};
			\node[label=90:\textcolor{blue}{\small$2$}] (a12) at (-2,0) {$a_1^2$};

            \begin{pgfonlayer}{background}
    			\fill[yellow!40] \convexpath{a11,a12}{12pt};
    		\end{pgfonlayer}
			
			\node[label=270:\textcolor{blue}{\small$1$}] (a21) at (-2,-1) {$a_2^1$};
			\node[label=270:\textcolor{blue}{\small$3$}] (a22) at (-1,-1) {$a_2^2$};
			\node[label=270:\textcolor{blue}{\small$4$}] (a23) at (0,-1) {$a_2^3$};
			\node[label=270:\textcolor{blue}{\small$5$}] (a24) at (1,-1) {$a_2^4$};
			\node[label=270:\textcolor{blue}{\small$6$}] (a25) at (2,-1) {$a_2^5$};

            \begin{pgfonlayer}{background}
    			\fill[green!40,opacity=0.5] \convexpath{a21,a25}{12pt};
    		\end{pgfonlayer}
   
			\node[label=90:\textcolor{blue}{\small$5$}] (a31) at (2,0) {$a_3^1$};
			\node[label=90:\textcolor{blue}{\small$7$}] (a32) at (3,0) {$a_3^2$};
			\node[label=90:\textcolor{blue}{\small$8$}] (a33) at (4,0) {$a_3^3$};

            \begin{pgfonlayer}{background}
    			\fill[yellow!40] \convexpath{a31,a33}{12pt};
                \fill[yellow!40] \convexpath{a12.east,a31.west}{1.5pt};
    		\end{pgfonlayer}
			
			\draw (a11) -- (a12) -- (a21) -- (a22) -- (a23) -- (a24) -- (a25) -- (a31) -- (a32) -- (a33);
		\end{tikzpicture}
		\caption{An illustration of the construction used to prove \Cref{thm:EF:NPh}. In this example, we assume an instance of \probName{Unary Bin Packing} with~$S=(2,5,3)$,~$B=2$, and~$c=5$. Next to each agent~$a_i^j$, we depict (in blue) the value of~$\util_\forall(a_i^j)$. With differently colored backgrounds, we highlight one possible balanced~$2$-partition for this instance.}
		\label{fig:EF:NPh}
	\end{figure}

	We start with the case of EF and EFX$_0$ partitions, and later, we show how to tweak the construction to prove the result also for PROP. Given an instance~$\mathcal{I}=(S,B,c)$, we create an equivalent instance~$\mathcal{J}$ of our problem as follows. For every element~$s_i\in S$, we introduce a set~$\agents_i$ of~$s_i$ agents~$a_i^1,\ldots,a_i^{s_i}$ and make each two consecutive agent friends; that is, we add an edge~$\{a_i^j,a_i^{j+1}\}$ for every~$j\in[s_i - 1]$. It is easy to see that these agents induce a path. To finalize the construction of~$G$, we add an edge~$\{ a_i^{s_i}, a_{i+1}^1 \}$ for every~$i\in[N-1]$. Observe that~$G$ is a single path with exactly~$c\cdot B$ agents. Next, we define the utilities. First, we set~$\util_\forall(a_1^i) = i$ for every~$i\in[s_1]$. For every~$i\in[2,N]$, we define the utilities recursively as follows
	\[
	\util_\forall(a_i^j) = \begin{cases}
		\util_\forall(a_{i-1}^{s_{i-1}-1}) & \text{if } j = 1 \text{,} \\
		\util_\forall(a_{i}^{j-1}) + 2 & \text{if } j = 2 \text{, and} \\
		\util_\forall(a_{i}^{j-1}) + 1 & \text{otherwise.}
	\end{cases}
	\]
    \Cref{fig:EF:NPh} illustrates our construction. Observe that in our construction, the utilities are of size~$\Oh{n}$; that is, we do not require utilities of exponential size. The basic idea behind the construction is that once an agent~$a_i^{s_i}$,~$i\in[N]$, is assigned to some part~$\pttn_j$ in a partition~$\pttn$, all~$a_i^j$,~$j\in[s_i - 1]$ are necessarily assigned to the same part, as otherwise,~$\pttn$ is not EF. To complete the construction, we set~$\parts = B$.

	For correctness, assume that~$\mathcal{I}$ is a \emph{yes}-instance and~$\alpha$ is a solution allocation. We create a partition~$\pttn=(\pttn_1,\ldots,\pttn_\parts)$ such that for every~$i\in[\parts]$, we set~$\pttn_i = \bigcup_{j\colon \alpha(s_j) = i} \{ a_j^1,\ldots,a_j^{s_j} \}$. It is clear that the partition is balanced, as~$\alpha$ is a solution allocation. To see that~$\pttn$ is EF, assume an agent~$a_i^j$ for some~$i\in[N]$ and~$j\in[s_i]$. By the definition of~$\pttn$,~$a_i^j$ is in the same part as all other agents of~$\agents_i$. If~$j \in [2,s_i - 1]$, then such an agent is clearly not envious, as it is in the same part as all of his friends. If~$j=1$, then~$a_i^{j+1}$ is in the same part as~$a_i^j$ and even if~$a_{i-1}^{s_{i-1}}$ is in a different part,~$a_i^j$ is not envious since~$\util_\forall(a_{i}^{j+1}) > \util_\forall(a_{i-1}^{s_{i-1}})$. Finally, if~$j=s_i$, then~$\util_\forall(a_{i}^{j-1}) = \util_\forall(a_{i+1}^1)$ and~$a_i^j$ is in the same part with at least~$a_i^{j-1}$ and therefore, is also not envious towards any agent.
 
	In the opposite direction, let~$\mathcal{J}$ be a \emph{yes}-instance and let~$\pttn=(\pttn_1,\ldots,\pttn_\parts)$ be an EFX$_0$ partition. First, we show that for every~$A_i$,~$i\in[N]$, it holds that~$\agents_i \subseteq \pttn_j$ for some~$j\in[\parts]$. In other words, all agents corresponding to a single item~$s_i\in S$ are in the same part in every EFX$_0$ solution. For the sake of contradiction, assume that it is not the case, and we have~$\agents_i$ such that~$\pttn(a) \not= \pttn(a')$ for some~$a,a'\in\agents_i$. Let~$j \in [2,s_i]$ be the smallest~$j$ such that~$\pttn(a_i^1) \not= \pttn(a_i^j)$. Since we selected~$j$ to be the smallest possible, it also holds that~$\pttn(a_i^{j-1}) \not= \pttn(a_i^{j})$. Focus now on the utility of the agent~$a_i^{j-1}$. By the definition of the utilities, we have~$\util_\forall(a) < \util_\forall(a_i^j)$ for every~$a\in \Fr(a_i^{j-1})$. However,~$a_i^{j-1}$ is not in the same part as~$a_i^j$, so it is envious towards any agent~$b$ in~$\pttn(a_i^j)\setminus\{a_i^j\}$ and, by the assumption that~$c \geq 4$, there is at least one more non-friend of~$a_i^j$ and thus, the envy cannot be eliminated by the removal of any agent. This contradicts that~$\pttn$ is EFX$_0$ and necessarily, for every~$i\in[N]$, it holds that~$A_i\subseteq \pttn_j$ for some~$j\in[\parts]$ in every EFX$_0$ partition. 

    We define an allocation~$\alpha$ so that for every~$s_i$, we set~$\alpha(s_i) = j$, where~$j$ is the number of part such that~$A_i\subseteq \pttn_j$. Clearly,~$\alpha$ is well-defined for all elements of~$S$. For the sake of contradiction, let there exist a bin~$i\in[k]$ such that~$\sum_{s_j\colon \alpha(s_j) = i} s_j \not= c$. Recall that for every~$\ell\in[N]$, it holds that all agents of~$\agents_\ell$ are in the same part. Therefore, the size of~$\pttn_i$ is not~$c$, which contradicts that~$\pttn$ is balanced. This is not possible as~$\pttn$ is a solution partition. Hence,~$\alpha$ is indeed a solution for~$\mathcal{I}$.
	
	It is easy to see that the construction can be performed in polynomial time. Moreover, we used~$\parts = B$, and therefore, our reduction is also a parameterized reduction. This finishes the \NPhness and \Whness with respect to~$\parts$ of deciding whether an EF or EFX$_0$ partition exists.
	
	For PROP, we tweak the utilities in the construction as follows. We set~$\util_\forall{a_1^i} = \parts^i$ for every~$i\in [s_1]$ and for every~$i\in[2,N]$, we set 
	\[
	\util_\forall(a_i^j) = \begin{cases}
		\util_\forall(a_{i-1}^{s_{i-1}-1}) & \text{if } j = 1 \text{,} \\
		\util_\forall(a_{i}^{j-1}) \cdot k^2 & \text{if } j = 2 \text{, and} \\
		\util_\forall(a_{i}^{j-1}) \cdot k & \text{otherwise.}
	\end{cases}
	\]
	Now,~$\operatorname{PROP-share}$ of every agent~$a$ is exactly~$k^\ell + k^{\ell-1}$ and the utilities of its friends all either~$k^\ell$ or~$k^{\ell+1}$. Therefore, each agent is necessarily in the same part as its friend it values more, which gives us the same property we exploited in the case of EF and EFX$_0$. Otherwise, the hardness proof is the same.
\end{proof}

If we move away from paths, then our problem becomes intractable even for~$\parts=2$, and the friendship graph is bipartite with vertex cover number 2.

\begin{theorem}\label{thm:EFX:NPh}
    It is \NPc to decide whether a PROP, EF, EFX$_0$, or EFX partition exists, even if~$\parts = 2$,~$G$ is a bipartite graph,~$\vc(G)=2$, and the utilities are objective. 
\end{theorem}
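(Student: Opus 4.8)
The plan is to prove Theorem~\ref{thm:EFX:NPh} by a reduction from a suitable NP-hard number-partitioning problem, most naturally \probName{Equitable Partition} (used already in the proof of \Cref{thm:MMS:NPh:verify}) or \probName{Partition}. Given a sequence~$S=(s_1,\ldots,s_{2n})$ of positive integers with total sum~$2\sigma$, I want to build an instance whose friendship graph is bipartite, has a vertex cover of size~$2$, and where a fair~$2$-partition exists exactly when~$S$ admits a balanced split. The vertex-cover-$2$ constraint forces almost all edges to run through just two special vertices, so the skeleton of the construction is clear: take two \emph{center} agents~$c_1,c_2$ forming the vertex cover, and attach a set of \emph{item-leaves} to them, one leaf per element~$s_i$, with~$\util_\forall$ of that leaf set to encode~$s_i$. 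Since the two centers must themselves be covered but we want the graph bipartite, I would make the two centers non-adjacent and place all leaves on the opposite side of the bipartition, with each leaf adjacent to one or both centers; this keeps~$\vc(G)=2$ and~$G$ bipartite simultaneously.

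The core idea is to make each center's utility from a part mirror the sum of~$s_i$ values of the leaves placed in that part, so that balancing the centers' envy/proportionality forces an equitable split of the numbers. First I would fix the balancedness: with two centers and~$2n$ leaves the total is~$2n+2$ agents, so each part of a balanced~$2$-partition has size~$n+1$, and I would argue that in any fair solution the two centers land in different parts (if they shared a part, one of them would be badly off, violating the fairness notion). With~$c_1$ in part~$\pttn_1$ and~$c_2$ in part~$\pttn_2$, each part then holds exactly~$n$ of the leaves. The key step is choosing the utilities so that the fairness condition on the centers reduces exactly to~$\sum_{i\in I}s_i = \sigma$: for instance, making each leaf adjacent to both centers with symmetric weight~$s_i$, so that~$\util_{c_1}(\pttn_1)$ equals the sum of~$s_i$ over leaves in~$\pttn_1$, and envy-freeness (or PROP, whose PROP-share is~$\sigma$) of~$c_1$ toward~$c_2$ forces this sum to be at least~$\sigma$, and symmetrically for~$c_2$, yielding equality. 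I must also verify the leaves are trivially satisfied, which holds because a leaf's only friends are the two centers and its utility is automatically at least its share once the split is balanced — though I will need to check that removing one agent (for EFX$_0$/EFX/EF1) does not accidentally make the reduction collapse.

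The main obstacle I anticipate is making a \emph{single} construction work uniformly for all four notions PROP, EF, EFX$_0$, and EFX at once, since these have subtly different ``up to one agent'' slack. For EF the centers' comparison is exact, but for the relaxed notions the adversary can delete one agent from the target part, so the gap~$\util_{c_1}(\pttn_1)$ versus~$\util_{c_1}(\pttn_2\setminus\{c_2\})$ must be robust against removing any single friend. I would handle this by scaling: let every leaf weight be a large multiple (say all~$s_i$ multiplied by a big factor~$M$, or by encoding the numbers in a high base) so that the imbalance created by an unequal split dominates the value of any single removable friend, making the EFX-type relaxation and the exact EF condition coincide at the threshold. The correctness argument then splits cleanly: the forward direction builds the obvious partition from an equitable split and checks each fairness notion directly, while the backward direction uses the scaling to argue that any fair partition must split the weight evenly, recovering the index set~$I$. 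The remaining bookkeeping — confirming bipartiteness, that~$\vc(G)=2$, objectivity of the utilities, and that the centers cannot both sit in one part — is routine once the weights are fixed.
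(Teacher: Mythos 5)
Your construction handles PROP and EF essentially the same way the paper does (two non-adjacent centers covering all edges, leaves encoding the numbers, balancedness plus separation of the centers forcing an equitable split), but there is a genuine gap in your treatment of EFX and EFX$_0$, and your proposed fix does not work. The EFX condition for center~$c_1$ in your instance reads~$\sum_{i\in I}s_i \geq \sum_{j\notin I}s_j - \min_{j\notin I}s_j$, because the removable friend of least value in the target part is itself one of the item-leaves. Multiplying every weight by a factor~$M$ (or re-encoding in a high base) multiplies \emph{both} the achievable imbalance \emph{and} the value of that least-valued removable friend by the same~$M$, so the inequality is invariant under scaling; uniform rescaling can never make "the imbalance dominates one friend's value" true if it was false before. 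Concretely, take~$S=(3,3,3,3,3,5)$, a \emph{no}-instance of \probName{Equitable Partition} (every $3$-element subset sums to~$9$ or~$11$). The split~$\{3,3,3\}$ versus~$\{3,3,5\}$ satisfies~$9\geq 11-3$ and~$11\geq 9-3$, so in your instance it yields a partition that is EFX and EFX$_0$ (the leaves are easily checked to be content), and it remains so after any scaling. Hence the backward direction of your reduction fails for these two notions: EFX/EFX$_0$ partitions can exist even when no equitable split does.

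The paper closes exactly this gap with two ingredients you are missing. First, it adds two \emph{guard-agents}~$g_1,g_2$ of tiny objective value~$1$ adjacent to both set-agents (the graph stays~$K_{2,2N+2}$, bipartite, with~$\vc=2$); now the least-valued friend of a set-agent in the opposite part is a guard, so the EFX/EFX$_0$ slack shrinks from~$\min_j s_j$ to an additive~$1$, forcing~$\bigl|\sum_{i\in I}s_i - \sum_{j\notin I}s_j\bigr|\leq 1$. Second, it reduces from the restricted \probName{Equitable Partition} variant of \cite{DeligkasEKS2024}, where every element is at least~$N^2$ and~$\max S - \min S < \min S/N^2$; this guarantees that two~$N$-element subset sums cannot differ by exactly~$1$ (turning the inequality above into equality) and that no~$(N\pm1)$-element subset sum is near~$B$ (needed to prove the guards must lie in different parts, a case your guard-free construction does not create but any corrected one must handle). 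Two smaller points: your "symmetric weight~$s_i$" utilities are symmetric but \emph{not objective} (the two leaves~$a_i,a_j$ value a center differently), whereas the theorem claims hardness for objective utilities — fixed by letting all leaves value the centers at the common amount~$B$, as the paper does; and plain \probName{Partition} cannot serve as the source problem, since balancedness forces exactly~$n$ leaves per part, so the cardinality-constrained \probName{Equitable Partition} is required.
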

\begin{proof}
	We again reduce from \textsc{Equitable Partition}. Recall that in this problem, we are given a sequence~$S=(s_1,\ldots,s_{2N})$ of integers, and our goal is to decide whether there exists~$I\subseteq[2N]$ of size~$N$ such that~$\sum_{i\in I} s_i = \sum_{i\in [2N]\setminus I} s_i = B$. 
    This problem is known to be \NPc even if each~$s_i\in S$ is at least~$N^2$ and~$\max S - \min S < \frac{\min S}{N^2}$ \cite{DeligkasEKS2024}. Without loss of generality, we assume that~$N \geq 10$, as otherwise, we can solve the instance trivially in polynomial time.
    
	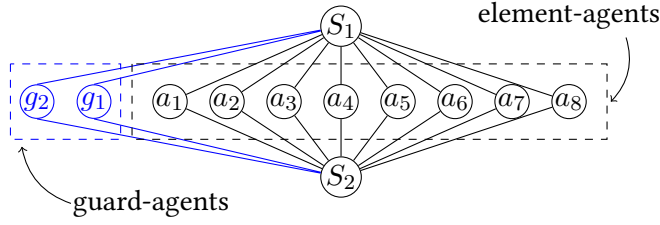
\begin{figure}
		\centering
		\begin{tikzpicture}
			\node[draw,circle,inner sep=0.2pt] (S1) at (0,0) {$S_1$};
			\node[draw,circle,inner sep=0.2pt] (S2) at (0,-2) {$S_2$};
			
			\foreach[count=\i] \x in {-2.25,-1.5,...,2.25,3}{
				\node[draw,circle,inner sep=0.2pt] (a\i) at (\x,-1) {$a_\i$};
				\draw (S1) -- (a\i) -- (S2);
			}
			\node[] (ellabel) at (3,0.15) {element-agents};
			\draw (3.75,-0.15) edge[bend left,->] (3.55,-1);
			
			\node[draw,circle,inner sep=0.2pt,blue] (g1) at (-3.25,-1) {$g_1$};
			\node[draw,circle,inner sep=0.2pt,blue] (g2) at (-4,-1) {$g_2$};
			\node[inner sep=0.5pt] (glabel) at (-2.5,-2.35) {guard-agents};
			\draw (glabel.west) edge[bend left,->] (-4.2,-1.6);
			\draw[blue] (S1) -- (g1.north) (g1.south) -- (S2) -- (g2.south) (g2.north) -- (S1);
			
			\draw[dashed,blue] (-4.35,-0.5) rectangle (-2.9,-1.5);
			\draw[dashed] (-2.75,-0.5) rectangle (3.5,-1.5);
			
		\end{tikzpicture}
		\caption{An illustration of the construction used in \Cref{thm:EFX:NPh}. For guard-agents, which are depicted in blue, we have~$\util_\forall(g_1) = \util_\forall(g_2) = 1$, and~$\util_\forall(S_1) = \util_\forall(S_2) = B$, where~$B$ is half of the sum of all integers given in an original instance of the \probName{Equitable Partition} problem. The element-agents are in one-to-one correspondence with integers of the \probName{Equitable Partition} problem, which is captured in its objective utility.}
		\label{fig:EFX:NPh}
	\end{figure}
	
	Given an instance~$S$ of \probName{Equitable Partition}, we create an equivalent instance~$\mathcal{J}$ of our problem as follows. First, we create an \emph{element-agent}~$a_i$ for every~$s_i\in S$ and we set~$\util_\forall(a_i) = s_i$. Next, we add two \emph{guard-agents}~$g_1$ and~$g_2$ such that~$\util_\forall(g_1) = \util_\forall(g_2) = 1$. In addition, we introduce two \emph{set-agents}~$S_1$ and~$S_2$ and define~$\util_\forall(S_1) = \util_\forall(S_2) = B$. The friendship graph is the complete bipartite graph~$K_{2,2N+2}$, where the smaller part consists of the set-agents; see \Cref{fig:EFX:NPh} for an illustration of the friendship graph. To finalize the construction, we set~$\parts = 2$.

    For correctness, let~$S$ be a \emph{yes}-instance and let~$I$ be a solution. We claim that partition~$\pttn=(\pttn_1,\pttn_2)$, where~$\pttn_1 = \{S_1,g_1\} \cup \{ a_i \mid i \in I \}$ and~$\pttn_2 = \{S_2, g_2\} \cup \{ a_i \mid i \in [2N]\setminus I \}$, is PROP. First, the utility of element-agents and the guard-agents with respect to~$\pttn$ is~$B$, and their PROP-share is also~$B$. Therefore, the partition is PROP for them. The utility of~$S_1$ is~$1 + \sum_{i\in I} \util_\forall(a_i) = 1 + \sum_{i \in I} s_i = 1 + \sum_{i\in [2N]\setminus I} s_i/2  = 1 + B$ and so is its PROP-share. By similar arguments, we obtain that~$\pttn$ is PROP also for~$S_2$. By \Cref{thm:EF:PROP:equal:ifTwoParts}, we have that~$\pttn$ is also EF, and therefore, it is also EFX$_0$ and EFX.
	
	In the opposite direction, let~$\mathcal{J}$ be a \emph{yes}-instance for our problem. Before we focus on a specific fairness notion, we show several properties that hold for all of them. First, we show that the set-agents are necessarily in different parts.
	
	\begin{claim}\label{Claim1}
		In every fair partition~$\pttn$, we have~${\pttn(S_1) \not= \pttn(S_2)}$.
	\end{claim}
	\begin{claimproof}
        For the sake of contradiction, assume that~$S_1$ and~$S_2$ are in the same part, without loss of generality,~$\pttn_1$. 
        Then, due to balancedness, there necessarily exists a pair of distinct agents~$a\in \agents\setminus\{S_1,S_2\}$ such that~$a \in \pttn_2$. For these agents,~$\Fr(a) = \{S_1,S_2\}$. Therefore,~$\util_a(\pttn) = 0 < \operatorname{PROP-share}(a) = B$. Hence,~$\pttn$ is not PROP. Moreover, since~$n \geq 10$, there exists an agent~$b\in\agents\setminus\{S_1,S_2\}$ such that~$b\in\pttn_1$. If~$a$ replaces~$b$ in~$\pttn_1$, the utility is~$\util_a(\pttn_1\setminus\{b\}) = \util_a(S_1) + \util_a(S_2) = 2B$. This envy clearly cannot be eliminated by the removal of any friend. This contradicts that~$\pttn$ is EFX. Therefore,~$S_1$ and~$S_2$ are necessarily in different parts.
	\end{claimproof}
	
	So, without loss of generality, for the rest of the proof, we assume that~$S_1 \in \pttn_1$ and~$S_2 \in \pttn_2$. Now, we show a similar property also for the guard-agents.

	\begin{claim}\label{Claim2}
		In every fair partition~$\pttn$, we have~$\pttn(g_1) \not= \pttn(g_2)$.
	\end{claim}
	\begin{claimproof}
       We will prove the claim by contradiction. So, for the sake of contradiction assume that there is a fair allocation where both guard-agents~$g_1$ and~$g_2$ belong to~$\pi_1$. 
    Then, due to balancedness and due to \Cref{Claim1}~$\pi_1$ contains~$N-1$ element-agents. Thus, we have that  
    \begin{align*}
    \util_{S_1}(\pi_1) 
        & \leq (N-1) \max S + 2\\
        & < (N-1) \min S + \frac{N-1}{N^2} \cdot \min S + 2 \\
        & \leq N^3 - N^2 + N + 1\\
        & < N^3
    \end{align*}
    where the second inequality follows from our assumption for the \probName{Equitable Partition} instance, the third one from the assumption that~$\min S \geq N^2$, and the last one from the assumption that~$N \geq 10$.

    Observe now the following. Firstly,~$\pi_2$ contains~$N+1$ element agents and agent~$S_2$.
    Furthermore, if agent~$S_1$ moves to~$\pi_2$ by (a) exchanging its position with set-agent~$S_2$ and (b) excluding some element-agent~$a_i$ from~$\pi_2$, then~$\util_{S_1}(\pi_2\setminus \{S_2, a_i\}) \geq N \min S  > N^3 > \util_{S_1}(\pi_1)$, since we have assumed that~$\min S \geq N^2$. Hence, the assumed partition cannot be EFX, thus it cannot be EFX$_0$, or EF.

    For PROP now, observe that~$\operatorname{PROP-share}$ for both set-agents~$S_1$ and~$S_2$ is~$B+1$. 
        From \Cref{Claim1} both guard-agents~$g_1,g_2$ belong to the same group with~$S_1$ and since there are~$2N+4$ agents, part~$\pi_1$ must contain~$N-1$ element-agents. Consequently~$\util_{S_1}(\pi_1) = 2+\sum_{a_i\in\pi_1}\util_{S_1}(a_i)$ and
       ~$\util_{S_2}(\pi) = \sum_{a_j\in\pi_2}\util_{S_2}(a_j)$. 
        If~$\pi$ is PROP then it must be the case that ~$2+\sum_{a_i\in\pi_1}\util_{S_1}(a_i)\geq 1+B$ and~$\sum_{a_j\in\pi_2}\util_{S_2}(a_j)\geq 1+B$. 
        Thus,~$\sum_{a_i\in \pi_1}s_i = B-1$ and~$\sum_{a_i\in \pi_2}s_i = B+1$. 
        This is a contradiction, because from \cite{DeligkasEKS2024} no sum of any~$N-1$ elements of~$S$ can be close to~$B$. Thus if partition~$\pi$ is PROP then~$\pi(g_1)\neq \pi(g_2)$.
\end{claimproof}
	
	As the final auxiliary claim, we show that the utility of the set-agents is exactly the same in every fair solution partition.
	
	\begin{claim}\label{thm:setAgents:sameUtils}
		In every fair  partition~$\pttn$, we have~$\util_{S_1}(\pttn) = \util_{S_2}(\pttn) = 1 + B$.
	\end{claim}
	\begin{claimproof}
    Given the previous claims, assume without loss of generality that agents~$S_1$ and~$g_1$ are part of~$\pi_1$ and that agents~$S_2$ and~$g_2$ belong in~$\pi_2$. Let the remaining~$2N$ element-agents be divided into~$\pi_1$ and~$\pi_2$, such that each group contains~$N$ agents. As we computed,~$\operatorname{PROP-share}(S_1) = \operatorname{PROP-share}(S_2) = 1+B$. Thus if~$\pi$ is PROP then~$\util_{S_1}(\pi_1) \geq 1+B$ and~$\util_{S_2}(\pi_2)\geq 1+B$. Since~$\util_{S_1}(\pi) = 1+\sum_{a_i\in \pi_1}\util_{S_1}(a_i) = 1+\sum_{i\in I}s_i$ and~$\util_{S_2}(\pi) = 1+\sum_{a_j\in \pi_2}\util_{S_2}(a_j) = 1+ \sum_{j\in[2N]/I}s_j$, it must hold that~$\sum_{i\in I}s_i \geq B$ and~$\sum_{j\in[2N]/I} s_j \geq B$. From this fact it is clear that~$\sum_{a_i\in\pi_1}\util_{S_1}(a_i) = \sum_{a_j\in\pi_2}\util_{S_2}(a_j) = B$, which proves that~$\util_{S_1}(\pi_1) = \util_{S_2}(\pi_2) = 1 + B$
     
    Next we prove the claim for EFX and thus it must hold for the stronger notions of EFX$_0$ and EF.
    So, fix agent~$S_1$ and observe that by removing any element-agent~$a_j\in\pi_2$ we get that~$\util_{S_1}(\pi_2\setminus\{S_2\}) - \util_{S_1}(a_k) = 1+\sum_{j\in[2N]/I}s_j-s_k$. The valuation of agent~$S_1$ for part~$\pi_1$ is~$\util_{S_1}(\pi_1) = 1 + \sum_{i\in I}s_i$, thus under the EFX definition it must hold that~$\sum_{i\in I}s_i \geq \sum_{k\in[2N]/I} s_k-s_j$. As we argued earlier no sum of~$N-1$ elements can be close to~$B$, thus the inequality must trivially hold after removing any element-agent from~$\pi_2$. The case for set-agent~$S_2$ is symmetrical. By removing the guard-agent~$g_2$ from~$\pi_2$, we get~$\util_{S_1}(\pi_2\setminus\{S_2\})-\util_{S_1}(g_2) = \sum_{a_k\in\pi_2}\util_{S_1}(a_k) = 1 + \sum_{k\in[2N]/I}s_k$. Thus from the definition of EFX it must hold that~$1 + \sum_{i\in I}s_i\geq \sum_{k\in[2N]/I}s_k$. Symmetrically removing the guard-agent~$g_1$ from~$\pi_1$ and applying the EFX definition on set-agent~$S_2$ it must also hold that~$1+ \sum_{k\in[2N]/I}s_k\geq \sum_{i\in I}s_i$. Combining these facts and after calculations we get that~$|\sum_{i\in I}s_i-\sum_{k\in[2N]/I}s_k|\leq 1$. Since by \cite{DeligkasEKS2024} it cannot be the case that the two terms differ by 1, we get that~$\sum_{k\in[2N]/I}s_k = \sum_{i\in I}s_i = B$. Consequently it must be the case that if~$\pi$ is EFX then~$\util_{S_1}(\pi) = \util_{S_2}(\pi) = 1+B$.
   \end{claimproof}
	
	Again, without loss of generality, we can assume that~$g_1 \in \pttn_1$ and~$g_2 \in \pttn_2$. Now, we create a solution for~$\mathcal{I}$. We set~$I = \{ i \in [2N] \mid a_i \in \pttn_1 \}$, and we show that it is indeed a solution. For the sake of contradiction, suppose that this is not the case, that is,~$\sum_{i\in I} s_i \not= \sum_{i\in [2N]\setminus I} s_i$ and, without loss of generality~$\sum_{i\in I} s_i < \sum_{i\in [2N]\setminus I} s_i$. Then the utility of the agent~$S_1$ in~$\pttn$ is~$1 + \sum_{i\in I} \util_{S_1}(a_i) = 1 + \sum_{i\in [2N]\setminus I} s_i < 1 + B/2$, which contradicts \Cref{thm:setAgents:sameUtils}. Therefore, such a situation cannot occur,~$I$ is a solution for~$\mathcal{I}$, and~$\mathcal{I}$ is thus also a \emph{yes}-instance.
	
	The construction can be clearly done in polynomial time. It is also easy to see that the friendship graph is bipartite with~$A=\{S_1,S_2\}$ and~$B = V(G)\setminus\{S_1,S_2\}$ and hence if we remove~$A$ from~$G$, we obtain an edgeless graph. Consequently, the vertex cover number of this graph is exactly~$2$, as was promised.
\end{proof}

While \Cref{thm:MMS:NPh:verify} shows that we cannot compute in polynomial time the MMS-share of an agent, it does not exclude the existence of polynomial-time algorithms for finding an MMS allocation. In fact, for the instance constructed in \Cref{thm:MMS:NPh:verify}, where the friendship graph was a star, we could efficiently compute a solution; see \Cref{thm:vc1graphs:polytime} for a formal proof. 
However, a polynomial-time algorithm is unlikely to exist in general.

\begin{theorem}\label{thm:MMS:nopolytime}
	Unless~$\P=\NP$, there is no polynomial-time algorithm that computes an MMS allocation, if it exists, even if~$G$ is a bipartite graph and~$\vc(G) = 2$.
\end{theorem}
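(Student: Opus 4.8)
The plan is to reduce from the promise version of \probName{Equitable Partition} used in \Cref{thm:EFX:NPh}, where each $s_i \geq N^2$ and $\max S - \min S < \min S / N^2$, which remains \NPh. I would reuse verbatim the friendship graph and utilities of \Cref{thm:EFX:NPh}: the complete bipartite graph $K_{2,2N+2}$ with two set-agents $S_1,S_2$ of objective value $B$, two guard-agents $g_1,g_2$ of value $1$, and $2N$ element-agents $a_i$ of value $s_i$ (see \Cref{fig:EFX:NPh}). This instance is bipartite with $\vc = 2$, as required. The whole argument then rests on the observation that a hypothetical polynomial-time algorithm computing an MMS allocation whenever one exists could be turned into a polynomial-time decision procedure for \probName{Equitable Partition}.

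First I would compute the relevant MMS-shares. Each element- and guard-agent has exactly the two set-agents as friends, each valued $B$, so its total value is $2B$ and its MMS-share is $B$ (attained by any balanced partition separating $S_1$ and $S_2$, and optimal since the share can never exceed half the total). Consequently, in every MMS allocation $S_1$ and $S_2$ must lie in different parts: if they shared a part, balancedness would force some element- or guard-agent into the other part with utility $0 < B$, violating its share (mirroring \Cref{Claim1}). Given separation, each of the $2N+2$ small-side agents sits with exactly one set-agent, so all of them meet their share with equality, and $\util_{S_1}(\pttn) + \util_{S_2}(\pttn) = 2B + 2$.

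Next I would pin down the set-agents. By symmetry $S_1$ and $S_2$ have equal MMS-share, whose common value is the best balanced bipartition of the multiset $\{s_1,\dots,s_{2N},1,1\}$ into two blocks of $N+1$ summands. Here is the dichotomy I would establish: the value $B+1$ is attainable iff $S$ is a \emph{yes}-instance (place one guard in each block and split the element values into two halves summing to $B$); the gap assumption forbids the only competing equal split, where both guards sit in one block, since its element part would then need $N\pm1$ summands totalling $B\mp1$, impossible as any $N-1$ or $N+1$ elements are bounded away from $B$. Hence on \emph{yes}-instances the set-agents' share is $B+1$, so \emph{every} MMS allocation forces $\util_{S_1}(\pttn) = \util_{S_2}(\pttn) = B+1$ and therefore a perfectly balanced element split; on \emph{no}-instances the share is strictly below $B+1$, yet an MMS allocation still exists (take the share-optimal balanced bipartition, which by symmetry simultaneously meets both set-agents' shares, all other agents being trivially satisfied). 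The decisive point is that an MMS allocation \emph{always} exists in this construction, so the assumed algorithm always returns a partition $\pttn$.

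To finish, I would feed the instance to the hypothetical algorithm and test in polynomial time whether $\util_{S_1}(\pttn) = B+1$; by the dichotomy this equality holds exactly on \emph{yes}-instances, and when it holds the set $I = \{\, i \mid a_i \in \pttn(S_1)\,\}$ is the sought equitable partition (the gap assumption guarantees that $\util_{S_1}(\pttn) = B+1$ can only arise from the one-guard-per-part pattern). This would decide \probName{Equitable Partition} in polynomial time, contradicting $\P \neq \NP$. I expect the main obstacle to be the second and third steps taken together: one must show simultaneously that an MMS allocation \emph{always} exists (so the algorithm cannot legitimately answer ``none'') and yet that the structure of whatever it returns betrays the answer, which hinges on computing the set-agents' MMS-share exactly and on invoking the \probName{Equitable Partition} gap to rule out the degenerate two-guards-in-one-block splits.
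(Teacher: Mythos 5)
Your overall strategy matches the paper's: reduce from the promise \probName{Equitable Partition}, build a complete-bipartite instance with two set-agents, and turn the hypothetical algorithm into a decider by inspecting $\util_{S_1}$ of whatever it returns. But the paper's construction deletes the guard-agents from \Cref{thm:EFX:NPh}, and keeping them, as you do, breaks two of your key claims. The MMS-share of a set-agent is a maximum over \emph{all} balanced partitions, including those that place $S_1$ and $S_2$ in the \emph{same} block ($S_1$ and $S_2$ are not adjacent, so co-locating them costs nothing in the thought experiment); your formula ``best bipartition of $\{s_1,\dots,s_{2N},1,1\}$ into two blocks of $N+1$ summands'' silently assumes the $(N+1,N+1)$ pattern and ignores the $(N+2,N)$ splits of the valued items that co-location permits. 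Concretely, putting both guards together with $N$ elements summing to $B-1$ into one block, and the remaining $N$ elements (summing to $B+1$) together with $S_1,S_2$ into the other, gives both blocks value exactly $B+1$. The promise does not exclude no-instances admitting a $B-1/B+1$ split: take $N=10$, nineteen elements of value $1000$ and one of value $1002$ (so $B=10001$); this satisfies $s_i\geq N^2$ and $\max S-\min S<\min S/N^2$, has no perfect split, yet splits as $10000/10002$. (The ``cannot differ by $1$'' property the paper invokes is mere parity and does not rule out a difference of $2$.) On such instances the set-agents' MMS-share equals $B+1$, so your dichotomy ``share $<B+1$ on no-instances'' is false; worse, enumerating the three possible compositions of $S_1$'s part ($11$ elements, or $10$ elements plus one guard, or $9$ elements plus two guards) shows that no balanced allocation gives both set-agents utility $B+1$, so \emph{no MMS allocation exists at all} --- exactly the claim you single out as ``the decisive point.''

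The damage is contained, though. Your final test ($\util_{S_1}(\pttn)=B+1$ iff \emph{yes}) remains sound: on yes-instances MMS allocations do exist and every one of them carries the signature, while on no-instances no balanced partition whatsoever realizes $\util_{S_1}=B+1$ (the gap argument you sketch in your last paragraph proves this directly, without any appeal to shares or existence). So the reduction can be repaired by discarding the false share-dichotomy and existence claims and arguing the two directions of the test directly, treating any output other than a signature-bearing partition --- including ``no MMS allocation exists'' --- as a \emph{no} answer. The cleaner fix, and the one the paper adopts, is to drop the guard-agents entirely: then the valued items can only split $N/N$ or $(N+1)/(N-1)$, the set-agents' share equals the value of the most even element split (hence $B$ iff \emph{yes}), that most even split with the set-agents separated is always a genuine MMS allocation, and both the universal-existence claim and the dichotomy you wanted become true.
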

\begin{proof}
	Let~$S$ be an instance of the \probName{Equitable Partition} problem. For such~$S$, we create the same instance of our problem as in \Cref{fig:EFX:NPh}, just with guard-agents removed. First, observe that for every element-agent~$a_i\in V(G)$ we have~$\operatorname{MMS-share}(a_i) = B$, as each element-agent is friend with both set-agents. Consequently, in every MMS partition, each element-agent is in a part with at least one set-agent, and therefore, the set-agents are not in the same part.
	
	First, we show that for this instance an MMS allocation always exists. This is not too hard to see. Observe that a partition that puts element-agents as evenly as possible is indeed an MMS allocation; this way, the ``regret'' of a set-agent is minimized.
	
	Now, suppose that there is a polynomial-time algorithm~$\mathcal{A}$ that finds an MMS allocation.  Then, this algorithm would have to split the element-agents as evenly as possible. 
    Thus, if the~$\operatorname{MMS-share}$ of set-agent~$S_1$ is~$B$, then there exists an  equi-partition~$(\pi_1, \pi_2)$ of the element agents such that~$u_{S_1}(\pi_1) = u_{S_1}(\pi_2) = B$; this is due to the fact that the elements of \probName{Equitable Partition} sum up to~$2B$. Thus, the initial instance of \probName{Equitable Partition} instance is a {\em yes}-instance.
    Conversely, if the~$\operatorname{MMS-share}$ of set-agent~$S_1$ is strictly less than~$B$, then there is no balanced partition~$(\pi_1, \pi_2)$ of the element agents such that~$u_{S_1}(\pi_1) = u_{S_1}(\pi_2) = B$; if there was such a partition, then the~$\operatorname{MMS-share}$ of~$S_1$ would be~$B$. 
    Hence, in that case, we can conclude that the initial instance of \probName{Equitable Partition} instance is a {\em no}-instance.
    
	So, overall, if there was a polynomial-time algorithm that could compute an MMS allocation for the created instance, then it could solve the initial instance of \probName{Equitable Partition}. Our theorem follows.
\end{proof}

Next, we establish hardness for EF1 as well by slightly modifying the construction from \Cref{fig:EFX:NPh}.

\begin{theorem}\label{thm:EF1_NPh_vc2}
    It is \NPc to decide whether an EF1 partition exists, even if~$\vc(G) = 2$ and~$\parts=2$. 
\end{theorem}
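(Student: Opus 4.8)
The plan is to prove \Cref{thm:EF1_NPh_vc2} by reducing from \probName{Equitable Partition}, reusing the gadget from \Cref{fig:EFX:NPh} essentially verbatim but tuning the guard-agents to exploit the specific slack that EF1 permits. Recall that EF1 is the weakest of the envy-based notions, so the construction from \Cref{thm:EFX:NPh} cannot be used as-is: in that proof \Cref{Claim1,Claim2} crucially relied on envy that \emph{cannot be eliminated by removing a single agent}, whereas under EF1 removing the single most-valued friend is always allowed. Hence the first task is to make the forcing of the set-agents into different parts, and the near-exact split of the element-agents, robust against the removal of one agent from the target part.

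First I would keep the two set-agents~$S_1,S_2$ with~$\util_\forall(S_1)=\util_\forall(S_2)=B$ and the~$2N$ element-agents~$a_i$ with~$\util_\forall(a_i)=s_i$, retaining the complete bipartite graph~$K_{2,2N+2}$ so that~$\vc(G)=2$ and~$\parts=2$. The modification is in the guard-agents: rather than two guard-agents of value~$1$, I would give the guards large values (or add more of them) so that an agent sitting opposite both set-agents in a size-balanced part, who is forced to have utility~$0$ from its friends~$\{S_1,S_2\}$, perceives an envy gap of at least~$2B$ towards the part containing both set-agents. Since the agent's friendship set is exactly~$\{S_1,S_2\}$, removing one agent from the target part can strip away at most one set-agent worth~$B$, leaving residual envy~$\geq B>0$; this re-establishes the analogue of \Cref{Claim1} under EF1. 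The analogue of \Cref{Claim2} and of \Cref{thm:setAgents:sameUtils} then follow the same template: I would argue that if the split of element-agents deviates from an exact halving, then the disadvantaged set-agent can move to the other part, and because its only friends are the~$N$ (or~$N+1$) element-agents plus one guard there, the removal of the single best friend still leaves strictly positive envy, using the gap property~$\max S-\min S<\min S/N^2$ from \cite{DeligkasEKS2024} to guarantee that no~$(N\pm1)$-element subset sum lands within the EF1 tolerance of~$B$.

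The two directions of correctness are then routine. For the forward direction, given a solution~$I$ to \probName{Equitable Partition} I would place~$S_1$ with~$\{a_i\mid i\in I\}$ and one guard, and~$S_2$ with the rest, verify balancedness, and check EF1 for each agent type: element-agents and guards are each with a set-agent and cannot strictly envy after a single removal, while each set-agent values its part at exactly~$1+B$ (or the appropriate constant) so that after removing the single most-valued friend from the opposite part the inequality holds. For the reverse direction, \Cref{Claim1} (EF1 version) forces~$S_1,S_2$ apart, the guard-argument forces the element-agents into an~$N$/$N$ split, and the set-agent EF1 inequality forces the two subset sums to lie within~$1$ of each other, whence the \cite{DeligkasEKS2024} gap property collapses this to an exact equipartition, recovering a solution~$I$.

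The main obstacle I anticipate is calibrating the guard-agent utilities so that \emph{every} potential deviation leaves strictly positive residual envy \emph{even after the one permitted removal}, since EF1 gives the adversary one free deletion and this is exactly where the stronger notions had room to spare. Concretely, the bound driving \Cref{Claim2} must be re-derived: one removal of the best friend from the~$(N+1)$-element side can erase up to~$\max S$ of value, so I need the envy gap to exceed~$\max S$ by margin, which again leans on~$\min S\geq N^2$ and the tight spread~$\max S-\min S<\min S/N^2$ to ensure that the ``one-item slack'' in EF1 is strictly smaller than the~$B$-level discrepancies created by an unequal split. Getting these inequalities to simultaneously certify forcing in the reverse direction while remaining loose enough to admit the honest partition in the forward direction is the delicate quantitative heart of the argument; everything else mirrors \Cref{thm:EFX:NPh}.
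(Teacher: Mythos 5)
There is a genuine gap, and it sits exactly at the point you flag as the ``delicate quantitative heart'' of the argument: your construction cannot force the element sums to be equal under EF1, because EF1 partitions in your instance \emph{always} exist, regardless of whether the \probName{Equitable Partition} instance is a \emph{yes}-instance. Concretely, keep your notation and consider any balanced partition with one set-agent, one guard, and $N$ element-agents per part, with element sums $\Sigma_1,\Sigma_2$. When $S_1$ tests envy against any $b\in\pi_2$, the EF1 condition lets $S_1$ remove the \emph{most valuable} remaining agent of $\pi_2$; since $\pi_2$ contains an element-agent worth at least $\min S\geq N^2$ (or a guard, if you make guards large), the one free removal erases at least $\min(w,\min S)$ of value, whereas the actual discrepancy satisfies $\Sigma_2-\Sigma_1\leq N(\max S-\min S)<\min S/N$ by the very spread condition you invoke. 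So the tight spread works \emph{against} you here: every such partition is EF1, the reduction cannot distinguish \emph{yes}- from \emph{no}-instances, and no tuning of guard values can repair this --- making guards larger only hands the envious set-agent a more valuable agent to delete, and making them smaller leaves the elements themselves as large removable items. (Your EF1 analogues of \Cref{Claim1} and \Cref{Claim2} are fine, but for those the envy gap is of order $B$ or $2\min S$, respectively, which dwarfs a single removal; the analogue of \Cref{thm:setAgents:sameUtils} is where the gap of order $\min S/N$ is fatally small.)

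The paper closes this gap with a different gadget that your proposal explicitly rules out by insisting on bipartiteness: it adds an edge $\{S_1,S_2\}$ of weight $B$ to the graph of \Cref{fig:EFX:NPh}, keeping the guards at value $1$. The vertex cover $\{S_1,S_2\}$ still covers the new edge, so $\vc(G)=2$ is preserved (note that \Cref{thm:EF1_NPh_vc2}, unlike \Cref{thm:EFX:NPh}, does not claim bipartiteness --- this is why). The point of the edge is that when $S_1$ contemplates replacing the \emph{guard} $g_2$, the target part $\pi_2\setminus\{g_2\}$ contains $S_2$, now worth $B\gg\max S$ to $S_1$; the single EF1 removal is therefore spent on $S_2$, and what remains to be compared is the \emph{full} element sum: $1+\Sigma_1\geq\Sigma_2$, and symmetrically $1+\Sigma_2\geq\Sigma_1$. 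The guards' value of $1$ supplies exactly the $\pm 1$ tolerance, and the property from \cite{DeligkasEKS2024} that the two $N$-element sums cannot differ by exactly $1$ collapses this to $\Sigma_1=\Sigma_2=B$. In short, the missing idea is a high-value agent planted in the opposite part whose sole purpose is to absorb the one free deletion that EF1 grants; without it, the forcing direction of your reduction fails.
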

\begin{proof}
    Again, we reduce from \textsc{Equitable Partition}. We employ the same construction of \Cref{thm:EFX:NPh} by adding an additional edge in the friendship graph of \Cref{fig:EFX:NPh} among the set-agents~$S_1$ and~$S_2$ with~$\util_{S_1}(S_2) = \util_{S_1}(S_2) = B$. Let S be a \emph{yes}-instance of \probName{Equitable Partition} and assume that~$I$ is a solution for~$S$. We will prove that partition~$\pi = (\pi_1, \pi_2)$ where~$\pi_1 = \{S_1,g_1\}\cup \{a_i|i\in I\}$ and~$\pi_2 = \{S_2,g_2\}\cup \{a_j|j\in [2N]/I\}$ is EF1. Towards that, we will prove that the EF1 condition holds for any pair of agents.

    \begin{itemize}
    \item For any element-agent~$a_i\in \pi_1$, it holds that~$\util_{a_i}(\pi_1) = \util_{a_i}(S_1) = B$, while~$\util_{a_i}(\pi_2/\{S_2\}) = 0$,~$\util_{a_i}(\pi_2/\{g_2\}) = B$ and for every element-agent~$a_j\in\pi_2$~$\util_{a_i}(\pi_2/\{a_j\}) = B$. We see that any element-agent is obviously EF1 with respect to~$S_2$. For the other pairs removing agent~$S_2$ from~$\pi_2$ suffices to satisfy the EF1 condition. Thus the EF1 condition is satisfied for the element-agents in~$\pi_1$. 
    
    \item For the guard-agent~$g_1\in\pi_1$ it holds that~$\util_{g_1}(\pi_1) = B$ and~$\util_{g_1}(\pi_2/\{g_2\}) = B$, thus removing agent~$S_2$ from~$\pi_2$ suffice to satisfy the EF1 condition. Additionally~$\util_{g_1}(\pi_2/\{S_2\}) = 0$, meaning that the EF1 condition is trivially satisfied. Finally for any element-agent~$a_j\in\pi_2$,~$\util_{g_1}(\pi_2/\{a_j\}) = B$, where again removing~$S_2$ guarantees EF1.
    
    \item For agent~$S_1\in\pi_1$ it holds that~$\util_{S_1}(\pi_1) = 1+\sum_{i\in I}s_i = 1+B$, while~$\util_{S_1}(\pi_2/\{S_2\}) = 1+B$, thus removing any agent from~$\pi_2$ will satisfy EF1. For any element-agent~$a_j\in\pi_2$,~$\util_{S_1}(\pi_2/\{a_j\}) = 1 + 2B - s_j$ and for the guard-agent~$g_2$,~$\util_{S_1}(\pi_2/\{g_2\}) = 2B$. In both cases removing~$S_2$ guarantees EF1.
    \end{itemize}
    The case from the point of view of partition~$\pi_2$ is symmetrical.
    
    For the opposite direction, let partition~$\pi = (\pi_1, \pi_2)$ be EF1. Note that \Cref{Claim1,Claim2,thm:setAgents:sameUtils} also hold for EF1 partitions since they hold for EFX allocations. Without loss of generality let agents~$S_1$ and~$g_1$ belong in~$\pi_1$ and agents~$S_2$ and~$g_2$ belong in~$\pi_2$ and let~$I = \{i\in [2N] \mid a_i\in \pi_1\}$. We will show that~$\sum_{i\in I} s_i = \sum_{j\in[2N]/I} s_j$. Since~$\pi$ is EF1, it must hold that~$\util_{S_1}(\pi_1) \geq \util_{S_1}(\pi_2/\{S_2\})-\util_{S_1}(g_2)$. This holds because agent~$S_1$'s valuation for~$g_2$ is 1 and~$S_1$ does not value any other element in~$\pi_2$ higher than 1. Thus~$\util_{S_1}(\pi_1) \geq \util_{S_1}(\pi_2/\{S_2\})-\util_{S_1}(g_2)$, which means that~$1+ \sum_{i\in I}s_i \geq \sum_{j\in [2N]/I}s_j$. Symmetrically and by following the same reasoning we get that~$1 + \sum_{j\in [2N]/I} s_j \geq \sum_{i\in I} s_i$. Doing the calculation and following similar reasoning as in \Cref{thm:setAgents:sameUtils} we get that~$|\sum_{i\in I}s_i - \sum_{j\in[2N]/I}s_j |\leq 1$, where applying the same reasoning as in the proof of \Cref{thm:setAgents:sameUtils} it must be the case that~$\sum_{i\in I}s_i = \sum_{j\in[2N]/I}s_j$, which constitutes~$I$ a solution for instance~$S$.
\end{proof}

Maybe surprisingly, our above hardness results are tight. Specifically, if we restrict the friendship graph even more, we show that a fair partition is guaranteed to exist for some of the studied fairness notions. For those without such a guarantee, we show a simple condition that allows us to decide such instances in linear time. The result also contrasts \Cref{thm:MMS:NPh:verify} in the sense that under this restriction, verifying whether a given partition is MMS is computationally hard but, at the same time, it is easy to find one.

\begin{proposition}\label{thm:vc1graphs:polytime}
	If the friendship graph is a graph of vertex cover number~$1$, a~$\phi$ partition is guaranteed to exist and can be found in linear time for every~$\phi\in\{\text{EFX},\text{EF$1$},\text{MMS}\}$. Under the same restriction, the existence of EF, EFX$_0$, or PROP partition can be decided in linear time.
\end{proposition}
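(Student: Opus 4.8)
The plan is to exploit that a friendship graph with $\vc(G)=1$ is, apart from its (utility-irrelevant) isolated vertices, a single star: there is one agent $c$ forming the vertex cover (the center), its neighbours $\Fr(c)=\{\ell_1,\dots,\ell_{\deg(c)}\}$ (the leaves) satisfy $\Fr(\ell_i)=\{c\}$, and every remaining agent has no friend. The first step is the reduction that \emph{only the center can be problematic}. An isolated agent has utility $0$ on every set, so it never envies and its $\operatorname{PROP-share}$ and $\operatorname{MMS-share}$ are $0$. A leaf $\ell_i$ values only $c$, so $\util_{\ell_i}(S)$ equals $\util_{\ell_i}(c)$ if $c\in S$ and $0$ otherwise; hence a leaf placed with $c$ is maximally happy, while a leaf separated from $c$ cancels its envy towards any $b$ in $c$'s part by deleting its unique friend $c$ from that part, which is exactly the EFX (hence EF1) condition, and since $\parts\ge 2$ always leaves a $c$-free part, every leaf has $\operatorname{MMS-share}=0$. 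Thus EFX, EF1, and MMS reduce to satisfying $c$.

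For the guaranteed existence of EFX, EF1, and MMS I would use one partition. Write $m=\lceil n/\parts\rceil$, put $c$ together with its $m-1$ most valued friends into a part $\pttn(c)$ of size $m$, and complete an arbitrary balanced partition with the rest. If $\deg(c)\le m-1$ the center keeps all its friends and is even envy-free, so assume $\deg(c)\ge m$ and let $v_1\ge\cdots\ge v_{\deg(c)}$ be the friend values, so that $\util_c(\pttn(c))=\sum_{i=1}^{m-1}v_i\ge(m-1)v_{m-1}$ and every friend outside $\pttn(c)$ has value at most $v_{m-1}$. Any other part has at most $m$ agents, so after deleting the envied agent and one further friend it retains at most $m-2$ friends, each of value at most $v_{m-1}$; its residual value is therefore at most $(m-2)v_{m-1}\le\util_c(\pttn(c))$, which is the EFX inequality for $c$ (and EFX implies EF1). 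For MMS, in any partition realising $\operatorname{MMS-share}(c)$ the part containing $c$ has size at most $m$, hence at most $m-1$ friends, so its value, being at least $\operatorname{MMS-share}(c)$, is at most $\sum_{i=1}^{m-1}v_i=\util_c(\pttn(c))$. Selecting the top friends and laying out the parts takes linear time.

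For EF, EFX$_0$, and PROP I would provide a linear-time \emph{decision} procedure through a case analysis on $|\pttn(c)|$, since these notions need not exist. A leaf separated from $c$ has utility $0$, strictly below its positive $\operatorname{PROP-share}=\util_{\ell}(c)/\parts$, so a PROP partition exists if and only if all friends fit with $c$, i.e.\ $\deg(c)\le m-1$. For EF the same placement works, but there is an extra escape: if the parts are small enough that $c$ and every friend can be isolated in its own singleton part, then no leaf envies and the isolated center envies nobody, so EF also holds; this occurs exactly when $n=\parts$, or when $\lceil n/\parts\rceil=2$ and the number $2\parts-n$ of singleton parts is at least $\deg(c)+1$. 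For EFX$_0$, a leaf outside $c$'s part can cancel its envy only by removing $c$, which under the ``remove any agent'' rule forces $|\pttn(c)|\le 2$; conversely, setting $|\pttn(c)|=2$ with $c$'s top friend (which forces all parts to have size at most $3$) leaves every other part with at most one friend after two deletions, so $c$ too is EFX$_0$. Hence EFX$_0$ exists if and only if $\deg(c)\le m-1$, or $2\in\{\lfloor n/\parts\rfloor,\lceil n/\parts\rceil\}$, or $n=\parts$. Each criterion is a constant number of comparisons on $n$, $\parts$, and $\deg(c)$, so the decision runs in linear time.

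I expect the main obstacle to be the center's analysis in the envy-based notions: getting the EFX accounting right (the bound $(m-2)v_{m-1}\le\sum_{i=1}^{m-1}v_i$ is precisely what makes the ``remove any friend'' relaxation succeed with a single center, and it is tight), together with, on the decision side, correctly identifying and justifying the small \emph{escape} configurations, namely isolating all friends in singletons for EF and shrinking $\pttn(c)$ to size two for EFX$_0$, rather than merely testing the obvious ``all friends with $c$'' condition. Everything else, i.e.\ the leaf and isolated-agent cases, becomes routine once the star structure has been made explicit.
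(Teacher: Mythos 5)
Your proposal is correct and follows the same core strategy as the paper: reduce everything to the star center, build the EFX/EF1/MMS partition by placing $c$ with its $\lceil n/\parts\rceil-1$ most valuable friends (padding with isolated agents), and decide EF/EFX$_0$/PROP via the criterion that all leaves must fit into $c$'s part. The one genuine difference is in the decision part. The paper writes ``w.l.o.g.\ assume $\lfloor n/\parts\rfloor \geq 3$'' and then applies only the ``all leaves with $c$'' test, leaving the regimes with parts of size $1$ or $2$ untreated; your proposal works these out explicitly and correctly identifies the escape configurations that make the naive test insufficient there: for EF, the all-singleton layouts (e.g.\ $n=\parts$, where EF exists for any $\deg(c)$ even though no leaf ``fits''), and for EFX$_0$, placing $c$ with its top friend in a size-$2$ part whenever $2\in\{\lfloor n/\parts\rfloor,\lceil n/\parts\rceil\}$. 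So your case analysis is strictly more complete than the paper's on this point -- the paper's stated criterion, taken literally, returns the wrong answer in exactly those degenerate regimes -- while your PROP characterization (no escapes, $\deg(c)\leq\lceil n/\parts\rceil-1$ exactly) matches the paper's and is correct in all regimes. Your explicit accounting for the center ($(m-2)v_{m-1}\leq\sum_{i=1}^{m-1}v_i$ for EFX, and bounding $\operatorname{MMS-share}(c)$ by the top $m-1$ values via the part containing $c$) also spells out steps the paper treats as ``clearly not envious.''
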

\begin{proof}
	The vertex cover number~$1$ graph is a disjoint union of a star and of an edgeless graph. Let us start with the first part of the statement, that is, let~$\phi\in\{\text{EFX},\text{EF$1$},\text{MMS}\}$. We create a partition~$\pttn$ by allocating the center~$c$ of the star together with~$\lceil \pttn/\parts \rceil - 1$ most valuable leaves of the star to the part~$\pttn_1$ (if there are not enough leaves, we fill the part with arbitrary isolated agents). The remaining agents are partitioned in an arbitrary way. Now, the agents in~$\pttn_1$ are clearly not envious. Let~$a$ be a leaf not in~$\pttn_1$. This leaf is clearly envious towards any agent~$\pttn_1 \setminus \{c\}$. However, the envy can be eliminated by removing~$c$ -- for EF1 we can choose~$c$ by definition; in EFX,~$c$ is the only friend of~$v$; in MMS, the MMS-share of~$a$ is anyway zero, so we do not need to care about leaves at all.
	
	For the second part of the statement, w.l.o.g. assume that~$\lfloor n/\parts \rfloor \geq 3$. First, we observe that every friend~$a$ of the star center~$c$,~$a$ must be in the same part as~$c$. If this is not the case, the utility of~$a$ is~$0$,~$\operatorname{PROP-share}(a) = \util_a(c) / \parts > 0$, and~$a$ can increase its utility by joining~$c$; since the parts are of size at least~$3$ and~$c$ is the only friend of~$a$, this holds even if we remove any non-friend of~$a$ from~$c$'s part. Therefore, unless all leaves of the star may be assigned to the same part as the star center~$c$, the response is always \emph{no}. Otherwise, we return \emph{yes}, since an arbitrary partition of this form is fair.
\end{proof}

As our previous results clearly indicate, if we allow for arbitrary utilities, then, under the standard theoretical assumptions, there is no hope for tractable algorithms even for graphs of constant vertex cover number. In the following, we show that if we additionally restrict the utilities, the situation is much more positive.

The proof of \Cref{thm:binaryFPTbyVC} is rather technical and involves two levels of guessing combined with ILP. First, it guesses what the solution looks like on the vertices of a vertex cover. Then, it guesses the neighborhood structure and tries to
verify the fairness for the independent vertices.

\begin{theorem}\label{thm:binaryFPTbyVC}
	If the utilities are binary, then for every notion of fairness~$\phi\in\{\text{EF},\text{EFX}_0,\text{EFX},\text{EF1},\text{PROP},\text{MMS}\}$, there is an \FPT algorithm parameterized by the vertex cover number~$\vc$ that decides whether a partition which is fair with respect to~$\phi$ exists.
\end{theorem}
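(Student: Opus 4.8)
The plan is to give an \FPT algorithm parameterized by the vertex cover number $\vc$ via a bounded amount of guessing followed by an integer linear program whose dimension depends only on $\vc$. Fix a minimum vertex cover $C$ with $|C| = \vc$; the agents outside $C$ form an independent set $\mathcal{I} = \agents \setminus C$, so each $v \in \mathcal{I}$ has $\Fr(v) \subseteq C$ and, because utilities are binary, the utility of any agent $a$ for a set $S$ is simply $|S \cap \Fr(a)|$. The key structural observation is that two independent agents $u,v \in \mathcal{I}$ with the same neighborhood $\Fr(u) = \Fr(v)$ are interchangeable. This partitions $\mathcal{I}$ into at most $2^{\vc}$ \emph{types}, one per subset of $C$, and the only information a solution needs about the independent set is, for each type $\tau \subseteq C$ and each part $i \in [\parts]$, how many agents of type $\tau$ are placed in $\pttn_i$. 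I would encode these counts as the ILP variables $x_{\tau,i}$.

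First I would guess the restriction of the solution to the vertex cover: for each of the $\vc$ cover-agents, guess which of the $\parts$ parts it lies in. Since we may also assume $\parts \le \vc + 1$ whenever any part is forced to contain a cover-agent (and handle the easy case where some part is entirely independent separately, using \Cref{thm:MMS:exists:if:parts:greater:maxDeg}-style arguments), this is at most $\parts^{\vc}$ guesses, which is $f(\vc)$ after bounding $\parts$. Fixing this guess determines, for every part $i$, exactly which cover-agents it contains, hence for every agent the quantity $|\pttn_i \cap C \cap \Fr(a)|$; the remaining contribution to any $\util_a(\pttn_i)$ comes only from independent agents, and by the type argument it is a linear function of the $x_{\tau,i}$. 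The second level of guessing, alluded to in the theorem's preamble, is over the neighborhood structure needed to certify fairness for the independent vertices: for each type $\tau$ that is actually used, guess which part its members are allowed to occupy in a way consistent with the fairness constraints, so that the per-type fairness conditions become linear inequalities rather than disjunctions.

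With the cover placement and the type behavior fixed, I would write the fairness notion $\phi$ as linear constraints on the $x_{\tau,i}$. Balancedness is the linear constraint $\lfloor n/\parts\rfloor \le \sum_\tau x_{\tau,i} + |\pttn_i \cap C| \le \lceil n/\parts\rceil$ together with $\sum_i x_{\tau,i} = (\text{number of agents of type }\tau)$. For the envy-based notions, the inequality $\util_a(\pttn(a)) \ge \util_a(\pttn(b)\setminus\{b, c\})$ (for the appropriate choice or quantification of $c$ dictated by $\phi$) becomes, for a cover-agent $a$ and a target part $i$, a linear inequality in the $x_{\tau,i}$ once the constant cover-contributions are substituted; for an independent agent it is handled at the level of its type. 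For \textsf{PROP} the threshold $\util_a(\Fr(a))/\parts$ is a constant, and for \textsf{MMS} the threshold $\operatorname{MMS-share}(a)$ is computable in linear time under binary utilities (by the earlier observation), so it too is a constant, and all resulting constraints are linear. The number of variables is at most $2^{\vc}\cdot\parts = f(\vc)$, so the ILP has dimension bounded by a function of $\vc$, and I would invoke Lenstra's theorem that integer programming in fixed dimension is solvable in \FPT time. Running over all $f(\vc)$ guesses and solving one ILP each yields the claimed \FPT algorithm.

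The main obstacle I expect is making the envy/fairness constraints genuinely \emph{linear} rather than disjunctive. For EFX and EFX$_0$ the ``for every removed agent $c$'' quantifier is benign because removing a least-valued (friend or) agent is the binding case, so it reduces to a single inequality per ordered pair and target part; but EF1's ``there exists $c$'' is an existential disjunction, and capturing ``the envy of $a$ toward part $i$ can be killed by removing \emph{some} agent'' as a linear condition is delicate. The clean resolution is that under binary utilities the optimal $c$ to remove is any friend of $a$ in the target part (value $1$), so the EF1 condition for $a$ and part $i$ collapses to $\util_a(\pttn(a)) \ge \util_a(\pttn_i \setminus \{b\}) - 1$ whenever the target part contains a friend of $a$, and to EF otherwise; this case split is resolved by the neighborhood-structure guess, after which each surviving condition is a single linear inequality. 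Verifying that this reduction is correct for all six notions simultaneously, and that the two-level guessing keeps the total count within $f(\vc)$, is the technical heart of the argument.
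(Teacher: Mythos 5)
Your architecture is essentially the paper's — neighborhood types inside the independent set, a first-level guess of the cover-agents' placement, a second-level guess of which parts each type may occupy, linear fairness constraints (with EF1/EFX collapsing to a penalty of one under binary utilities), and Lenstra in fixed dimension. But there is a genuine gap in how you dispose of the case $\parts > \vc$. Your claim that ``we may assume $\parts \le \vc + 1$'' is unjustified and false in general: $\parts$ is part of the input and can be, say, $n/10$ while $\vc = 2$, in which case almost all parts contain no cover-agent. These cover-free parts are not a separable ``easy case'' to be handled by arguments in the style of \Cref{thm:MMS:exists:if:parts:greater:maxDeg}: a single solution contains both kinds of parts simultaneously, coupled through balancedness and, for envy-based notions, through the requirement that cover-agents do not envy the cover-free parts and that independent agents placed in cover-free parts (where their utility is $0$) do not envy anyone else. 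With $\parts$ unbounded, your ILP has $2^{\vc}\cdot\parts$ variables, so Lenstra yields only an \XP bound in $\parts$, not an \FPT bound in $\vc$.

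The paper closes exactly this hole with machinery your proposal lacks. All cover-free parts are merged into a single ``extra color'' $\ell+1$, and then: (i) a divisibility constraint $(\parts-\ell)\lfloor n/\parts\rfloor \le \sum_{X} x^X_{\ell+1} \le (\parts-\ell)\lceil n/\parts\rceil$ guarantees that the aggregated agents can be split back into $\parts-\ell$ balanced parts; (ii) a combinatorial restriction, resolved at guessing time, governs which types may occupy such parts (for EF only friendless agents, for EFX/EF1 only types with at most one neighbor in each cover group); and (iii) — the subtlest point — auxiliary integer variables $y^u$, one per cover-agent $u$, are added with constraints forcing $y^u + 1 = \bigl\lceil \sum_{X \subseteq U,\, u \in X} x^X_{\ell+1} / (\parts-\ell) \bigr\rceil$, i.e., the maximum number of $u$'s friends landing in any single cover-free part when those agents are distributed round-robin; only then can ``$u$ envies none of the $\parts-\ell$ cover-free parts'' be expressed as one linear inequality. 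Point (iii) is unavoidable because the utility of a cover-agent for a cover-free part depends on how the independent agents are split \emph{among} those parts, which your aggregate counts $x_{\tau,i}$ do not determine; the round-robin realization together with the $y^u$ variables is what makes this quantity well-defined and linear. Without (i)--(iii), the reduction to an ILP of dimension $f(\vc)$ does not go through, and this is precisely the technical heart of the paper's proof that your proposal waves away.
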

\begin{proof}
    Let~$G = (V, E)$ and let~$U$ be the given vertex cover of~$G$.
    Note that we are mainly concerned with the groups that occupy the vertices in~$U$.
    Recall that the vertices in~$I = V \setminus U$ only have their neighborhood in~$U$.
    Therefore, we divide the vertices in~$I$ into types -- two vertices are of the same type if they have the same neighbors (in~$U$).
    For a set~$X \subseteq U$, we write~$I^X$ for the set~$\{ v \in I \mid N(v) = X \}$.
    Note that the number of types is bounded by~$2^{\vc}$.
    We denote by~$n^X$ the size of~$I^X$.

    We begin by presenting the core ideas of the proof for~$\phi \in \{ \text{PROP},\text{MMS} \}$.
    Observe that in the case of binary utilities, we can compute the~$\phi$-share~$\phi(v)$ for every vertex~$v \in V$.
    If~$\phi = \text{MMS}$, then~$\phi(v) = \lfloor \deg(v) / \parts \rfloor$.
    If~$\phi = \text{PROP}$, then~$\phi(v) = \lceil \deg(v) / \parts \rceil$.
    
    Our algorithm starts with a guessing phase (by exhaustive search through all possibilities).
    We guess the number~$\ell$ of the main groups in the vertex cover and a partition of the vertices in~$U$ to~$\ell$ groups.
    We denote by~$U_g$ the vertices assigned to a group~$g \in [\ell]$.
    Since~$\ell < \vc$, this takes~$O(\vc \cdot \vc^{\vc})$ time.
    As we now know how many neighbors of the same color the vertices in~$U$ shall have, we set~$\phi_g(v) = \phi(v) - |U_g \cap N(v)|$.
    If~$\parts > \ell$, then we set~$\hat{\ell} = \ell + 1$ and set~$\hat{\ell} = \ell$ otherwise.
    The possible ``extra'' color is there to represent groups that do not contain any of the~$\ell$ colors used in the vertex cover.
    Next, we guess for each~$X$ the set of~$\hat{\ell}$ groups that may contain vertices in~$I^X$.
    We shall seek a partition that for each group contains at least one vertex of type~$I^X$ if and only if its color is guessed for~$I^X$.
    Note that this takes~$(2^{\vc})^{2^{\hat{\ell}}} \subseteq 2^{2^{O(\vc)}}$ time.
    Based on this guess, we observe that we can determine whether the vertices in~$I$ meet their~$\phi$-share.
    If this is not the case, we reject the current guess and continue with the next one.
    Otherwise, we use an ILP formulation to decide if one can find a realization of the above guesses that fulfills the~$\phi$-share of vertices in~$U$ as follows.
    
    We use integer variables~$x^X_g$ that represent the number of vertices in~$I^X$ assigned to a group~$g \in [\hat{\ell}]$.
    For~$x^X_g$ to comply with the guess, we set~$x^X_g = 0$ if the group~$g$ has no vertices assigned in~$I_X$ and otherwise require~$x^X_g \ge 1$.
    We require
    \begin{align}
        \sum_{X \subset U, v \in X} x^X_g &\ge \phi_g(v)  &\forall v \in U_g, \forall g \in [\ell] \label{eq:binaryFPTbyVC:ShareILP:VCShare}\\
        \sum_{g \in [\hat{\ell}]} x^X_g &= n^X  &\forall X \subseteq U \label{eq:binaryFPTbyVC:ShareILP:ISPartition} \\
        \lfloor n/k \rfloor \le |U_g| + \sum_{X \subset U} x^X_g &\le \lceil n/k \rceil &\forall g \in [\ell] \label{eq:binaryFPTbyVC:ShareILP:balanced}
    \end{align}
    If~$\hat{\ell} > \ell$, then we further add condition
    \begin{equation}\label{eq:binaryFPTbyVC:ShareILP:ExtraColorsDivisible}
        (\parts - \ell) \lfloor n/k \rfloor \le \sum_{X \subseteq U} x^X_{\ell+1} \le (\parts - \ell) \lceil n/k \rceil
    \end{equation}
    
    If the ILP has a solution, then we have a sought partition that complies with all the guesses.
    To see this, note that by the lower and upper bounds on variables~$x^X_g$ the numbers of vertices must comply with the guess.
    Furthermore, by \eqref{eq:binaryFPTbyVC:ShareILP:ISPartition} it is possible to partition~$I^X$ into groups~$I^X_g$ of size~$x^X_g$.
    Let~$\pi_g = U_g \cup \bigcup_{X \subseteq U} I^X_g$ be a group for~$g \in [\ell]$.
    By \eqref{eq:binaryFPTbyVC:ShareILP:VCShare} the~$\phi$-share of vertices in~$U$ is met.
    The~$\phi$-share of vertices in~$I$ is met since the partition complies with the guess.
    Finally, if~$\hat{\ell} > \ell$, conditions \eqref{eq:binaryFPTbyVC:ShareILP:ExtraColorsDivisible} ensure that it is possible to divide the vertices assigned to the group~$\ell+1$ into~$\parts - \ell$ groups.
    We can do this in an arbitrary way, since the~$\phi$-share of these vertices is met.
    This, together with \eqref{eq:binaryFPTbyVC:ShareILP:balanced} implies that we have found the sought balanced partition~$\pi$.
    
    For the other direction, we prove that if the given instance admits a solution, then there exists an initial guess for which the ILP is feasible.
    Suppose that~$\pi$ is a balanced~$\parts$-partition satisfying~$\phi$.
    Without loss of generality, we assume that~$\pi$ assigns vertices in~$U$ to groups in~$[\ell]$ for some~$\ell \le \vc$.
    We claim that~$\pi$ complies with a guess that assigns groups in~$[\ell]$ to the vertices in~$I^X$ if~$I^X \cap \pi_g \neq \emptyset$.
    If~$\parts > \ell$, this guess additionally contains nonempty groups~$I^X$ if~$I^X \cap \cup_{g > \ell} \pi_g \neq \emptyset$.
    We let~$x^X_g = |I^X \cap \pi_g|$ for~$g \in [\ell]$ and if~$\parts > \ell$, we set~$x^X_{\ell+1} = |I^X \cap \cup_{g > \ell} \pi_g|$.
    It is straightforward to verify that such values of variables~$x$ verify conditions \eqref{eq:binaryFPTbyVC:ShareILP:VCShare}--\eqref{eq:binaryFPTbyVC:ShareILP:ExtraColorsDivisible}.

    For envy-based~$\phi\in\{\text{EF},\text{EFX}_0,\text{EFX},\text{EF1}\}$, we proceed in a similar way.
    The few exceptions are that we cannot compute the values~$\phi(v)$ for~$v \in U$ in advance and instead verify these on the fly using the ILP.
    Again, we start by guessing the number~$\ell$ and the groups~$U_g \subseteq U$ for~$g \in [\ell]$.
    If~$\parts > \ell$, then we set~$\hat{\ell} = \ell + 1$ and set~$\hat{\ell} = \ell$ otherwise.
    We guess for each~$X$ the set of~$\hat{\ell}$ groups that may contain vertices in~$I^X$.
    We shall seek a partition that for each group contains at least one vertex of type~$I^X$ if and only if its color is guessed for~$I^X$.
    Based on the two guesses, we verify~$\phi$ for vertices in~$I$:
    \begin{description}
        \item[EF] Each vertex in~$I^X$ in the group~$g \in [\ell]$ must have at least as many neighbors in~$U_g$ as in other colors. Moreover, the group~$\ell+1$ can only be presented in the set~$I^{\emptyset}$.
        \item[EFX$_0$] It is similar to EF.
        \item[$\text{EFX}=\text{EF1}$] Each vertex in~$I^X$ in the group~$g \in [\ell]$ must have at least as many neighbors in~$U_g$ as in other colors minus~1. The group~$\ell+1$ can only be presented if it has in each group in~$[\ell]$ at most one neighbor.
    \end{description}
    Note that this ensures~$\phi$ for all vertices in~$I$.
    In the ILP model, we reuse conditions \eqref{eq:binaryFPTbyVC:ShareILP:ISPartition} and \eqref{eq:binaryFPTbyVC:ShareILP:balanced} as well as the bounds on the integer variables~$x^X_g$; ensuring the balancedness of the partition and that every vertex in~$I^X$ is in exactly one group.
    If~$\hat{\ell} > \ell$, then we further add the condition \eqref{eq:binaryFPTbyVC:ShareILP:ExtraColorsDivisible} and introduce additional integer variables~$y^u$ with~$y^u \ge 0$ and the condition for all~$u \in U$
    \begin{equation}\label{eq:binaryFPTbyVC:EFILP:yForExtraColors}
        (\parts - \ell) y^u \le -1 + \sum_{X \subseteq U, u \in X} x^X_{\ell+1} \le (\parts - \ell) \cdot (y^u + 1) \,.
    \end{equation}
    It is straightforward to verify that \eqref{eq:binaryFPTbyVC:EFILP:yForExtraColors} ensure that \[y^u+1 = \left\lceil \sum_{X \subseteq U, u \in X} x^X_{\ell+1} / (\parts - \ell) \right\rceil \,.\]
    That is,~$y^u+1$ is the number of vertices assigned to groups~$\ell+1, \ldots, \parts$ if we assign those in the round-robin fashion.
    Now, we are ready to verify~$\phi$ for vertices in~$U$.
    For~$\phi=\text{EF}$ we have
    \begin{equation}\label{eq:binaryFPTbyVC:EFILP:EFforMainGroups}
        \begin{split}
            |N(u) \cap U_g| + \sum_{X \subseteq U, g \in X} x^X_g \ge&\\
            |N(u) \cap U_{g'}| - \nu(u,g') + \sum_{X \subseteq U, g \in X} x^X_{g'} & \\
            \forall u \in U_g \forall g \neq g' \in [\ell] &
        \end{split}
    \end{equation}
    Note that the left-hand side is the utility of~$u$ in the group~$g$.
    We set the constant~$\nu(u,g') \in \{ 0,1 \}$ to~$1$ if and only if all vertices assigned to color~$g'$ are neighbors of~$u$.
    Note that it is possible to compute~$\nu(u,g')$ for all~$u \in U$ based on the initial guesses.
    Now, the right-hand side is exactly the utility of~$u$ if it moves from its group~$g$ to the group~$g'$.
    Finally, if~$\hat{\ell} > \ell$, we add the conditions
    \begin{equation}\label{eq:binaryFPTbyVC:EFILP:EFforOthers}
        |N(u) \cap U_g| + \sum_{X \subseteq U, g \in X} x^X_g \ge y^{u} \quad \forall u \in U
    \end{equation}

    We again show that the provided ILP is a valid model.
    Suppose there is a solution~$x$ to the ILP with conditions \eqref{eq:binaryFPTbyVC:ShareILP:ISPartition}, \eqref{eq:binaryFPTbyVC:ShareILP:balanced}, \eqref{eq:binaryFPTbyVC:ShareILP:ExtraColorsDivisible} and \eqref{eq:binaryFPTbyVC:EFILP:yForExtraColors}--\eqref{eq:binaryFPTbyVC:EFILP:EFforOthers} and the box constraints based on the initial guesses.
    We assign groups~$g \in [\ell]$ as before, that is, we partition the vertices in~$I^X$ so that~$\pi_g$ contains unique~$x^X_g$ such vertices.
    We assign groups~$\ell+1, \ldots, \parts$ in a round-robin fashion in arbitrary order of sets~$I^X$ using always exactly~$x^X_{\ell+1}$ vertices of each type~$X$.
    Note that by \eqref{eq:binaryFPTbyVC:ShareILP:ISPartition} this is a valid partition of~$V$ and that it is balanced due to \eqref{eq:binaryFPTbyVC:ShareILP:balanced} and \eqref{eq:binaryFPTbyVC:ShareILP:ExtraColorsDivisible}.
    Now, the vertices in~$I$ do not envy as otherwise the initial guess was rejected.
    The vertices in~$U$ do not envy other groups~$g' \in [\ell]$ by \eqref{eq:binaryFPTbyVC:EFILP:EFforMainGroups}.
    It is not difficult to verify that in~$\pi_{g'}$ with~$g' \in \{ \ell+1, \ldots, \parts \}$ contain at either~$y^u$ or~$y^u-1$ vertices in the neighborhood of~$u \in U$.
    Moreover, using round-robin yields that the maximum occupancy of such a group in the neighborhood of~$u$ is exactly~$y^u$.
    Note that this is secured by \eqref{eq:binaryFPTbyVC:EFILP:yForExtraColors}.
    Then~$u$ does not envy any such group, as is established by \eqref{eq:binaryFPTbyVC:EFILP:EFforOthers}.
    
    In the other direction, suppose that the given instance admits an EF~$\parts$-balanced partition~$\pi$.
    W.l.o.g.~$\pi$ uses first~$\ell \le |U|$ groups on the vertex cover~$U$.
    We set variables~$x^X_g$ for~$g \in [\ell]$ as above, i.e.,~$x^X_g = | \{ v \in I^X \mid v \in \pi_g \}|$.
    We set the variables~$x^X_{\ell+1}$ to~$|I^X| - \sum_{g \in [\ell]} x^X_g$.
    We claim that for some initial guess of the solution pattern, this is a valid solution to the resulting ILP above.
    First of all, since there is no envy for vertices in~$I$ and based on the above setting, we know for every~$g \in [\hat{\ell}]$ in which~$I^X$ there are some members of the group -- this allows us to reverse engineer the initial guess (note that it is unique and was not rejected).
    Conditions \eqref{eq:binaryFPTbyVC:ShareILP:ISPartition}, \eqref{eq:binaryFPTbyVC:ShareILP:balanced}, and \eqref{eq:binaryFPTbyVC:ShareILP:ExtraColorsDivisible} are clearly satisfied since~$\pi$ was a~$\parts$-balanced partition.
    Conditions \eqref{eq:binaryFPTbyVC:EFILP:yForExtraColors} always have solution~$y^u$.
    Next, we observe that
    \[
        y^u + 1 \le \max_{i \in \{\ell + 1, \ldots, \parts \}} |\pi_i \cap N(u)| 
    \]
    holds.
    Clearly, round-robin can only lower this number as it is the most uniform, achieving the lowest possible right-hand side.
    Since~$\pi$ is~$\phi$-free for the groups in~$[\ell]$, conditions \eqref{eq:binaryFPTbyVC:EFILP:EFforMainGroups} are met; they compute the utilities.
    Similarly, for \eqref{eq:binaryFPTbyVC:EFILP:EFforOthers} holds as~$y^u$ is below of what is possibly set by~$\pi$.
    Thus, the ILP has a solution.

    It is not hard to verify (using the proof above) that one can get the ILP for~$\phi \in \{ \text{EFX}_0,\text{EFX},\text{EF1} \}$ by slight modifications in the conditions \eqref{eq:binaryFPTbyVC:EFILP:EFforMainGroups} and \eqref{eq:binaryFPTbyVC:EFILP:EFforOthers}.
    The right-hand side receives penalty of~1 (based on the initial guess).

    It remains to analyze the total time needed for the computation.
    We have already observed that the initial guessing only requires time~$2^{2^\Oh{\vc}}$.
    The number of variables of the ILPs can be bounded by a function that solely depends on the selected parameter~$\vc$.
    Therefore, the theorem follows using the celebrated result of Lentra~\cite{Lenstra83,Dadush12}.
\end{proof}

The last result of this section is an algorithm \FPT for parameterization by the number of agents. Although the algorithm is not complicated, it highlights the difference between our model and the setting of standard fair division of indivisible items, where the existence of fair outcomes may be algorithmically challenging already for a constant number of agents~\cite{BergerCFF2022,GhosalPNV2023,ChaudhuryGM2024,DeligkasEKS2024}.

\begin{observation}
	For every fairness notion~$\phi\in\{\text{EF},\text{EFX}_0,\text{EFX},\text{EF1},\text{PROP},\text{MMS}\}$, the problem of deciding whether a~$\phi$-fair partition exists is in \FPT when parameterized by the number of agents~$n$.
\end{observation}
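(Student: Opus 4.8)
The plan is to exploit the fact that, once $n$ is the parameter, the search space of all partitions is itself bounded by a function of $n$ alone, so a brute-force enumeration already yields an \FPT procedure. Concretely, since every part of a $\parts$-partition is nonempty we have $\parts \le n$, and the number of $\parts$-partitions of an $n$-element agent set is at most $\parts^n \le n^n$, which depends only on the parameter. The algorithm I would run simply enumerates all balanced $\parts$-partitions $\pttn$ (of which there are at most $n^n$), and for each of them checks whether it is fair with respect to $\phi$; it accepts if some enumerated partition is fair and rejects otherwise. Correctness is immediate because the enumeration is exhaustive over exactly the objects whose existence we must decide.

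The remaining work is to bound the cost of the fairness check for a single fixed partition. For $\phi \in \{\text{EF}, \text{EFX}_0, \text{EFX}, \text{EF1}, \text{PROP}\}$ this is routine: as discussed in the verification part of \Cref{sec:fairness}, each of these notions can be verified in polynomial time by iterating over all ordered pairs of agents (and, for PROP, by first computing each $\operatorname{PROP-share}(a)$ directly from $\util_a(\Fr(a))$). Thus for these five notions the whole procedure runs in time $n^n \cdot \operatorname{poly}(|\mathcal{I}|)$, where $\mathcal{I}$ is the input instance, which is of the form $f(n)\cdot\operatorname{poly}(|\mathcal{I}|)$, and membership in \FPT follows.

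The one notion that needs extra care — and which I expect to be the only real obstacle — is MMS, since by \Cref{thm:MMS:NPh:verify} computing $\operatorname{MMS-share}(a)$ is \NPh in general, so we may not invoke a polynomial-time verifier here. The resolution is that the defining expression $\operatorname{MMS-share}(a) = \max_{\pttn'\in\Pi_\parts}\min_{i\in[\parts]} \util_a(\pttn'_i)$ is itself a maximum of a minimum over the set $\Pi_\parts$ of all balanced partitions, which we have just argued has size at most $n^n$. Hence I would add a preprocessing step that, for every agent $a$, computes $\operatorname{MMS-share}(a)$ by explicitly enumerating all balanced partitions $\pttn'$, evaluating $\min_{i\in[\parts]} \util_a(\pttn'_i)$ for each, and taking the maximum; this costs $n^n \cdot \operatorname{poly}(|\mathcal{I}|)$ time. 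With all MMS-shares precomputed, verifying whether a candidate $\pttn$ satisfies $\util_a(\pttn(a)) \ge \operatorname{MMS-share}(a)$ for all $a$ becomes polynomial, and the candidate partitions are again enumerated as above.

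In every case the computation runs in $f(n)\cdot\operatorname{poly}(|\mathcal{I}|)$ time, establishing \FPT membership for all six notions. The conceptual point worth emphasizing is precisely the MMS case: its share is \NPh to compute in the unparameterized world, yet the max-min defining it ranges over a search space whose cardinality is a function of $n$ only, so the brute force that is prohibitive in general becomes entirely affordable under this parameterization.
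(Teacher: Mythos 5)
Your proposal is correct and takes essentially the same route as the paper: enumerate the at most $n^{\Oh{n}}$ balanced partitions, verify each in polynomial time for EF, EFX$_0$, EFX, EF1, and PROP, and handle MMS by first brute-forcing the MMS-shares over a search space whose size is also bounded by a function of $n$ alone. The only cosmetic difference is that the paper computes MMS-shares by enumerating partitions of each agent's friend set, whereas you enumerate balanced partitions of the whole agent set, which matches the definition of $\operatorname{MMS-share}$ directly; both stay within $n^{\Oh{n}}\cdot\operatorname{poly}(|\mathcal{I}|)$ time.
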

\begin{proof}
	It is easy to see that we can assume that~$\parts < n$. Otherwise, we just partition agents into singleton parts and in polynomial time, we decide whether the partition is fair (which can be done also for MMS due to \Cref{thm:MMS:exists:if:parts:greater:maxDeg}). Therefore, there are at~$n^\Oh{n}$ different partition of the agents and for every
	~$\phi\in\{\text{EF},\text{EFX}_0,\text{EFX},\text{EF1},\text{PROP}\}$ we can verify their fairness in polynomial time. That is, if we found at least one fair partition, we return \emph{yes}, otherwise, we return~\emph{no}. For MMS, we first try all partitions of friends of every agent~$a\in\agents$ into~$\parts$ parts in order to determine its MMS-share. Once MMS-share of every agent is known, we can do the same procedure as for the other fairness notions and verifying fairness becomes easy task also for MMS. Even with this preprocessing, the running time is still~$n^\Oh{n}$, which is clearly in \FPT with respect to~$n$.
\end{proof}

\section{Tree-Like Friendship Graphs}\label{sec:trees}

As is clear from the results of the previous section, without an additional restriction of the input instances, we are not able to guarantee any existential result or positive algorithm. Therefore, in this section, we focus on the natural restriction of the underlying friendship graph; specifically, we focus on situations where the social friendship graph is similar to a tree. It should be noted that such restriction is heavily studied in the related literature; see, e.g.~\cite{BouveretCEIP2017,FarhadiGHLPSSY2019,HanakaL2022,Xiao0H2023,HosseiniNW2024}.

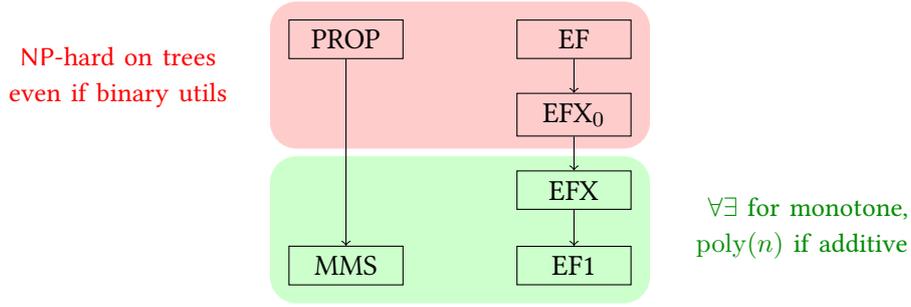
\begin{figure}[bt!]
	\centering
    \begin{tikzpicture}[every node/.style={minimum width=1.5cm}]
		\node[draw] (EF) at (1.5,0) {EF};
		\node[draw] (EFX0) at (1.5,-1) {EFX$_0$};
		\node[draw] (EFX) at (1.5,-2) {EFX};
		\node[draw] (EF1) at (1.5,-3) {EF1};
		\node[draw] (PROP) at (-1.5,0) {PROP};
		\node (PROPdummy) at (-1.5,-1) {};
		\node[draw] (MMS) at (-1.5,-3) {MMS};
		\node[draw,transparent] (MMSdummy) at (-1.5,-2) {};
		
		\draw (EF) edge[->] (EFX0);
		\draw (EFX0) edge[->] (EFX);
		\draw[->] (PROP) edge (MMS);
		\draw[->] (EFX) edge (EF1);
		
		\node[red,text width=3cm,align=center,yshift=-0.5cm] at (-4.5,0) {\small \NP-hard on trees even if binary utils};
		\node[green!55!black,text width=3cm,align=center,yshift=-0.5cm] at (4.5,-2) {\small~$\forall\exists$ for monotone,\\$\operatorname{poly}(n)$ if additive};
		
		\begin{pgfonlayer}{background}
			\fill[red!40,opacity=0.5,rounded corners=10]
				(-2.5,0.5) rectangle (2.5,-1.45);
			\fill[green!40,opacity=0.5,rounded corners=10]
				(-2.5,-3.5) rectangle (2.5,-1.55); %
		\end{pgfonlayer}
	\end{tikzpicture}
	\caption{A basic overview of our complexity results for trees.}
	\label{fig:trees-results-overview}
\end{figure}

In \Cref{thm:EF:NPh}, we showed that the situation is, from the computational complexity perspective, hopeless if we require EF, EFX$_0$, or PROP partition and the utilities can be arbitrary, even if the friendship graph is a path. On the other hand, if the utilities are binary, we can decide the existence of fair partitions in polynomial time.

\begin{proposition}\label{thm:unweighted:EF:path:P}\label{thm:unweighted:PROP:path:P}
    If the utilities are binary and~$G$ is a path, there is a linear-time algorithm that decides whether an EF or PROP partition exists. In the same setting, EFX$_0$ partitions are guaranteed to exist and can be found in linear time.
\end{proposition}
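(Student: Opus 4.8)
The plan is to exploit the fact that on a path every agent has at most two friends, so with binary utilities each agent's utility for its own part lies in $\{0,1,2\}$. Call an agent \emph{lonely} (in a given partition) if it has no friend in its own part. The key preliminary step is a short case analysis establishing two facts. First, every \emph{non-lonely} agent is automatically non-envious and automatically meets its PROP-share: it keeps at least one friend at home, hence (having at most two friends in total) at most one friend lies in any single other part, so replacing an agent there cannot push its utility above its current value of at least $1$; and for $\parts\ge 2$ its PROP-share is $\deg(a)/\parts\le 2/\parts\le 1\le \util_a(\pttn(a))$. Second, a lonely agent $a$ (with $\util_a(\pttn(a))=0$) is non-envious exactly when every part containing a friend of $a$ is a singleton, since otherwise $a$ could evict a non-friend from that part and inherit the friend, raising its utility to $1$. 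These two observations reduce all three claims to combinatorial statements about the sizes of the parts (the degenerate case $\parts=1$ is trivially fair for all notions).

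For the EFX$_0$ guarantee I would use the contiguous \emph{interval} partition, cutting the path into $\parts$ consecutive balanced blocks, which is clearly computable in linear time. The structural remark is that any block other than an agent's own contains at most one of that agent's neighbors, since the two neighbors lie on opposite sides of the agent along the path. Balancedness then forces a dichotomy on block sizes. If the smallest block has size $1$, then all blocks have size $1$ or $2$, so for any target part $P$ and any two agents $b,c$ the set $P\setminus\{b,c\}$ is empty and the EFX$_0$ inequality $\util_a(\pttn(a))\ge \util_a(P\setminus\{b,c\})$ holds trivially. If instead every block has size at least $2$, then every agent has a neighbor at home, so $\util_a(\pttn(a))\ge 1$, whereas the remark caps the utility obtainable in any other part by $1$; hence no EFX$_0$ violation. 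In either regime the interval partition is EFX$_0$.

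For the EF and PROP \emph{decision} I would turn the two observations into exact existence criteria. For PROP, every agent on a (non-trivial) path has strictly positive PROP-share, so by the first observation a balanced partition is PROP iff \emph{no} agent is lonely; such a partition exists iff all parts can have size at least $2$, i.e.\ iff $n\ge 2\parts$ (the interval partition witnesses it), whereas if $n<2\parts$ the $2\parts-n$ forced singletons are lonely and preclude PROP. For EF, the second observation shows that in any EF partition the set $S$ of vertices forming singleton parts must satisfy $N(S)\subseteq S$ (every friend of a singleton is itself a singleton); since the path is connected this forces $S=\emptyset$ or $S=V(G)$, that is either no forced singletons ($n\ge 2\parts$) or all agents singletons ($n=\parts$). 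Conversely both cases admit EF partitions — the interval construction has no lonely agents when $n\ge 2\parts$, and when $n=\parts$ every agent is an isolated singleton and thus non-envious. Hence EF existence reduces to the test ``$n\ge 2\parts$ or $n=\parts$'' and PROP existence to ``$n\ge 2\parts$'', both computable in time linear in the input, with witnessing partitions produced by the interval construction.

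The main obstacle is establishing the two observations cleanly — in particular pinning down, for a lonely agent $a$ and every other part $P$, the exact ``threat value'' $\max_{b\in P}\util_a(P\setminus\{b\})$ and verifying it is positive precisely when $P$ is a non-singleton part meeting a friend of $a$. Once this value is computed correctly, the connectivity argument $N(S)\subseteq S\Rightarrow S\in\{\emptyset,V(G)\}$ for EF and the block-size dichotomy for EFX$_0$ follow with only routine checking.
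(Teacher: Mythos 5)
Your proposal is correct and follows essentially the same route as the paper: both use the contiguous interval partition as the witness, case on whether the balanced part sizes are all $1$, mixed $\{1,2\}$, or all at least $2$, and arrive at the same criteria (EF iff $n=\parts$ or $n\geq 2\parts$, PROP iff $n\geq 2\parts$, EFX$_0$ always). Your ``lonely agent'' lemma and the explicit connectivity argument $N(S)\subseteq S\Rightarrow S\in\{\emptyset,V(G)\}$ just make rigorous a step the paper's mixed-size EF case leaves implicit, so this is a refinement of detail rather than a different proof.
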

\begin{proof}
	We show the correctness of the proposition separately for each of the fairness notions of our interest. Let~$\parts$ be fixed and the fairness notion we are interested in be PROP. First, observe that for every degree-2 agent~$a\in\agents$,~$\operatorname{PROP-share}(a) = 2/\parts$, while for degree-1 agents, it is~$1/\parts$. In any case, the PROP-share is strictly greater than zero. Consequently, if there exists a part of size~$1$, we directly return \emph{no}, as the utility of an agent assigned to this part is zero and hence, no partition can be PROP. Therefore, let~$\lfloor n/\parts \rfloor \geq 2$. In this case, we create a partition~$\pttn=(\pttn_1,\ldots,\pttn_\parts)$ such that each part consists of a connected sub-path. Now, every agent is in the same part as at least one of its friends, so~$\pttn$ is clearly PROP.
	
	For EF, suppose first that all parts are of size~$1$. Then, we arbitrarily partition the agents into~$n$ parts. Even though the utility of every agent is zero, they cannot improve by replacing an agent in any of the other parts. Hence, we return \emph{yes}. Next, suppose that the parts are of size either~$1$ or~$2$ and let~$a$ be an agent assigned to a part of size~$1$ with a friend, say~$b$, assigned to a size-$2$ part. The utility of agent~$a$ in its current part is obviously~$0$, and the~$a$'s utility in part~$\{a,b\}$ is one. Therefore, there cannot be an EF partition, so in this case, we return \emph{no}. Finally, if all parts are of size at least two, the situation is the same as in the case of PROP; we can find a partition such that every agent is in the same part as at least one of its friends. Since each agent has at most two friends, the partition is clearly EF. 
	
	Finally, as EF implies EFX$_0$, we need to separately examine only the case where some parts are of size~$1$ and some of them are of size~$2$. In the case of EF, the problematic part was that an agent in size-$1$ part could join its friend in size-$2$ part. However, EFX$_0$ states that any agent we remove from the target part eliminates the envy. Since there is only the friend~$b$, the envy is eliminated as~$a$'s utility is still zero. Therefore, EFX$_0$ partitions are guaranteed to exist in this case.
\end{proof}

Then, a natural question arises. Can we generalize the result from \Cref{thm:unweighted:EF:path:P} to a more general class of graphs? In the following result, we show that this is not the case. More specifically, we show that it is computationally hard to decide whether EF or PROP partition already exists for trees of constant depth.

\begin{theorem}
    \label{thm:unweighted:EF:tree:NPh}\label{thm:unweighted:PROP:tree:NPh}
	It is \NPc and \Wh when parameterized by the number of parts~$\parts$ to decide whether PROP, EF, or EFX$_0$ partition exists, even if the utilities are binary and~$G$ is a tree of depth~$2$.
\end{theorem}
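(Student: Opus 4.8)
The plan is to reduce from \probName{Unary Bin Packing}, just as in \Cref{thm:EF:NPh}, relying on the fact that it stays \NPc and \Wh parameterized by the number of bins~$B$ even when every~$s_i \ge 2$,~$c \ge 4$,~$B \ge 3$, and~$\sum_{i} s_i = cB$. The new difficulty is that with binary utilities we can no longer encode the item size~$s_i$ as a \emph{weight}; instead I would encode it as the \emph{number of agents} in a star gadget. Concretely, set~$\parts = B+1$ and build a depth-$2$ tree: take a root~$r$; for each item~$i\in[N]$ add a middle agent~$m_i$ adjacent to~$r$ together with~$s_i-1$ private leaves, so the ``block''~$\{m_i\}\cup\mathrm{leaves}(m_i)$ has exactly~$s_i$ agents; and attach~$c-1$ private leaves~$d_1,\dots,d_{c-1}$ directly to~$r$, forming a \emph{root block} of size~$c$. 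All edges carry utility~$1$. The total number of agents is~$c + \sum_i s_i = c(B+1) = c\parts$, so every balanced~$\parts$-partition has all parts of size \emph{exactly}~$c$; this exact divisibility is what makes each part mimic a bin of capacity~$c$.

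The heart of the reverse direction is a single ``blocks stay together'' observation that holds simultaneously for all three notions. Any leaf's only friend is its parent, so a leaf separated from its parent has utility~$0$: this violates EF and PROP outright, and---because every part has size~$c\ge 3$---EFX$_0$ as well (deleting the envied agent and one further agent, both distinct from the parent, still leaves the parent in the target part as a friend). Hence in any fair partition every item block lies in one part and all~$d_t$ share~$r$'s part. As there are~$c-1$ such leaves, the root block already fills~$r$'s part completely, so no item block can lie with~$r$; the remaining~$\parts-1=B$ parts are filled solely by item blocks, each summing to exactly~$c$. Reading off which block lies in which of these~$B$ parts yields a valid packing, establishing the ``only if'' direction.

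For the forward direction I would turn a packing~$\alpha$ into the partition that places the root block in its own part and, for each bin~$j$, all blocks assigned to~$j$ into one part (of total size~$c$). I then verify the three notions directly: every middle~$m_i$ keeps its~$s_i-1$ leaves, and~$s_i-1 \ge s_i/\parts$ since~$s_i\ge 2$, so its PROP-share is met and, having no friends outside its block except~$r$, it never envies; each~$d_t$ and each leaf is trivially satisfied; and the root~$r$, keeping all~$c-1$ of its leaf friends, has utility~$c-1$, which dominates both its PROP-share~$(c-1+N)/\parts$ (using~$N \le (c-1)B$, a consequence of~$s_i\ge 2$) and its best alternative, namely the at most~$\lfloor c/2\rfloor \le c-1$ middle agents living in any item part (here~$s_i\ge 2$ and~$c\ge 4$ are used). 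Thus the constructed partition is simultaneously PROP and EF, and therefore EFX$_0$ too.

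The main obstacle, and the one place needing real care, is the design of the root: connectedness and depth~$2$ force~$r$ to be adjacent to all~$N$ item-middle agents and hence to have large degree, so a priori~$r$'s own fairness could either be unsatisfiable or, worse, be satisfiable by a partition that does \emph{not} correspond to a packing. The~$c-1$ private leaves~$d_t$ resolve both issues at once: they single-handedly cover~$r$'s share/envy threshold (so yes-instances stay feasible) and they exactly fill~$r$'s part (so no item can ``hide'' with~$r$ and create an off-by-one packing). Once this gadget is in place, what remains is the routine but notation-heavy verification of the three notions above. Finally,~$\parts = B+1$ is a computable function of the parameter~$B$ and the reduction is polynomial with unary weights, so \NPcness (membership is clear, as all three notions are poly-time verifiable) and \Whness with respect to~$\parts$ follow together.
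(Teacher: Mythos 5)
Your proposal is correct and matches the paper's own proof essentially verbatim: the paper also reduces from \probName{Unary Bin Packing} with $\parts = B+1$, builds exactly this graph (a ``connecting gadget'' star with center $c^*$ and $c-1$ leaves, playing the role of your root block, plus item-gadget stars with centers adjacent to $c^*$ and $s_i-1$ leaves each), and argues identically that leaves must stay with their parents under all three notions, so blocks stay intact and balancedness forces an exact packing. The only cosmetic differences are naming and your slightly sharper observation that part size $\geq 3$ (rather than the paper's $c \geq 4$) already suffices for the EFX$_0$ violation.
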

\begin{proof}
	To prove the theorem, we give a parameterized reduction from the \probName{Unary Bin Packing}, where we are given a multiset~$S = \{s_1,\ldots,s_N\}$ of items sizes, a number of bins~$B$, and a capacity~$c$ of every bin. The goal is to decide whether there exists an allocation~$\alpha\colon S\to[m]$ such that for every~$j\in[B]$ it holds that~$\sum_{s\colon \alpha(s) = j} s \leq c$. \probName{Unary Bin Packing} is known to be \Wh when parameterized by the number of bins, even if the size of each item is encoded in unary, all~$s_i\in S$ are at least~$2$,~$c\geq4$, and~$\sum_{i=1}^N s_i = c\cdot B$~\cite{JansenKMS2013}. 
	
	The idea behind the construction is to create a gadget for each item~$s_i\in S$ such that the whole gadget is part of the same part in every solution. For connectivity reasons, we add one connecting gadget that connects all the item gadgets together and, in every solution, forms one extra part. Formally, given an instance~$\mathcal{I} = (S,B,c)$ of the \probName{Unary Bin Packing} problem, we create an equivalent instance of our problem as follows. For every item~$s_i\in S$, we create an \emph{item-gadget}~$X_i$, which is a star with center~$c_i$ and~$s_i - 1$ leaves~$v^i_1,\ldots,v^i_{s_i - 1}$. Additionally, we add one \emph{connecting-gadget}~$X_g$, which is again a star with center~$c^*$ and~$c-1$ leaves~$v^*_1,\ldots,v^*_{c-1}$. Moreover, there is an edge~$\{c_i,c^*\}$ for every~$i\in[N]$. Overall, there are~$c\cdot B + c$ agents. To finalize the construction, we set~$\parts = B + 1$, and we recall that the utilities are binary. For an illustration of the construction, we refer the reader to \Cref{fig:thm:unweighted:EF:tree:NPh}.
	
	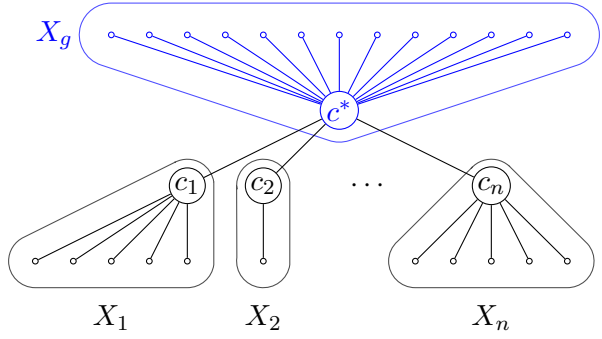
\begin{figure}[bt!]
		\centering
		\begin{tikzpicture}
        
			\node[draw,circle,inner sep=0.75pt,blue] (cg) at (0,0) {$c^*$};
			\foreach[count=\i] \x in {-3,-2.5,...,3}{
				\node[draw,circle,inner sep=0.75pt,blue] (cg\i) at (\x,1) {};
				\draw[blue] (cg) -- (cg\i);
			}
			\begin{pgfonlayer}{background}
				\draw[blue!70] \convexpath{cg,cg1,cg13}{12pt};
				\node[blue] at (-3.75,1) {$X_g$};
			\end{pgfonlayer}
			
			\node[draw,circle,inner sep=0.75pt] (c1) at (-2,-1) {$c_1$};
			\foreach[count=\i] \x in {-2,-2.5,...,-3.5,-4}{
				\node[draw,circle,inner sep=0.75pt,black] (c1\i) at (\x,-2) {};
				\draw (c1) -- (c1\i);
			}
			\begin{pgfonlayer}{background}
				\draw[black!70] \convexpath{c1,c11,c15}{10pt};
				\node[] at (-3,-2.75) {$X_1$};
			\end{pgfonlayer}
			
			\node[draw,circle,inner sep=0.75pt] (c2) at (-1,-1) {$c_2$};
			\foreach[count=\i] \x in {-1}{
				\node[draw,circle,inner sep=0.75pt,black] (c2\i) at (\x,-2) {};
				\draw (c2) -- (c2\i);
			}
			\begin{pgfonlayer}{background}
				\draw[black!70] \convexpath{c2,c21}{10pt};
				\node[] at (-1,-2.75) {$X_2$};
			\end{pgfonlayer}

			\node[] at (0.4,-1) {$\cdots$};
			
			\node[draw,circle,inner sep=0.75pt] (cn) at (2,-1) {$c_n$};
			\foreach[count=\i] \x in {1,1.5,...,3}{
				\node[draw,circle,inner sep=0.75pt,black] (cn\i) at (\x,-2) {};
				\draw (cn) -- (cn\i);
			}
			\begin{pgfonlayer}{background}
				\draw[black!70] \convexpath{cn,cn5,cn1}{10pt};
				\node[] at (2,-2.75) {$X_n$};
			\end{pgfonlayer}
			
			\draw (cg) edge (c1) edge (c2) edge (cn);
		\end{tikzpicture}
		\caption{An illustration of the construction from the proof of \Cref{thm:unweighted:EF:tree:NPh}. In blue, we highlight the connecting gadget~$X_g$ with~$c-1$ leaves. Item-gadgets~$X_i$,~$i\in[N]$, are in one-to-one correspondence with the elements of~$S$.}
		\label{fig:thm:unweighted:EF:tree:NPh}
	\end{figure}
	
	For correctness, let~$\mathcal{I}$ be a \emph{yes}-instance and let~$\alpha$ be a solution allocation for~$\mathcal{I}$. We create a partition~$\pttn=(\pttn_1,\ldots,\pttn_\parts)$ such that for every~$i\in[\parts-1]$ we have~$\pttn_i = \{ V(X_j) \mid \alpha(s_j) = i \}$ and we set~$\pttn_\parts = V(X_g)$. Since~$\alpha$ is a solution allocation, all parts are of the same size. What remains to show is that~$\pttn$ is fair. Since all leaves are in the same part as all of their friends, they do not have envy towards any agent. Now, let~$c_i$ be a center of some item-gadget~$X_i$,~$i\in[N]$.~$c_i$ is in the same part as all leaves~$v^i_1,\ldots,v^i_{s_i - 1}$. Since~$s_i$ is at least two,~$c_i$ is in the same part with at least one friend. The only other part containing a friend of~$c_i$ is~$\pttn_\parts$, which contains~$c^*$. However,~$c_i$'s utility is at least as large as if it would replace some leave~$v^*_j$,~$j\in[c-1]$, in~$\pttn_\parts$. That is,~$\pttn$ is EF from the viewpoint of all item-gadgets' centers. For PROP, observe that~$\operatorname{PROP-share}(c) = s_i / \parts$, which is strictly smaller than~$s_i - 1$. Finally, we focus on connecting-gadget~$X_g$. By the same argumentation as in the case of item-gadgets,~$\pttn$ is fair for all leaves as they are in the same part as all of their friends. Hence, it remains to show that~$\pttn$ is fair for~$c^*$. Observe that~$\util_{c^*}(\pttn) = |\pttn_\parts| - 1 = c - 1$, as the whole part the agent~$c^*$ belongs to consists of its friends. Since all~$s_i$,~$i\in[N]$, are at least two, for every part~$\pttn_j$,~$j\in[\parts-1]$, we have~$\pttn_j \cap F(c^*) \leq c/2$. Hence, the agent~$c^*$ cannot improve by moving to another part, and~$\pttn$ is, therefore, EF. Next, it holds that~$\operatorname{PROP-share}(c^*) \leq (B\cdot (c/2) + c)/\parts = ((\parts-1)\cdot (c/2) + c)/\parts = ((\parts\cdot (c/2))/\parts - (c/2)/\parts + c/\parts = c/2 - c/2\parts + c/\parts = c/2 + (-c+2c)/2\parts = c/2 + c/2\parts < \util_{c^*}(\pttn)$. That is,~$\pttn$ is PROP-fair also for~$c^*$. Consequently,~$\pttn$ is EF and PROP.
	
	In the opposite direction, let~$\mathcal{J}$ be a \emph{yes}-instance and let~$\pttn$ be EF, EFX$_0$ or PROP partition. First, we show that the whole connecting-gadget forms its own part. Assume that this is not the case and let~$v^*_i$,~$i\in[c-1]$, be a leaf agent which is in a different part than the agent~$c^*$ and, without loss of generality, let~$c^*\in\pttn_\parts$. As~$F(v^*_i) = \{c^*\}$, we have~$\util_{v^*_i}(\pttn) = 0$, so agent is envious towards any non-friend in~$\pttn_\parts$. Moreover, we assumed that~$c > 4$, so there are at least three non-friends of~$v^*_i$ in~$\pttn_\parts$ and the envy cannot be eliminated by removal of any of them. Hence, we have a contradiction that~$\pttn$ is EFX$_0$. Also,~$\operatorname{PROP-share}(v^*_i) = 1/\parts > 0$. This contradicts that~$\pttn$ is EF, EFX$_0$, and PROP. Hence, all agents of~$X_g$ are in the same part. Since~$|V(X_g)| = c$, their part is actually equal to~$V(X_g)$ and, without loss of generality, for the rest of the proof, we will assume that~$\pttn_\parts = V(X_g)$. Now, we show a similar property for item-gadgets.
	
	\begin{claim}\label{thm:unweighted:EF:tree:NPh:itemGadgetsSubsetsParts}
		For every~$X_i$,~$i\in[N]$, there exist exactly one part~$\pttn_j$,~$j\in[\parts-1]$, such that~$V(X_i) \cap \pttn_j \not= \emptyset$. Consequently,~$V(X_i) \subseteq \pttn_j$.
	\end{claim}
	\begin{claimproof}
		For the sake of contradiction, assume that there is an item-gadget~$X_i$,~$i\in[N]$, such that there is at least one leaf that is not in the same part as the center~$c_i$. Without loss of generality, let the lead be~$v^i_1$. Since~$c_i$ is the only friend of~$v^i_1$, its utility is~$0$. However, if~$v^i_1$ replaces any agent except for~$c_i$ in~$c_i$'s part, its utility increases to~$1$. This contradicts that~$\pttn$ is EF. Moreover, as~$c>4$, this envy cannot be eliminated by the removal of any agent from~$c_i$'s part, so we have a contradiction that~$\pttn$ is EFX$_0$. Similarly,~$\operatorname{PROP-share}(v^i_1) = 1/\parts > 0$, which contradicts that the partition is PROP. Therefore, the whole item-gadget is necessarily in the same part.
	\end{claimproof}
	
	By \Cref{thm:unweighted:EF:tree:NPh:itemGadgetsSubsetsParts}, we get that all vertices of every item-gadgets are in the same part. Using this, we create an allocation~$\alpha$ as follows: for every~$s_i\in S$, we set~$\alpha(s_i) = j$, where~$j$ is such that~$V(X_i) \subseteq j$. Since such~$j$ is unique, the allocation is well-defined. For the sake of contradiction, assume that there exists a bin~$j\in[B]$ such that~$\sum_{s\colon \alpha(s) = j} s > c$. Then necessarily~$|\pttn_j| > c$, which contradicts that~$\pttn$ is balanced. Therefore, all items are allocated, and no bin's capacity is exceeded. Consequently,~$\alpha$ is a solution allocation; therefore,~$\mathcal{I}$ is also a \emph{yes}-instance.
	
	Finally, the construction preserves that~$\parts = B+1$, and since \probName{Unary Bin Packing} is \Wh when parameterized by~$B$, we obtain that our problem is \Wh with respect to~$\parts$. Moreover, our parameterized reduction runs in polynomial time. Hence, it is also a polynomial reduction, so the problems of our interest are also \NPc. 
\end{proof}

Observe that the tree constructed in the proof of \Cref{thm:unweighted:EF:tree:NPh} is very shallow. In particular, it can be shown that these trees have constant treedepth ($3$, to be precise), which shows a strong intractability result for this particular structural restriction and, therefore, also for more general parameters such as the celebrated treewidth of the friendship graph.

Our next result complements the lower-bound given in \Cref{thm:unweighted:EF:tree:NPh}. Specifically, we show that if~$G$ is a tree, then there exists an \XP algorithm for parameterization by the number of parts~$\parts$ for every notion of stability assumed in this work. In other words, for every constant~$\parts$, the problem can be solved in polynomial time on trees. Our algorithm is even stronger and works for any class of graphs of constant tree-width.

\begin{theorem}
\label{thm:tw:XP}
	If the utilities are binary and~$G$ is a graph of tree-width~$\tw$, then for every fairness notion~$\phi\in\{\text{EF},\text{EFX$_0$},\text{EFX},\text{EF1},\text{PROP},\text{MMS}\}$, there is an \XP algorithm parameterized by~$k$ and~$\tw$ that decides whether a partition~$\pttn$, which is fair with respect to~$\phi$, exists.
\end{theorem}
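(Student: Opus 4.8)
The plan is to design a dynamic program over a nice tree decomposition $(T,\beta,r)$ of $G$ of width $\tw$, which we may assume is given. The key observation, exploiting that utilities are binary, is that for any agent $a$ and any partition $\pttn$ the only quantities relevant to every fairness notion are the numbers $c_i^a = |\Fr(a)\cap\pttn_i|$ of friends of $a$ in each part $i\in[\parts]$ together with the final sizes $|\pttn_i|$. Indeed $\util_a(\pttn_i)=c_i^a$, and each envy comparison reduces to comparing $c_{\pttn(a)}^a$ with an expression in $c_i^a$ and $|\pttn_i|$: for EF the target value is $c_i^a-[\,c_i^a=|\pttn_i|\,]$ (a non-friend can always be removed, unless the entire part consists of friends of $a$); for EFX$_0$ it is $c_i^a-\max\{0,\,2-(|\pttn_i|-c_i^a)\}$; for EFX and EF1 (which coincide under binary utilities) it is $c_i^a-1-[\,c_i^a=|\pttn_i|\,]$ whenever $c_i^a\ge 1$ and is vacuous otherwise; and for PROP and MMS the requirement is simply $c_{\pttn(a)}^a\ge\lceil\deg(a)/\parts\rceil$, respectively $c_{\pttn(a)}^a\ge\lfloor\deg(a)/\parts\rfloor$, using the shares computed in \Cref{thm:binaryFPTbyVC}.

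Since every balanced partition has $n\bmod\parts$ parts of size $\lceil n/\parts\rceil$ and the remaining parts of size $\lfloor n/\parts\rfloor$, and since all our notions are invariant under relabeling the parts, I would fix the target sizes $s_1,\ldots,s_{\parts}$ in advance (say, the first $n\bmod\parts$ parts are the large ones). This step is crucial: it lets the algorithm know the final size $s_i$ of each part already during the run, which is exactly the information the envy check will need at a point where that size has not yet been realized.

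The DP state at a bag $x$ consists of (i) a coloring $\chi\colon\beta(x)\to[\parts]$; (ii) the current sizes $(m_1,\ldots,m_\parts)$ with $m_i\le s_i$; and (iii) for every $v\in\beta(x)$ the vector $(c_1^v,\ldots,c_\parts^v)$ of friends of $v$ counted so far in each part. There are $\parts^{\tw+1}$ colorings, at most $(n+1)^\parts$ size tuples, and at most $(n+1)^{\parts(\tw+1)}$ count configurations, so the number of states is $n^{\Oh{\parts\cdot\tw}}$, polynomial for fixed $\parts$ and $\tw$. Working with the introduce-edge variant of the decomposition, introduce-vertex nodes add an agent with zero counts, an introduce-edge node $\{a,b\}$ increments $c_{\chi(b)}^a$ and $c_{\chi(a)}^b$, join nodes add the sizes and counts of the two children (which agree on the bag), and forget nodes are where fairness is verified and states failing the appropriate inequality above are discarded. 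The algorithm accepts iff some state survives at the root $r$ (empty bag) with $m_i=s_i$ for all $i$.

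Correctness hinges on the standard fact that by the time an agent $a$ is forgotten, every edge incident to $a$ has already been processed, so the counts $c_i^a$ are \emph{final}, while the sizes $s_i$ are fixed from the outset. I expect the main obstacle to be precisely this tension between local and global information: the friend counts only become final at forget time, whereas the envy conditions depend on the final part sizes, which are not yet built in the subtree below the forget node. Pinning the target sizes down in advance (justified by balancedness together with symmetry of the notions under part relabeling) resolves it; after that, the remaining work, namely counting each edge exactly once across join nodes and checking the per-notion case analysis, is routine tree-decomposition bookkeeping.
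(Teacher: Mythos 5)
Your proposal is correct and achieves the same $n^{\Oh{\parts\cdot\tw}}$ bound, but by a genuinely different mechanism than the paper. The paper's DP runs on a plain nice tree decomposition and never fixes the part sizes: its signature $(P,f,c,s)$ carries, besides the bag coloring and the counts of already-forgotten friends and forgotten agents, a \emph{promise} function $f$ specifying for each bag agent how many of its friends will still be placed in each part above the current node; fairness of an agent is verified already at its \emph{introduce} node, conditionally on this promise, the promise is then discharged and compensated at forget and join nodes, and balancedness is imposed only in the root query. You go the opposite way: you fix the target sizes $s_1,\ldots,s_\parts$ up front (sound, by balancedness plus invariance of all six notions under relabeling of parts), move to the introduce-edge variant of the decomposition, drop the promises, and defer each agent's fairness check to its unique \emph{forget} node, where its friend counts are final and the final sizes are known by fiat. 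Your route buys exactness on the one genuinely delicate point of the binary-utility case analysis: whether a target part consists entirely of friends of $a$ (your indicator $[\,c_i^a=s_i\,]$), which shifts the EF/EFX$_0$/EFX/EF1 thresholds by one. The paper's fairness checks are phrased purely in terms of friend counts per part (``agents care only about the number of their friends in each part''), so this size-dependent case is exactly where its argument is most fragile, while in your scheme it is a comparison of two numbers both known at check time. What the paper's route buys is a table that is agnostic to the size vector --- balancedness enters only at the root --- which it later reuses verbatim for the hedonic-game generalization with prescribed coalition sizes; your version handles that setting equally well, but by re-instantiating the targets with the given sizes rather than by reading a different root cell. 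The steps you defer (subtracting the bag contribution when combining sizes at join nodes, counting each edge exactly once) are indeed routine.
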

\begin{proof}
	\newcommand{\DP}{\operatorname{DP}}
	The algorithm is, as is usual in similar problems, based on leaf-to-root dynamic programming along the nice tree decomposition. Recall that, if a nice tree decomposition is not given as part of the input, we can find a tree decomposition of the optimal width in time~$2^\Oh{k^2}\cdot n^\Oh{1}$~\cite{KorhonenL2023} and turn it into a nice one in linear time~\cite{CyganFKLMPPS2015}.
	
	The computation is formally defined as follows. In every node~$x\in V(T)$ of the nice tree decomposition, we store a dynamic programming table~$\DP_x[P,f,c,s]$, where
	\begin{itemize}
		\item~$P\colon \beta(x)\to[\parts]$ is a partition of bag agents into coalitions,
		\item~$f\colon\beta(x)\times[\parts]\to[n]$ is a function returning for each bag agent~$a$ and each part~$i$ the number of future friends of~$a$ assigned to part~$\pttn_i$, and
		\item~$c\colon\beta(x)\times[\parts]\to[n]$ is a function returning for each bag agent~$a$ and each part~$i$ the number of already forgotten neighbors of~$a$ allocated to coalition~$i$, and
		\item~$s\colon [\parts]\to[n]$ is a function assigning to each part~$i$ the number of already forgotten agents in the sub-tree rooted in~$x$ that are assigned to~$\pttn_i$.
	\end{itemize}
	We call the quadruplet~$(P,f,c,s)$ a \emph{signature}. The table~$\DP_x$, for a signature~$(P,f,c,s)$, stores \texttt{true} if and only if there exists a \emph{partial~$\parts$-partition} (not necessarily balanced)~$\pttn^x_{P,f,c,s}$ of~$V^x$ (whenever the signature~$(P,f,c,s)$ and/or the node~$x$ are clear from the context, we refer to~$\pttn^x_{P,f,c,s}$ simply as~$\pttn^x$ or~$\pttn$, respectively) such that
	\begin{enumerate}
		\item for every~$a\in\beta(x)$ we have~$\pttn(a) = P(a)$,
        \item for every~$a\in\beta(x)$ and every~$i\in[\parts]$ we have~$|(\pttn_i \cap F(a))\setminus\beta(x)| = c(a,i)$,
        \item for every~$a\in\beta(x)$ it holds that~$\sum_{i=1}^\parts c(a,i) + f(a,i) = |F(a)\setminus\beta(x)|$,
        \item for every~$i\in[\parts]$ we have~$|\pttn_i\setminus\beta(x)| = s(i)$,
        \item for every~$a\in V^x\setminus\beta(x)$ the partition~$\pttn$ is fair for~$a$ with respect to~$\phi$, and
        \item for every~$a\in \beta(x)$ the partition~$\pttn$ is fair for~$a$ with respect to~$\phi$, assuming that each part~$i\in[\parts]$ contains~$f(a,i)$ additional friends of~$a$.
	\end{enumerate}
	Otherwise, the stored value is \texttt{false}. It is easy to see that the size of~$\DP_x$ is~$\parts^\Oh{\tw} \cdot n^\Oh{\parts\cdot\tw}\cdot n^\Oh{\parts\cdot\tw} \cdot n^\Oh{\parts} = n^\Oh{\parts\cdot\tw}$, which is clearly in \XP. Once the dynamic programming table for the root node~$r$ is correctly computed, we read the solution just by checking whether~$\DP_r[ \emptyset, \emptyset, \emptyset, (\lceil n/\parts \rceil,\ldots,\lceil n/\parts \rceil, \lfloor n/\parts \rfloor,\ldots,\lfloor n/\parts \rfloor) ]$ is set to \texttt{true}, where we slightly abuse the notation and use~$\emptyset$ to denote an empty function. Observe that if it is the case, then there exists a~$\parts$-partition~$\pttn^x$ of~$V^r = V$ such that it is fair with respect to~$\phi$ for every agent~$a\in V$ and the parts are balanced.
	
	In the rest of the proof, we describe the computation of the dynamic programming table separately for each type of node of the tree decomposition. Along with the description, we also prove that the computation is correct and that for each table, we spend only~$n^\Oh{\parts\cdot\tw}$ time, which will conclude the proof.
	
	\paragraph{Leaf Node.} The bag of every leaf node~$x$ is, by definition, empty. Therefore, almost all signature elements have an empty domain, except for the size function~$s$, which must be a constant zero.
	\begin{equation*}
		\DP_x[ P, f, c, s ] = \begin{cases}
		    \texttt{false} & \text{if } \exists i \in [k]\colon s(i) \not= 0 \text{ and}\\
            \texttt{true}  & \text{otherwise.}
		\end{cases}
	\end{equation*}
	
	Now, we show that computation in leaf nodes is indeed correct.
	
	\begin{claim}\label{lem:tw:leaf:correct}
		$\DP_x[P,f,c,s] = \texttt{true}$ if and only if there exists a corresponding partial~$k$-partition~$\pttn^x$.
	\end{claim}
	\begin{claimproof}
		First, assume that~$\DP_x[P,f,c,s] = \texttt{true}$. From the definition of the computation, the size function~$s$ is a constant zero. Therefore, if we set~$\pttn^x = (\emptyset,\ldots,\emptyset)$, we indeed obtain a solution, as all parts are of the correct size and~$\bigcup_{i\in[\parts]} \pttn^x_i = \emptyset = V^x$.

        In the opposite direction, let there exist a partial~$\parts$-partition~$\pttn$ corresponding to a signature~$(P,f,c,s)$. Since~$V^x = \emptyset$, we must have~$s(i) = 0$ for every~$i\in[\parts]$. However, in this case, our dynamic programming table stores~$\DP_x[P,f,c,s] = \texttt{true}$ according to the definition of the computation.
	\end{claimproof}
	
    When computing the table for a leaf node~$x$, we just need to try all possible size functions~$s$, and for each of them, we check whether~$s = \mathbf{0}$ or not. Since the domain of~$s$ is of cardinality~$\parts$, we need at most~$\parts$ comparisons to determine the correct value. Therefore, the computation for a leaf node takes~$\Oh{n^\parts \cdot \parts}$ time.
	
	\paragraph{Introduce Node.} In introduce nodes, we try to put the introduced agent~$a$ in every possible part~$\pi_i$,~$i\in[\parts]$. For every possibility, we just verify whether this is fair for the agent~$a$. For the remaining agents, the verification can be easily done by checking the dynamic programming table of the child~$y$. This is implemented as follows.

    \begin{equation*}
        \DP_x[P,f,c,s] = \begin{cases}
            \texttt{false} & \text{if } \exists i\in[k]\colon c(a,i) \not= 0,\\
            \texttt{false} & \text{if } \sum_{i=1}^k f(a,i) = |F(a)\setminus\beta(x)|,\\
            \texttt{false} & \text{if } \operatorname{\phi-fair}(a, P, f) = \texttt{false}\text{, and}\\
            \DP_y[ \left.P\right\vert_{\beta(x)\setminus\{a\}}, f', \left.c\right\vert_{(\beta(x)\times[k]) \setminus (\{a\}\times[k])}, s ] & \text{otherwise,}
        \end{cases}
    \end{equation*}
    where~$f'$ is a function with~$\mathcal{D}(f') = \beta(x)\setminus\{a\}$ such that~$f'(b,i) = f(b,i)$ whenever~$b\not\in F(a)$ or~$i\not= P(a)$, and~$f'(b,i) = f(b,i) + 1$ otherwise.
	
	\begin{claim}\label{lem:tw:intro:correct}
		Let~$(P,f,c,s)$ be a signature and assume that the dynamic programming table~$\DP_y$ is computed correctly.
        Then~$\DP_x[P,f,c,s] = \texttt{true}$ if and only if there exists a partial~$\parts$-partition~$\pttn^x$ corresponding to the signature~$(P,f,c,s)$.
	\end{claim}
    \begin{claimproof}
        First, assume that~$\DP_x[P,f,c,s] = \texttt{true}$ and let~$i = P(a)$. Then, by the definition of the computation,~$c(a,\bullet) = 0$,~$\sum_{i=1}^{\parts} f(a,i) = |F(a)\setminus\beta(x)|$,~$\operatorname{\phi-fair}(a,P,f,c) = \texttt{true}$, and~$\DP_y[P',f',c',s] = \texttt{true}$, where~$P' = \left.P\right\vert_{\beta(x)\setminus\{a\}}$,~$f'(b,j) = f(b,j)$ whenever~$b\notin F(a)$ or~$j\not= i$ and~$f'(b,i) = f(b,i) + 1$ otherwise, and~$c' = \left.c\right\vert_{(\beta(x)\times[k]) \setminus (\{a\}\times[k])}$. Since we assume that~$\DP_y$ is computed correctly, for the signature~$(P',f',c',s)$, there exists a corresponding partial~$\parts$-partition~$\pttn^y$. Now, we create a partial~$\parts$-partition~$\pttn^x$ by setting~$\pttn^x_j = \pttn^y_j$ for every~$j\in[\parts]\setminus\{i\}$ and~$\pttn^x_i = \pttn^y_i \cup \{a\}$, and we claim that it corresponds to the signature~$(P,f,c,s)$.
        \begin{enumerate}
            \item For every~$b\in\beta(x)\setminus\{a\}$, we have~$\pttn^x(b) = \pttn^y(b) = P'(b) = P(b)$. For agent~$a$ the property trivially holds as we set~$\pttn^x(a) = P(a)$ during the construction of~$\pttn^x$.

            \item Let~$b\in\beta(x)$ and~$j\in[\parts]$. If~$b=a$, then~$|(\pttn^x_j\cap F(a))\setminus\beta(x)| = |\emptyset| = 0 = c(a,j)$, since there cannot be an already forgotten friend of~$a$ in any part by the definition of the tree decomposition. If~$b\not=a$, then the number of already forgotten friends of~$b$ in each part remains the same, so the property follows directly from the fact that~$c'(b,j) = c(b,j)$.

            \item Let~$b\in\beta(x)$ be a bag agent. If~$b=a$, condition~$|F(a)\setminus\beta(x)| = \sum_{j=1}^{\parts} c(a,j) + f(a,j)$ trivially holds, since~$c(a,\bullet) = 0$ and~$\sum_{j=1}^{\parts} f(a,j) = |F(a)\setminus\beta(x)|$ -- if any of these properties would be false, then the dynamic programming table stores \texttt{false} by the first or second condition in the definition of the computation. Hence, assume that~$b\not=a$ and~$b\not\in F(a)$. Since~$a$ is not a friend of~$b$, then~$f' = f$,~$c'=c$, and the sets of friends in each part remain the same. That is, the property is satisfied for such~$b$. Finally, let~$b\in F(a)$. Then we have~$|F(b)\setminus\beta(x)| = |F(b)\setminus(\beta(y)\cup\{a\})| = |F(b)\setminus\beta(y)| - 1$. We substitute~$\sum_{j=1} c'(b,j) + f'(b,j)$ for~$|F(b)\setminus\beta(y)|$ and obtain~$(\sum_{j=1} c'(b,j) + f'(b,j)) - 1$. However, we have that~$c'(b,\bullet) = c(b,\bullet)$,~$f'(b,j) = f(b,j)$ for every~$j\not=i$, and~$f'(b,i) = f(b,i) + 1$. We can therefore rewrite the equation to~$(\sum_{j=1} c(b,j) + f(b,j)) + 1 - 1$, which is~$\sum_{j=1} c(b,j) + f(b,j)$.

            \item The number of already forgotten agents is still the same in every part and so is~$s$ in both signatures. Property therefore follows from~$\pttn^y$ having exactly~$s(j)$ forgotten agents in every part~$j$.

            \item[5.+6.] In every fairness notion studied in this work, agents care only about the number of their friends in each part. Therefore, the fairness of no~$b\in V^x$ such that~$b\not\in F(a)$ is affected by the assignment of~$a$ to~$\pttn^x_i$, meaning that if~$\pttn^y$ is (conditionally)~$\phi$-fair, so is~$\pttn^x$ (recall that~$f'(b,\bullet) = f(b,\bullet)$ for every~$b\not\in F(a)$). For~$a$,~$\pttn^x$ is conditionally~$\phi$-fair, otherwise the third condition in the definition of the computation of~$\DP_x[P,f,c,s]$ would be valid and the stored value would be \texttt{false}, which is a contradiction. Therefore, the only case we need to consider is when~$b\in\beta(x)\cap F(a)$. Then,~$f'(b,j) = f(b,j)$ for every~$j\not= i$ and~$f(b,i) = f(b,i) - 1$. That is, we decreased the number of promised future agents in~$f$ by one. However, we extended~$\pttn^y_i$ by adding~$\{a\}$, so the overall number of~$b$'s friends in each part is still the same as in~$\pttn^y$ (considering~$f$).
        \end{enumerate}
        Overall,~$\pttn^x$ satisfies all the required properties, showing that it corresponds to the signature~$(P,f,c,s)$. That is, the left-to-right implication is correct.

        In the opposite direction, let a partial~$\parts$-partition~$\pttn^x$ correspond to~$(P,f,c,s)$ and let~$i = P(a)$. Since the agent~$a$ is introduced in node~$x$,~$V^x$ contains no already forgotten friend of~$a$. Therefore,~$c(a,\bullet)$ must be zero and it must also holds that~$\sum_{j=1}^{\parts} f(a,j) = |F(a)\setminus\beta(x)|$, otherwise~$\pttn^x$ breaks property 3. Finally,~$\pttn^x$ is~$\phi$-fair for~$a$ with respect to the promise~$f$ by property 6. By the definition of the computation in~$\DP_x$, the value of~$\DP_x[P,f,c,s]$ is computed as of case 4 of the recurrence. We now show that~$\DP_y[P',f',c',s]$, where~$P' = \left.P\right\vert_{\beta(x)\setminus\{a\}}$,~$f'(b,j) = f(b,j)$ whenever~$b\notin F(a)$ or~$j\not= i$ and~$f'(b,i) = f(b,i) + 1$ otherwise, and~$c' = \left.c\right\vert_{(\beta(x)\times[k]) \setminus (\{a\}\times[k])}$, is \texttt{true} and therefore, also~$\DP_x[P,f,c,s] = \texttt{true}$. We construct~$\pttn^y$ by removing~$a$ from~$\pttn^x$ and prove that~$\pttn^y$ corresponds to the signature~$(P',f',c',s)$ in~$y$. We again verify all the properties. First, the bag agents are assigned to the same parts in both~$\pttn^x$ and~$\pttn^y$ and~$P(b) = P'(b)$ for every~$b\in\beta(y)$, so the first property is clearly satisfied. As~$V^x\setminus\beta(x) = V^y\setminus\beta(y)$ we have that 
        \begin{enumerate}
            \item[2.] the number of forgotten friends for every bag agent~$b$ is the same in~$\pttn^x$ and~$\pttn^y$,~$c(b,\bullet) = c'(b,\bullet)$, so the second property follows,
            \item[3.] for every bag agent~$b\in\beta(y)$ and every~$j\in[\parts]$ we have~$|(\pttn^y_j\cap F(b))\setminus\beta(x)| = |(\pttn^x_j\cap F(b))\setminus\beta(x)| = c(b,j) = c'(b,j)$,
            \item[4.] the number of forgotten agents in each part remains the same and we use the same function~$s$ in both signatures, and
            \item[5.] no already forgotten agent is a friend of~$a$, meaning that the removal of~$a$ cannot make them envious.
        \end{enumerate}
        Finally, for the last property, it remains to observe that for every~$b\in\beta(y)$ which is non-friend of~$a$, the promise is the same and fairness is not affected by the presence of~$a$ in~$\pttn^x_i$ and~$\pttn^y_i$, respectively. For~$b\in F(a)$, the partition~$\pttn^x$ is~$\phi$-fair. In~$\pttn^y$, agent~$b$ lose one friend in part~$\pttn^y_i$. On the other hand, if we set~$f'(b,i) = f(b,i) + 1$, we compensate for this loss and~$\pttn^y$ is necessarily (conditionally)~$\phi$-fair for~$b$ with respect to~$f'$. That is,~$\pttn^y$ corresponds to the signature~$(P',f',c',s)$ in~$y$ and, since we assume that~$\DP_y$ is computed correctly,~$\DP_y[P',f',c',s]$ must be set to \texttt{true}. Consequently, also~$\DP_x[P,f,c,s]$ is set to \texttt{true}, which finishes the proof.
    \end{claimproof}

    For the running time, in every cell of~$\DP_x$, we verify three conditions and question at most one cell of~$\DP_y$. This can be done in linear time. As there are~$n^\Oh{\parts\cdot\tw}$ different cells, the computation of~$\DP_x$ for an introduce node~$x$ takes~$n^\Oh{\parts\cdot\tw}$ time.
	
	\paragraph{Forget Node.} A node~$x$ forgetting an agent~$a\not\in\beta(x)$ has exactly one child~$y$ such that~$\beta(y) = \beta(x) \cup \{a\}$. Intuitively, we check whether there exists at least one part~$i\in[k]$ such that if we allocate~$a$ to~$\pttn_i$, the partition is fair and all neighbors of~$a$ are present in~$V^x$. Such information can be easily read from the dynamic programming table of the child~$y$. Formally, the computation is defined as follows.
	
	\begin{multline*}
	    \DP_x[ P, f, c, s ] = 
            \bigvee_{i\in[k]}
            \bigvee_{\substack{
                (c_1,\ldots,c_{\parts})\\
                \sum_{j=1}^{\parts} c_j = |F(a)\setminus\beta(x)|
            }}\\
            \DP_y[ P \Join (v \mapsto i), f \Join ((a,\bullet) \mapsto 0), c \Join ((a,j)\mapsto c_j) \Join ( F(a)\times \{i\} \mapsto (c(\bullet,i) - 1) ),\\ s \Join ( i \mapsto ( s(i) - 1 ) ) ],
	\end{multline*}
    where, for a function~$g$, we use~$g \Join (x \mapsto y)$ to denote a function~$g'$ which is almost identical to~$g$ with the only difference being that for the value~$x$,~$g'$ returns the image~$y$.
	
	\begin{claim}\label{lem:tw:forget:correct}
		Let~$(P,f,c,s)$ be a signature and assume that the dynamic programming table~$\DP_y$ is computed correctly.
        Then~$\DP_x[P,f,c,s] = \texttt{true}$ if and only if there exists a partial~$\parts$-partition~$\pttn^x$ corresponding to the signature~$(P,f,c,s)$.
	\end{claim}
    \begin{claimproof}
        First, assume that~$\DP[P,f,c,s] = \texttt{true}$. Then, by the definition of the computation, there exist~$i\in[k]$ and~$(c_1,\ldots,c_\parts)$ such that~$\sum_{j=1}^\parts c_j = |F(a)\setminus\beta(x)|$ such that~$\DP_y[P',f',c',s'] = \texttt{true}$, where~$P' = P\Join(v\mapsto i)$,~$f' = f \Join ((a,\bullet)\mapsto 0)$,~$c' = c \Join ((a,j) \mapsto c_j) \Join (F(a)\times\{i\} \mapsto (c(\bullet,i)-1))$, and~$s' = s \Join (i \mapsto (s(i) - 1))$. Since we assumed that~$\DP_y$ is computed correctly, there exists a partial~$\parts$-partition~$\pttn^y$ compatible with~$(P',f',c',s')$ in~$y$. We set~$\pttn^x = \pttn^y$ and claim that~$\pttn^x$ is a partial~$\parts$-partition corresponding to~$(P,f,c,s)$. We need to verify all conditions from the definition of partial~$\parts$-partition.
        \begin{enumerate}
            \item Let~$b\in\beta(x)$ be a bag agent. The function~$P'$ returns for every~$a\in\beta(x)$ the value~$P(a)$ and since we take~$\pttn^x = \pttn^y$, it holds that~$\pttn^x(b) = \pttn^y(b) = P'(b) = P(b)$.
            
            \item Let~$b\in\beta(x)$ be a bag agent. If~$b\not\in F(a)$, then trivially~$|(\pttn^x_j\cap F(b))\setminus\beta(x)| = |(\pttn^y_i\cap F(b))\setminus\beta(y)\cup\{a\}|$ for every~$j\in[\parts]$. This can be simplified as~$|(\pttn^y_j\cap F(b))\setminus\beta(y)|$ since~$a\not\in F(b)$. This value is equal to~$c'(b,j)$ which is, by the definition of~$c'$, equal to~$c(b,j)$. The argumentation is the same for every~$b\in F(a)$ and~$j\in[\parts]\setminus\{i\}$. Finally, let~$b\in F(a)$ and~$j=i$. Then, we have~$|(\pttn^x_i \cap F(b))\setminus\beta(x)| = |((\pttn^y_i) \cap F(b))\setminus(\beta(y)\setminus\{a\})| = |((\pttn^y_i) \cap F(b))\setminus(\beta(y))| + |\{a\}| = c'(b,i) + 1 = c(b,i) - 1 + 1 = c(b,i)$, showing that the property is satisfied.
            
            \item Let~$b\in\beta(x)$ be a bag agent. If~$b\not\in F(a)$, then~$|F(b)\setminus\beta(x)| = |F(b)\setminus\beta(y)\setminus\{a\}| = |F(b)\setminus\beta(y)| = \sum_{j=1}^{\parts} c'(b,j) + f(b,j) = \sum_{j=1}^{\parts} c(b,j) + f(b,j)$. Therefore, the interesting case is when~$b\in F(a)$. Then, we have~$|F(b)\setminus\beta(x)| = |F(b)\setminus(\beta(y)\setminus\{a\})| = |F(b)\setminus\beta(y)| + |\{a\}| = (\sum_{j=1}^{\parts} c'(b,j) + f(b,j)) + 1$. For all~$j\in[\parts]\setminus\{i\}$, we have~$c'(b,j) = c(b,j)$ and we can split the sum into two parts~$\sum_{j\in[\parts]\setminus\{i\}}^{\parts} c'(b,j) + f(b,j)$ and~$c'(b,i) + f(b,i)$. Overall, we obtain~$(\sum_{j\in[\parts]\setminus\{i\}}^{\parts} c'(b,j) + f(b,j)) + c'(b,i) + f(b,i) + 1$ and by substituing~$c'(b,i) = c(b,i) - 1$, we obtain~$(\sum_{j\in[\parts]\setminus\{i\}}^{\parts} c'(b,j) + f(b,j)) + c(b,i) - 1 + f(b,i) + 1 = (\sum_{j\in[\parts]\setminus\{i\}}^{\parts} c'(b,j) + f(b,j)) + c(b,i) + f(b,i) = \sum_{j=1}^{\parts} c(b,j) + f(b,j)$, as desired.

            \item Let~$j\in[\parts]$ be a part. If~$j\not= i$, then~$s'(j)=s(j)$ and there is no new already forgotten agent in~$\pttn_j$. Therefore, let~$j = i$. Then agent~$a$ is newly forgotten and we have~$|\pttn^x_i\setminus\beta(x)| = |\pttn^y_i\setminus(\beta(y)\setminus\{a\})| = |\pttn^y_i\setminus\beta(y)| + |\{a\}| = s'(i) + 1$ and since~$s'(i) = s(i) - 1$, by substituting we obtain that~$|\pttn^x_i\setminus\beta(x)| = s(i) - 1 + 1 = s(i)$.

            \item Let~$b\in V^x\setminus\beta(x)$ be a forgotten agent. For every~$b\not= a$, the partition~$\pttn^y$ is~$\phi$-fair and so is~$\pttn^x$. For~$a$, the partition~$\pttn^y$ is conditionally~$\phi$-fair assuming that~$f'(a,j)$ friends of~$a$ are added to each part~$j\in[\parts]$. However,~$f'(a,\bullet) = 0$, so both~$\pttn^y$ and~$\pttn^x$ are unconditionally~$\phi$-fair for~$a$.

            \item For every~$b\in\beta(x)$ the conditional~$\phi$-fairness of~$\pttn^x$ trivially holds since~$f(b,\bullet) = f'(b,\bullet)$,~$\pttn^x = \pttn^y$, and~$\pttn^y$ is conditionally~$\phi$-fair for every~$b\in\beta(y)$.
        \end{enumerate}
        Consequently,~$\pttn^x$ is indeed a partial~$\parts$-partition corresponding to the signature~$(P,f,c,s)$, and the left to right implication therefore holds.

        In the opposite direction, let~$\pttn^x$ be a partial~$\parts$-partition corresponding to a signature~$(P,f,c,s)$ and~$i = \pttn^x(a)$. We construct~$\pttn^y = \pttn^x$ and claim that~$\pttn^y$ is a partial~$\parts$-partition corresponding to the signature~$(P',f',c',s')$, where~$P' = P \Join (a \mapsto i)$,~$f' = f \Join ((a,\bullet) \mapsto 0)$,~$c' = c \Join ( (a,j) \mapsto (\pttn^x_j\cap F(a)\setminus\beta(x) ) \Join (F(a)\times\{i\} \mapsto c(\bullet,i)-1)$, and~$s' = s \Join (i \mapsto (s(i) - 1))$, in~$y$. We again verify all the properties needed for~$\pttn^y$ to be partial~$\parts$-partition corresponding to~$(P',f',c',s')$ in~$y$.
        \begin{enumerate}
            \item Let~$b\in\beta(y)$ be a bag agent. If~$b = a$, then trivially~$\pttn^y(a) = i = P'(a)$ by the definition of~$P'$. If~$b\not= a$, then~$P'(b) = P(b) = \pttn^x(b) = \pttn^y(b)$.

            \item Let~$b\in\beta(y)$ be a bag agent and~$j\in[\parts]$ be a part. First, let~$b = a$. Then, we have~$|\pttn^y_j \cap F(a) \setminus\beta(y)| = |\pttn^x_j \cap F(b) \setminus (\beta(x)\cup\{a\})| = |\pttn^x_j \cap F(b) \setminus \beta(x)| = c'(a,j)$ by the definition of~$c'$ for~$a$. Next, let~$b\not= a$. If~$b\not\in F(a)$ or~$j\not=i$, then~$c'(b,\bullet) = c(b,\bullet)$ and either~$a$ is not a friend of~$b$ or~$a$ not in~$\pttn^y_j$, so the second condition trivially holds. Finally, let~$b\in F(a)$ and~$j=i$. Then~$|\pttn^y_i \cap F(b) \setminus\beta(y)| = |\pttn^x_i \cap F(b) \setminus (\beta(x)\cup\{a\})| = |\pttn^x_i \cap F(b) \setminus \beta(x)| - 1 = c(b,i) - 1$ and by the definition of~$c(b,i) = c'(b,i) + 1$, we have~$|\pttn^y_i \cap F(b) \setminus\beta(y)| = c'(b,i) + 1 - 1 = c'(b,i)$.

            \item Let~$b\in\beta(y)$ be a bag agent. If~$b=a$, then~$|F(a)\setminus\beta(y)| = |F(a)\setminus\beta(x)| = \sum_{j=1}^{\parts} |\pttn^x_j\cap F(a)\setminus\beta(x)| = \sum_{j=1}^{\parts} c'(a,j) = \sum_{j=1}^{\parts} c'(a,j) + 0 = \sum_{j=1}^{\parts} c'(a,j) + f'(a,j)$. For~$b\not=a\not\in F(a)$, we have~$|F(b)\setminus\beta(y)| = |F(b)\setminus\beta(x)| = \sum_{j=1}^{\parts} c(b,j) + f(b,j)$ and since in this case,~$c'(b,j) = c(b,j)$ and~$f'(b,j) = f(b,j)$, we clearly obtain~$|F(b)\setminus\beta(y)| = \sum_{j=1}^{\parts} c'(b,j) + f'(b,j)$. Finally, if~$b\in F(a)$, then~$|F(b)\setminus\beta(y)| = |F(b)\setminus(\beta(x)\cup\{a\})| = |F(b)\setminus\beta(x)| - 1 = (\sum_{j=1}^{\parts} c(b,j) + f(b,j)) - 1$. However,~$f'(b,j) = f(b,j)$,~$c(b,j) = c'(b,j)$ for every~$j\not=i$, and~$c(b,j) = c'(b,j) + 1$ and by substituting for~$c$ and~$f$, and taking~$+1$ outside of the sum, we obtain that the equation is equal to~$\sum_{j=1}^{\parts} c'(b,j) + f'(b,j)$.

            \item Let~$j\in[\parts]$ be a part. It must hold that~$|\pttn^y_j\setminus\beta(y)| = s'(j)$. First, let~$i=j$. Then,~$|\pttn^y_j\setminus\beta(y)| = |\pttn^x_j\setminus(\beta(x)\cup\{a\})| = |\pttn^x_j\setminus\beta(x)| - 1 = s(j) - 1 = s'(j)$. If~$j\not=i$, then~$a$ is never a member of~$\pttn^y_j$, so the equality follows from the fact that~$s'(j) = s(j)$.

            \item[5.+6.] The partition~$\pttn^y = \pttn^x$ is (conditionally)~$\phi$-fair for every agent in~$V^x\setminus\{a\}$ and we keep all promises valid. For~$a$, the partition~$\pttn^x$ is unconditionally~$\phi$-fair. Since we use~$f'(a,\bullet) = 0$, the partition~$\pttn^y$ is necessarily (conditionally)~$\phi$-fair for~$a$ with respect to~$f'$.
        \end{enumerate}
        Since, by our assumptions,~$\DP_y$ is computed correctly,~$\DP_y[P',f',c',s']$ must be set to~$\texttt{true}$. Therefore, also~$\DP_x[P,f,c,s]$ necessarily stores \texttt{true}, as the cell~$\DP_y[P',f',c',s']$ is considered when computing the value of~$\DP_x[P,f,c,s]$ by the definition of the computation.
    \end{claimproof}

    In every cell, for every~$i\in[\parts]$, we try all vectors~$(c_1,\ldots,c_\parts)$. This gives us~$\parts\cdot n^\Oh{\parts} = n^\Oh{\parts}$ possibilities. Each possibility can be checked in constant time by questioning~$y$'s table. As there are~$n^\Oh{\parts\cdot\tw}$ different cells, the overall computation of the dynamic programming table~$\DP_x$ for a forget node~$x$ takes~$n^\Oh{\parts\cdot\tw}$ time.
	
	\paragraph{Join Node.} Join node~$x$ has exactly two children~$y$ and~$z$ and it holds that~$\beta(x) = \beta(y) = \beta(z)$. Moreover, as each bag forms a separator in~$G$, we have that~$V^y \cap V^z = \beta(x)$, that is, no agent forgotten in~$V^y$ appears in~$V^z$ and additionally, all paths from~$V^y$ to~$V^z$ necessarily traverse~$\beta(x)$. Therefore, for a signature~$(P,f,c,s)$, we just try all possible distributions of already forgotten vertices between the sub-trees rooted in~$y$ and~$z$. Formally, the computation is defined as follows.  
	\begin{equation*}
		\DP_x[ P, f, c, s ] = 
        \bigvee_{\substack{
            f_y,f_z,c_y,c_z\colon \beta(x)\times[k]\to[n]\\
            c_y + c_z = c\\
            f_z - c_y = f_y - c_z = f
        }}
        \bigvee_{\substack{
            s_y,s_z\colon [k]\to[n]\\
            s_y + s_z = s
        }}
			\left(
				\DP_y[ P, f_y, c_y, s_y ] \land
				\DP_z[ P, f_z, c_z, s_z ]
			\right)
	\end{equation*}
	
	\begin{claim}\label{lem:tw:join:correct}
        Let~$(P,f,c,s)$ be a signature and assume that the dynamic programming tables~$\DP_y$ and~$\DP_z$ are computed correctly.
        Then~$\DP_x[P,f,c,s] = \texttt{true}$ if and only if there exists a partial~$\parts$-partition~$\pttn^x$ corresponding to the signature~$(P,f,c,s)$.
	\end{claim}
	\begin{claimproof}
        First, suppose that, for the signature~$(P,f,c,s)$, the dynamic programming table~$\DP_x$ stores \texttt{true}. Then, by the definition of computation, there exist~$f_y$,~$f_z$,~$c_y$,~$c_z$,~$s_y$, and~$s_z$ such that for every~$a\in\beta(x)$ and~$i\in[\parts]$ we have~$c_y(a,i) + c_z(a,i) = c(a,i)$ and~$f_z(a,i) - c_y(a,i) = f_y(a,i) - c_z(a,i) = f(a,i)$,~$s_y(i) + s_z(i) = s(i)$, and~$\DP_y[P,f_y,c_y,s_y] = \DP_z[P,f_z,c_z,s_z] = \texttt{true}$. By our assumptions that~$\DP_y$ and~$\DP_y$ are computed correctly, there must exist~$\pttn^y$ and~$\pttn^z$ corresponding to the signature~$(P,f_y,c_y,s_y)$ in~$y$ and~$(P,f_z,c_z,s_z)$ in~$z$, respectively. We define a partial~$\parts$-partition~$\pttn^x$ such that we set~$\pttn_i^x = \pttn^y_i \cup \pttn^z_i$ for every~$i\in[\parts]$ and claim that~$\pttn^x$ corresponds to the signature~$(P,f,c,s)$. Next, we verify that~$\pttn^x$ indeed satisfies all the required properties.

        \begin{enumerate}
            \item Since the partition of the bag agents in all three partial~$\parts$-partitions is the same, each bag agent is assigned to exactly one part~$P(a)$.

            \item Let~$a\in\beta(x)$ be a bag agent and~$i\in[\parts]$ be a part. By construction, we have~$|(\pttn^x_i\cap F(a))\setminus\beta(x)| = |((\pttn^y_i \cup \pttn^z_i) \cap F(a))\setminus \beta(x)| = |((\pttn^y_i \cap F(a)\setminus \beta(y))| + |((\pttn^z_i \cap F(a)\setminus \beta(z))|$, where adjustments are possible due to the facts that~$V^y\setminus\beta(y) \cap V^y\setminus\beta(z) = \emptyset$ and~$\beta(x) = \beta(y) = \beta(z)$. The last expression can be substituted to~$c_y(a,i) + c_z(a,i)$, which is exactly~$c(a,i)$ by our conditions.

            \item Let~$a\in\beta(x)$ be a bag agent. We want to check that~$\sum_{i=1}^k c(a,i) + f(a,i) = |F(a)\setminus\beta(x)|$. The left part of the equation can be rewritten as~$\sum_{i=1}^\parts c_y(a,i) + c_z(a,i) + f_y(a,i) - c_z(a,i)$ which can be simplified to~$\sum_{i=1}^\parts c_y(a,i) + f_y(a,i)$. Since~$\pttn^y$ corresponds to the signature~$(P,f_y,c_y,s_y)$, this is exactly~$|F(a)\setminus\beta(y)| = |F(a)\setminus\beta(x)|$.

            \item Let~$i\in[\parts]$ be a part. Then we have~$|\pttn^x_i\setminus\beta(x)| = |(\pttn^y_i \cup \pttn^z_i)\setminus\beta(x)| = |\pttn^y_i\setminus\beta(y)| + |\pttn^z_i\setminus\beta(z)|$, and by the correctness of~$\pttn^y$ and~$\pttn^z$, we can substitute to obtain~$s_y(i) + s_z(i)$. However, this is exactly equal to~$s(i)$.

            \item All friends of each already forgotten agent~$a\in V^x\setminus\beta(x)$ are assigned the same part as in~$\pttn^y$ or~$\pttn^z$, so the fairness for these agents remains satisfied.

            \item Let~$a\in\beta(x)$ be a bag agent. Since~$\pttn^y$ is a partial~$\parts$-partition corresponding to~$(P,f_y,c_y,s_y)$, agent~$a$ must consider~$\pttn^y$ fair if each part~$i\in[\parts]$ contains~$f_y(a,i)$ additional friends of~$a$. The value of~$f(a,i)$ is equal to~$f_y(a,i) - c_z(a,i)$, which means that the promise of future agents added to~$i$ is lowered; however, each part~$i$ contains exactly~$c_z(a,i)$ new already forgotten friends of~$a$ who compensate for the loss in~$f(a,i)$. Consequently, the overall number of~$a$'s friends in every part is the same and therefore~$\pttn^x$ is, assuming~$f$, also necessarily~$\phi$-fair.
        \end{enumerate}
        As we verified all the properties, we see that~$\pttn^x$ is indeed a partial~$\parts$-partition corresponding to the signature~$(P,f,c,s)$. So, the left-to-right implication is correct.
        
        In the opposite direction, let there exist a partial partition~$\pttn^x$ corresponding to the signature~$(P,f,c,s)$. We define two partial~$\parts$-partitions~$\pttn^y$ and~$\pttn^z$ such that we set~$\pttn^y_i = \pttn^x_i \cap V^y$ and~$\pttn^z_i = \pttn^x_i \cap V^z$, respectively. We claim that~$\pttn^y$ corresponds to a signature~$(P,f_y,c_y,s_y)$ in~$y$ and~$\pttn^z$ corresponds to a signature~$(P,f_z,c_z,s_z)$ in~$z$, respectively, such that for all~$a\in\beta(x)$ and~$i\in[\parts]$ it holds that~$f_y(a,i) = f(a,i) + |(\pttn^x_i \cap F(a))\setminus V^y|$,~$c_y(a,i) = |(\pttn^y_i\cap F(a))\setminus\beta(y)|$,~$s_y(i) = |(\pttn^x_i \cap V^y)\setminus\beta(x)|$, and similarly for~$f_z$,~$c_z$, and~$s_z$. If these indeed hold, then, for the signature~$(P,f,c,s)$, our dynamic programming table necessarily stores \texttt{true} by the definition of the computation, which finishes the proof. We verify the properties for~$\pttn^y$ and~$(P,f_y,c_y,s_y)$, as for~$\pttn^z$ and~$(P,f_z,c_z,s_z)$, the properties follow by symmetric arguments. First, the correctness of the partitioning of bag agents is trivial. Let~$a\in\beta(y)$ be a bag agent and~$i\in[\parts]$ be a part. We have~$|(\pttn^y_i \cap F(a))\setminus\beta(y)|$, which is exactly equal to~$c_y(a,i)$ by the definition of~$c_y$. Next,~$\sum_{i=1}^{\parts} c_y(a,i) + f_y(a,i) = \sum_{i=1}^{\parts} |(\pttn^y_i \cap F(a))\setminus\beta(y)| + f(a,i) + |(\pttn^x_i \cap F(a))\setminus V^y| = \sum_{i=1}^{\parts} |(\pttn^y_i \cap F(a))\setminus\beta(y)| + |(\pttn^x_i \cap F(a))\setminus V^y| + f(a,i)= \sum_{i=1}^{\parts} |(\pttn^x_i \cap V^y \cap F(a))\setminus\beta(y)| + |(\pttn^x_i \cap F(a))\setminus V^y| + f(a,i)$. As both sets are disjoint, we can merge them into one counting and we obtain~$\sum_{i=1}^{\parts} |(\pttn^x_i \cap F(a))\setminus \beta(x)| + f(a,i)$. However,~$|(\pttn^x_i \cap F(a))\setminus \beta(x)|$ is exactly~$c(a,i)$, so we get~$\sum_{i=1}^{\parts} |(\pttn^x_i \cap F(a))\setminus \beta(x)| + f(a,i) = \sum_{i=1}^{\parts} c(a,i) + f(a,i) = |F(a)\setminus\beta(x)|$. Let~$i\in[\parts]$ be a part. The number of already forgotten vertices in~$\pttn^y_i$ is~$|\pttn_i^y\setminus\beta(y)| = |(\pttn^x_i\cap V^y) \setminus \beta(x)|$, which is exactly how we defined~$s_y(i)$. The partition~$\pttn^y$ is also clearly~$\phi$-fair for every already forgotten agent~$a\in V^y\setminus\beta(y)$, as none of their friends is removed from its original part in~$\pttn^x$. For bag vertices, the loss in numbers of friends in each part is compensated in the promise~$f$, so it again follows from the fact that~$\pttn^x$ is conditionally~$\phi$-fair with respect to~$f$. It remains to show that our algorithm necessarily checked this combination of~$(P,f_y,c_y,s_y)$ and~$(P,f_z,c_z,s_z)$. First, we verify that~$c_y + c_z = c$. Let~$a\in\beta(x)$ be an agent and~$i\in[\parts]$ be a part. If we substitute for~$c_y$ and~$c_z$, we obtain~$|(\pttn^y_i\cap F(a))\setminus\beta(y)| + |(\pttn^z_i\cap F(a))\setminus\beta(z)| = |((\pttn^x_i\cap V^y)\cap F(a))\setminus\beta(x)| + |((\pttn^x_i\cap V^z)\cap F(a))\setminus\beta(x)|$. By the definition of the tree decomposition,~$V^y \cap V^z = \beta(x)$, so we can rewrite the expression as~$|(((\pttn^x_i \cap V^y) \cup (\pttn^x_i \cap V^z))\cap F(a))\setminus\beta(x)| = |(\pttn^x_i\cap F(a))\setminus\beta(x)|$, which is~$c(a,i)$ by the assumption that~$\pttn^x$ is valid partial~$\parts$-partition corresponding to~$(P,f,c,s)$. Next, we check that~$f_y(a,i) - c_z(a,i) = f_z(a,i) - c_y(a,i) = f(a,i)$ for every~$a\in\beta(x)$ and~$i\in[\parts]$. First, we substitute for~$f_y$ and~$c_z$ and obtain~$f(a,i) + |(\pttn^x_i \cap F(a))\setminus V^y| - |(\pttn^y_i\cap F(a))\setminus\beta(y)| = f(a,i) + |(\pttn^x_i \cap F(a))\setminus V^y| - |(\pttn^z_i\cap F(a))\setminus\beta(z)| = f(a,i) + |(\pttn^x_i\setminus V^y) \cap (F(a)\setminus V^y)| - |(\pttn^z_i\setminus\beta(x))\cap(F(a)\setminus\beta(x))|$. We replace~$\pttn^x_i\setminus V^y$ with~$\pttn^z_i\setminus\beta(x)$ to get~$f(a,i) + |(\pttn^z_i\setminus \beta(x)) \cap (F(a)\setminus V^y)| - |(\pttn^z_i\setminus\beta(x))\cap(F(a)\setminus\beta(x))|$. Now we can notice that~$V^y$ plays no role in the first expression, so it is equivalent to~$f(a,i) + |(\pttn^z_i\setminus \beta(x)) \cap F(a)| - |(\pttn^z_i\setminus \beta(x)) \cap F(a)| = f(a,i)$. So, the property holds and for~$f_z$ and~$c_y$, it is enough to follow the symmetric argument. Finally, we verify that~$s_y(i) + s_z(i) = s(i)$. We can substitute for~$s_y$ and~$s_z$ and get~$|(\pttn^x_i \cap V^y)\setminus\beta(x)| + |(\pttn^x_i \cap V^z)\setminus\beta(x)|$. This is clearly equal to~$|\pttn^x_i\setminus\beta(x)|$ as~$V^y\cap V^z = \emptyset$, and this is exactly~$s(i)$. That is, our algorithm checks this combination of signatures and, as both~$\DP_y[P,f_y,c_y,s_y]$ and~$\DP_x[P,f_z,c_z,s_z]$ are set to \texttt{true}, also sets~$\DP_x[P,f,c,s]$ to \texttt{true}.
    \end{claimproof}
	
	In every cell, we try all functions~$c_y$ (the choice of~$c_y$ completely determines~$c_z$),~$f_y$, and~$s_y$. This gives us~$n^\Oh{\parts\cdot\tw}\cdot n^\Oh{\parts\cdot\tw}\cdot n^k = n^\Oh{\parts\cdot\tw}$ possibilities. Each possibility can be checked in constant time by questioning the children's tables. As there are~$n^\Oh{\parts\cdot\tw}$ different cells, the overall computation of the dynamic programming table~$\DP_x$ for a join node~$x$ takes time~$n^\Oh{\parts\cdot\tw}$.
	
	\medskip
	
	The overall correctness of the algorithm follows by the induction over the types of the tree decomposition, as formalized in \Cref{lem:tw:leaf:correct,lem:tw:intro:correct,lem:tw:forget:correct,lem:tw:join:correct}. We have also shown that the computation in each node takes~$n^\Oh{\tw\cdot\parts}$ time, which concludes the proof.
\end{proof}

Since the family of trees is the only class of connected graphs of tree-width one, we directly obtain the desired result.

\begin{corollary}\label{thm:tree:XP}
     If~$G$ is a tree and the utilities are binary, then for every fairness notion~$\phi\in\{\text{EF},\text{EFX}_0,\text{EFX},\text{EF1},\text{PROP},\text{MMS}\}$, there is an~\XP algorithm parameterized by~$k$ that decides whether a partition which is fair with respect to~$\phi$ exists.
\end{corollary}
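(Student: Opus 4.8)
The plan is to obtain the statement as an immediate specialization of \Cref{thm:tw:XP}. The key observation, already noted in the sentence preceding the corollary, is that trees are precisely the connected graphs of tree-width one: any tree admits a tree decomposition in which each bag consists of the two endpoints of a single edge, so $\tw(G) = 1$, and conversely no graph containing a cycle has tree-width one. Hence, for a tree $G$, the parameter $\tw$ appearing in \Cref{thm:tw:XP} is the constant $1$.

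First I would invoke \Cref{thm:tw:XP} with $\tw = 1$. Its running time $n^{\Oh{\parts\cdot\tw}}$ then collapses to $n^{\Oh{\parts}}$, which is exactly of the form $n^{\Oh{f(\parts)}}$ required for membership in \XP when parameterized by $\parts$ alone. Since the hypotheses match—the utilities are binary and $\phi$ ranges over the same set $\{\text{EF},\text{EFX}_0,\text{EFX},\text{EF1},\text{PROP},\text{MMS}\}$—the algorithm of \Cref{thm:tw:XP} decides the existence of a $\phi$-fair partition within this bound, yielding the claimed \XP algorithm parameterized by $\parts$.

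The only point that warrants a brief remark is that a suitable nice tree decomposition of width one must be available to feed into the dynamic program; this is not an obstacle, since for a tree such a decomposition can be constructed directly in linear time (root $G$ arbitrarily and give each non-root vertex a bag together with its parent), after which the machinery cited in \Cref{thm:tw:XP} for turning a width-$w$ decomposition into a nice one applies. I do not anticipate any genuine difficulty here: the corollary is a pure instantiation of the treewidth result, and all the substantive work—correctness of the leaf, introduce, forget, and join recurrences together with the $n^{\Oh{\parts\cdot\tw}}$ time bound—has already been discharged in the proof of \Cref{thm:tw:XP}.
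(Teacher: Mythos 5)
Your proposal is correct and follows exactly the paper's route: the corollary is obtained as a direct instantiation of \Cref{thm:tw:XP} with $\tw = 1$, collapsing the running time $n^{\Oh{\parts\cdot\tw}}$ to $n^{\Oh{\parts}}$. The paper's own proof is the one-line observation that trees are the connected graphs of tree-width one; your additional remarks on constructing a (nice) width-one decomposition in linear time are fine but not needed beyond what \Cref{thm:tw:XP} already handles.
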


It is not hard to see that our above algorithms almost directly work for any symmetric and objective utilities -- not necessarily binary. In fact, with a little bit more of effort, the algorithm can be generalized to settings where the co-domain of every utility function is of constant size.

None of the previous results rule out better algorithms for weaker fairness notions such as EFX, EF1, or MMS. The next set of results shows strong existence guarantees and tractable algorithms that can find a fair outcome for these fairness notions, even if we assume the most general types of valuations; namely, we show that fair outcomes are guaranteed to exist for arbitrary monotone utilities.

We give an algorithm that outputs~$k$-partition that is MMS and EFX (end hence EF1) for any monotone utility function. In addition, if the utilities are additive, this algorithm can be implemented in linear time. 

The main idea of the algorithm is to root each tree in the forest in some arbitrary vertex and then process the agents
in a BFS order. When processing an agent~$a$, we are computing a ``preliminary'' assignment of all the children of the agent and finalizing the assignment of~$a$ to the parts of~$\pi$. 
Loosely speaking, we first determine the list~$\vec{\ell} = (\ell_1,\ldots, \ell_k)$, where~$\ell_i$,~$i\in [k]$, determines how many of~$|\text{children}(a)|$ many newly assigned agents should be assigned to~$\pi_i$ in order to keep the parts always balanced.
Note that, since we are keeping ``being balanced'' as an invariant, then~$|\ell_i-\ell_j|\le 1$ for all~$i,j\in [k]$. 
Afterwards,~$a$ decides on (1) whether to stay in the same part of~$\pi$ or change to~$\pi(p_a)$, where~$p_a$ is the parent of~$a$ (if this change is allowed by~$\vec{\ell}$), 
and (2) how to distribute the children among the parts according to~$\vec{\ell}$. 
If~$a$ was originally in~$\pi_j$ and chooses to change to~$\pi_i = \pi(p_a)$, then to preserve balancedness, we have to add~$\ell_i-1$ children of~$a$ to~$\pi_i$ and ~$\ell_j+1$ children of~$a$ to~$\pi_j$.
Since (a) the number of neighbors of~$a$ in each part is balanced, (b) agent~$a$ is choosing which children are in the same part with it, and (c) agent~$a$ is allowed to change to the part of its parent, it is not too difficult to show that this partial assignment is MMS and EFX for agent~$a$. 
From this point onward fairness for agent~$a$ is guaranteed: any change between parts for the children of~$a$ can only increase the utility of~$a$, hence MMS and EFX for~$a$ are preserved.

We formalize this more in the proof of the following theorem.

\begin{theorem}\label{thm:monotone:MMS:exists}
	If~$G$ is a forest and~$k\in \mathbb{N}$, there always exist a~$k$-partition that is MMS and EFX, even if the utility functions are monotone. Moreover, if the utilities are additive, we can find such a partition in linear time.
\end{theorem}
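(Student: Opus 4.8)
The plan is to root every tree of the forest $G$ at an arbitrary vertex and to process the agents in a breadth-first order, maintaining throughout a balanced \emph{partial} partition together with a preliminary placement of the not-yet-finalized agents. When an agent $a$ with parent $p_a$ is processed, I would do three things. First, compute the vector $\vec{\ell}=(\ell_1,\dots,\ell_k)$ recording, for each part $\pi_i$, how many of the $|\text{children}(a)|$ children must be added to $\pi_i$ so that all parts stay within one of each other in size (so $|\ell_i-\ell_j|\le 1$). Second, let $a$ decide whether to remain in its preliminary part $\pi_j$ or to migrate to its parent's part $\pi(p_a)$, choosing whichever leaves more friends of $a$ in $a$'s own part; if it migrates from $\pi_j$ to $\pi_i=\pi(p_a)$, compensate for the size shift by assigning $\ell_i-1$ children to $\pi_i$ and $\ell_j+1$ children to $\pi_j$. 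Third, distribute the children according to the (possibly compensated) $\vec{\ell}$, loading the most valuable children of $a$ into $a$'s own part. Two invariants travel along the recursion: the partition is always balanced, and every already-finalized agent remains satisfied with respect to both MMS and EFX.

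The first key step is a \emph{local lemma}: at the moment $a$ is finalized, the current partition is MMS and EFX from $a$'s point of view. The friends of $a$ are exactly $p_a$ (pinned to the fixed part $\pi(p_a)$) and the children (freely placed by $a$ subject to $\vec{\ell}$). Since $\vec{\ell}$ is balanced and $a$ grabs the better of its two admissible parts, I would show that the value of $a$'s friends in $a$'s own part is within one of the maximum over all other parts; a short case analysis on whether $a$ stays or migrates and on which of $\ell_j,\ell_i$ is larger absorbs the $\pm1$ shifts caused by the pinned parent and by the migration compensation. EFX for $a$ then follows directly from its definition (even after deleting the envied agent $b$ and the least-valued friend from a rival part, $a$ does not strictly prefer that part). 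MMS for $a$ follows because $a$ itself places its children and may co-locate with its parent, so it can realize a part that is a maximum-value part for $a$; for additive utilities this has value at least the average $u_a(F(a))/k \ge \operatorname{MMS-share}(a)$, and in general it is no worse than the weakest part of $a$'s own MMS-optimal distribution. For binary utilities this is just the statement that $a$'s part receives at least $\lfloor\deg(a)/k\rfloor$ friends; for additive and then monotone utilities the counting bounds are replaced by the corresponding domination arguments, using that $a$ controls the assignment of its children and can therefore keep its own bundle undominated.

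The second key step is a \emph{persistence lemma} ensuring that finalizing later agents never destroys the fairness already secured for $a$. Because a child $c$ of $a$ is only ever allowed to stay put or to move into its parent's part $\pi(a)$, friends of $a$ can only \emph{enter} $\pi(a)$ and can only \emph{leave} the other parts; moreover the agents introduced by rebalancing are descendants of $c$, hence non-friends of $a$, which do not affect $u_a$ (utilities depend only on the friends present in a part). Thus $u_a(\pi(a))$ is non-decreasing while $u_a(\pi_{i'})$ for every rival part is non-increasing as the algorithm proceeds, so the MMS inequality and every EFX inequality for $a$ are preserved. Combining the two lemmas by induction over the BFS order shows that the partition returned at the end is simultaneously MMS and EFX for every agent, and EFX implies EF1 by the taxonomy established in \Cref{sec:fairness}.

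The main obstacle I anticipate is exactly the bookkeeping in the local lemma: reconciling the size-driven vector $\vec{\ell}$ (whose placement of the ``extra'' children is dictated by the current part sizes, not by $a$'s preferences) with the single friend pinned at $\pi(p_a)$ and with the unit shift created when $a$ migrates, all while establishing the ``within one of the maximum'' bound in every combination of cases. Once this is settled, the additive running time is routine: each agent is touched once, and its local step -- forming $\vec{\ell}$, comparing the two candidate parts, and greedily loading its own part with the heaviest children -- costs time linear in $\deg(a)$, so the total is linear in the size of $G$ and hence of the input.
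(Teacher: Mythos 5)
Your proposal follows essentially the same route as the paper's proof: BFS processing of rooted trees, the balance vector $\vec{\ell}$, the stay-or-migrate decision with child compensation, greedy loading of the agent's own part with its most valuable children, and the persistence argument that children may later only move \emph{into} their parent's part, so the finalized agent's utility in its own part never decreases while rival parts never improve. One small caution: your additive-MMS shortcut via the average $\util_a(\Fr(a))/k$ is not sound when $a$'s part receives only $\lfloor \deg(a)/k \rfloor$ friends (e.g., three unit-value friends and $k=2$, where the top single friend has value $1 < 3/2$), but your fallback domination argument against the weakest part of an MMS-optimal split of $\Fr(a)$ — which is exactly what the paper uses — closes this.
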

\begin{proof}
    We follow the main idea of the proof described above. Let~$n = |V(G)|$ and let~$\vec{b} = (b_1, b_2, \ldots, b_k)$ be an initial ``budget'' vector such that (1)~$b_i\in \{\lfloor \frac{n}{k}\rfloor, \lceil \frac{n}{k}\rceil\}$, for all~$i\in [k]$, and (2)~$\sum_{i\in [k]}b_k = n$. We will use the budget vector~$\vec{b}$ to keep track of how many of the unassigned vertices need to be assigned to each of the parts. Since part are symmetric, we can chose arbitrary vector~$\vec{b}$ that satisfies (1) and (2) as the initial budget vector. Throughout the execution of the algorithm, we will be assigning agents to the parts of partition~$\pi$ iteratively and we will preserve the invariant that both parts of (partially assigned) partition
    and the remaining budget kept in the budget vector are balanced. 
    
    Now in every tree~$T$ of~$G$, we choose arbitrary vertex~$r_T$ as the root of~$T$. We will now process~$G$ in a BFS order. 
    When we process a root~$r_T$, then: 
    \begin{enumerate}
        \item We assign~$r_T$ to a part~$i$ such that~$b_i=\max_{j\in [k]}b_j$.
        \item We compute an assignment vector~$\vec{\ell}= (\ell_1, \ldots, \ell_k)$ as follows: 
        \begin{enumerate}
            \item We initiate~$\ell_i=0$ for all~$i\in [k]$.
            \item In a for cycle with a range of the number of children of~$r_T$, we pick~$i$ such that~$b_i=\max_{j\in [k]}b_j$, then we increase~$\ell_i$ and decrease~$b_i$ by one.
        \end{enumerate}
        \item For~$i\in [k]$ such that~$r_T$ is assigned to~$\pi_i$, we find a subset~$S$ of the children of~$r_T$ such that~$|S| = \ell_i$ that maximizes~$\util_{r_T}(S)$.
        \item We assign agents in~$S$ to~$\pi_i$.
        \item For all~$j\in [k]$,~$j\neq i$, we assign arbitrary~$\ell_j$ many children that have not yet been assigned to~$\pi_j$.
    \end{enumerate}
    Let us first observe, that we always distribute the children to the parts in a balanced way. 
    \begin{claim}
        Let~$\vec{b}^1$ be the budget vector before we processed~$r_T$ and~$\vec{b}^2$ the budget vector after processing~$r_T$. If for all~$i,j\in [k]$ it holds~$|b^1_i - b^1_j|\le 1$, then for all~$i,j\in [k]$ it holds~$|b^2_i - b^2_j|\le 1$ and~$|\ell_i - \ell_j|\le 1$.
    \end{claim}
    \begin{claimproof}
        This straightforwardly follows from the procedure how~$\vec{\ell}$ is computed, always picking the largest part~$i$ in the budget and moving one from that part in budget vector~$\vec{b}$ to the assignment vector~$\vec{\ell}$. This makes~$i$ a smallest part that is now only assigned when each one of the largest parts is assigned again.  
    \end{claimproof}
    
    With the above claim, it is rather straightforward to verify that since we chose the subset of size~$\ell_i$ that maximizes the utility of the agent~$r_T$ and any other part gets at most~$\ell_i+1$ neighbors of~$r_T$, we get that the assignment satisfies the condition for EFX. We not that while the assignment to the children is not final yet, only operation that is allowed on children of~$r_T$ is to move it from some different part to~$\pttn(r_T)$, which only increases the utility of~$r_T$ for its part and decreases its utility of other parts. 
    \begin{claim}\label{clm:trees_EFX_root}
    After we assign~$r_T$ and its children as above, it holds for all~$b\in \agents$ and every~$c\in (\pttn(b)\setminus b)\cap \Fr(r_T)$ that
   ~$\util_{r_T}(\pttn(r_T)) \geq \util_{r_T}(\pttn(b)\setminus\{b,c\})$.
    \end{claim}
    \begin{claimproof}
        Let~$\pttn_i = \pttn(r_T)$ and~$\pttn_j = \pttn(b)$. From the definition of~$\vec{\ell}$, it follows that~$|\ell_i-\ell_j| \le 1$. Since~$c\in \pttn_j\cap \Fr(r_T)$, it follows that~$|(\pttn_j\setminus \{b,c\})|\le \ell_j-1 \le \ell_i$. The claim then follows from the fact that~$\util_{r_T}$ is monotone, that~$\util_{r_T}(\pttn(r_T))\ge \util_{r_T}(\pttn(S))$, and the selection of~$S$. 
    \end{claimproof}
    
    Similarly, we observe that in any partition of~$\Fr(r_T)$, at least one part has size at most~$\ell_i$, and we get that~$r_T$ is getting at least its MMS share. 
    \begin{claim}\label{clm:trees_MMS_share_root}
       After we assign~$r_T$ and its children as above, it holds that~$\util_{r_T}(\pttn(r_T)) \geq \operatorname{MMS-share}(a)$.
    \end{claim}
    \begin{claimproof}
        Let~$\pttn_i = \pttn(r_T)$ and observe that~$\ell_i \in \{\lfloor\frac{|\Fr(r_T)|}{k}\rfloor, \lceil\frac{|\Fr(r_T)|}{k}\rceil\}$. Now observe that every~$k$-partition~$\pi'$ of~$\Fr(r_T)$ contains a part of~$\pi'_j$ size at most~$\lceil\frac{|\Fr(r_T)|}{k}\rceil$. Since~$\util_{r_T}$ is monotone, it follows from the choice of~$S$ that~$\util_{r_T}(S) \ge \util_{r_T}(\pi'_j)$ and hence~$\util_{r_T}(\pttn(r_T)) = \util_{r_T}(S) \geq \operatorname{MMS-share}(r_T)$.
    \end{claimproof}
    
    Finally, we show that we can process~$r_T$ in linear time in the case of additive utilities. 
    \begin{claim}\label{clm:linear_time_root}
        If~$\util_{r_T}$ is additive, then we can process a root~$r_T$ in~$\Oh{|\Fr(r_T)|}$ time. 
    \end{claim}
    \begin{claimproof}
        To achieve linear time, we would start with the initial budget vector~$\vec{b}$ such that all values~$\lceil \frac{n}{k}\rceil$ are before ~$\lfloor \frac{n}{k}\rfloor$ and we would keep a pointer~$\operatorname{idx}$ that initially points to~$1$. This way, whenever we are looking for~$b_i$ with maximum value in step 1. or 2.(b), we can return~$b_{\operatorname{idx}}$ and increase~$idx$ by one modulo~$k$ (i.e.,~$idx := (idx \mod k) + 1$). It follows that we can perform the steps 1. and 2. in~$\Oh{|\Fr(r_T)|}$ time. To find~$S$ in the case of additive utilities, we only need to find~$\ell_i$ many largest valued agents in~$\Fr(r_T)$, which can be done in linear time~\cite{BlumFPRT73}. The distribution of agents in~$\Fr(r_T)$ to parts in step 5. can also be straightforwardly done in time~$\Oh{|\Fr(r_T)|}$.
    \end{claimproof}
    
    Now, when we process any other agent~$a\in \agents$, its parent~$p_a$ is already assigned to a part~$\pi_i = \pttn(p_a)$, the agent~$a$ is assigned to a part~$\pi_j = \pttn(a)$, and none of its children is assigned to any part.
    The processing of~$a$ is very similar to the processing of the root. 
    We again compute the vector~$\vec{\ell}$ and update the budget vector~$\vec{b}$ in the exactly the same manner as for a root in step 2.. Then we compute the set~$S$ as in step 3. However, if~$i\neq j$ and~$\ell_i \ge 1$, then we also compute a subset~$S'$ of the children of~$a$ such that~$|S'| = \ell_i - 1$ and the utility~$\util_{a}(S'\cup \{p_a\})$ is maximized. If~$\util_{a}(S) \ge \util_{a}(S'\cup \{p_a\})$, we proceed identically to the case of dealing with the root. Otherwise we replace steps 4. and 5. with the following 
    \begin{enumerate}
        \item[4'.] 
        \begin{enumerate}
            \item We move~$a$ from~$\pttn_j$ to~$\pttn_i$. 
            \item We assign agents in~$S'$ to~$\pttn_i$.
        \end{enumerate}
        \item[5'.]  
        \begin{enumerate}
            \item We assign arbitrary~$\ell_{i}+1$ many not yet assigned children to~$\pttn_{i'}$.
            \item For all~$i'\in ([k]\setminus\{i,j\})$ we assign arbitrary~$\ell_{i'}$ many not yet assigned children to~$\pttn_{i'}$
        \end{enumerate}
    \end{enumerate}
    
    Note that it the case of a non-root agent, after computation of~$\vec{\ell}$, it can happen that~$\ell_i = \ell_j+1$ and hence the number of agents in~$\Fr(a)$ that we would be assigned to~$\pi_i=\pttn(p_a)$ might be by two larger than the number of agents in~$\Fr(a)$ to be assigned to~$\pi_j = \pttn(a)$. However, in this case~$|S'|= |S|$ and hence ~$\util_{a}(S) \le \util_{a}(S'\cup \{p_a\})$ by the definition of~$S$ and~$S'$. We can now show the analogous claims for the utility of the agent~$a$ after we finish processing it. 
    
    \begin{claim}\label{clm:trees_EFX_ineer}
    After we assign~$a$ and its children as above, it holds for all~$b\in \agents$ and every~$c\in (\pttn(b)\setminus b)\cap \Fr(a)$ that
   ~$\util_{a}(\pttn(a)) \geq \util_{a}(\pttn(b)\setminus\{b,c\})$.
    \end{claim}
    \begin{claimproof}
    If we do not move~$a$ to the part of~$p_a$, then the proof here is basically analogous to the case when we processed a root. The only case we need to consider separately in this case is if~$\pttn(b) = \pttn(p_a)$. In this case we observe that~$|(\pttn(b)\setminus\{b,c,p_a\})\cap \Fr(a)|\le |S'|$ and by the choice of~$S$ and~$S'$ it follows~$\util_{a}(S)\ge \util_{a}(S') \ge \util_{a}(\pttn(b)\setminus\{b,c\})$ and the claim holds.
    
    On the other hand, when~$a$ moves from~$\pttn_j$ to~$\pttn_i = \pttn(p_a)$ is again relatively straightforward. Note that~$\util_{a}(S) < \util_{a}(S'\cup \{p_a\})$ and following the proof of the first case we get that if~$b$ is not in~$\pttn_i$ nor~$\pttn_j$, then~$\util_{a}(\pttn(a))\geq \util_{a}(S'\cup \{p_a\}) \geq \util_{a}(S) \geq \util_{a}(\pttn(b)\setminus\{b,c\})$.
    Finally, if~$b\in \pttn_j$, then~$|\pttn(b)\setminus\{b,c\}|\le \ell_j = |S|$ and by how~$S$ and~$S'$ were chosen, we again get~$\util_{a}(\pttn(a))\geq \util_{a}(S'\cup \{p_a\}) \geq \util_{a}(S) \geq \util_{a}(\pttn(b)\setminus\{b,c\})$.
    \end{claimproof}
    
    \begin{claim}\label{clm:trees_MMS_share_inner}
       After we assign~$a$ and its children as above, it holds that~$\util_{a}(\pttn(a)) \geq \operatorname{MMS-share}(a)$.
    \end{claim}
    \begin{claimproof}
    If number of neighbors of~$a$ in~$\pttn(a)$ is at least~$\lfloor\frac{|\Fr(a)|}{k}\rfloor$, then the proof follows analogously to the case when processing the root. Moreover, we can easily assume that this is the case. Only case in which this would not hold is when~$\ell_j = \frac{|\Fr(a)|}{k}-1$, and for all~$i'\in [k]$,~$i'\neq j$, we have~$\ell_{i'} = \frac{|\Fr(a)|}{k}$ (that is~$\frac{|\Fr(a)|}{k}+1$ many neighbors are in~$\pttn(p_a)$). However, as discussed above, in this case~$\util_{a}(S) \le \util_{a}(S'\cup \{p_a\})$ and either~$a$ actually moved to~$\pttn(p_a)$ and~$\frac{|\Fr(a)|}{k}$ many neighbors of~$a$ are in~$\pttn(a)$ or~$\util_{a}(S) = \util_{a}(S'\cup \{p_a\})$ and we can assume that this happened for the purpose of comparing with the MMS share of~$a$. 
    \end{claimproof}
    
    \begin{claim}\label{clm:linear_time_inner}
        If~$\util_{a}$ is additive, then we can process the vertex~$a$ in~$\Oh{|\Fr(a)|}$ time. 
    \end{claim}
    \begin{claimproof}
    The proof is analogous to the case when~$a$ is a root. Only additional computation here is to compute~$S'$, which is basically the same computation as that of~$S$ and can be done in~$\Oh{|\Fr(a)|}$ time. Besides that, we only need to compare the value of choosing~$S$ versus~$S'$ and assign the agents to the correct parts of the partition, which can also be easily done in~$\Oh{|\Fr(a)|}$ time.
    \end{claimproof}
    
    Note that the above discussion implicitly assumes that the vertex we are processing has children. However, leaves have only one neighbor, so their MMS share is zero, and removing the neighbor from a part results in a value of zero for that part. It follows that also leaves satisfy all the notions without further processing and always. Moreover, it is rather easy to see (from how we construct assignment vector~$\vec{\ell}$ for the children) that when we finish assigning all the vertices to parts, then the budget vector~$\vec{b}$ is the zero-vector and the resulting~$k$-partition is balanced. Hence, the theorem follows.
\end{proof}

\section{Beyond Balanced Sizes: Implications for Hedonic Games}\label{sec:discussion}

So far, we have investigated the setting where all the parts are almost the same size. One can argue that this requirement is too restrictive and fails to capture many real-life scenarios. Returning to our initial motivational example, it may be easily the case that not all tables are the same size, but we more likely have~$\parts$ tables with possibly variable capacities~$n_1,\ldots,n_\parts$. In this section, we discuss how our results generalize to the more general model where the desired sizes of all parts are given as part of the input.

The model described above with parts of prescribed sizes fits in the setting of \emph{additively separable hedonic games}~\cite{BogomolnaiaJ2002} (ASHGs for short). Generally, in ASHGs, there is an edge-weighted graph~$G$ over the set of agents. The utility some agent~$a$ has if it is assigned to a group of agents~$S$ is then computed simply as the sum of weights of all edges between~$a$ and other members of~$S$ and the goal here is to find a ``desirable'' partitioning of all agents into pair-wise disjoint groups. What does ``desirable'' mean in this context depends on the particular application and is formalized using some well-defined mathematical property; usually, we are interested in \emph{stable} partitions, where no agent (or group of agents) can deviate from the current partition and improve its utility, partitions that are optimizing some notion of economic \emph{efficiency}, such as the sum of all utilities, or combination of both. See, e.g., \cite{AzizBS2013} for a survey on solution concepts in ASHGs. Some papers also consider fairness to be the desirable solution concept in coalition formation~\cite{AzizBS2013,WrightV2015,Peters2016a,Peters2016b,Ueda2018,BarrotY2019,KerkmannNR2021}.

The most relevant modification of ASHGs for our work are the recently introduced \emph{additively separable hedonic games with fixed-size coalitions} (ASHGs-FSC for short), introduced recently by \citeauthor{BiloMM2022}~\cite{BiloMM2022}. These authors extended the basic ASHGs with a vector of size~$\parts$ that prescribe not only the number of coalitions we are allowed to create, but also the size of each coalition. Formally, they define the model as follows.

\begin{definition}
    An \emph{additively separable hedonic game with fixed-size coalitions} (ASHG-FSC for short) is a triple~$\Gamma=(H,\parts,\vec{n}=(n_i)_{i\in[\parts]})$, where~$H=(V,F,\wFn)$ is a friendship graph,~$\parts\in\N$ is the number of coalitions, and~$\vec{n}$ is a tuple of~$k$ coalition sizes such that~$n_1 \geq n_2 \geq \cdots \geq n_k$ and~$\sum_{i\in[\parts]} n_i = n$.
\end{definition}

The solutions concepts adopted by \citeauthor{BiloMM2022} are those of stability and utilitarian social welfare. However, it is easy to see that any notion of fairness introduced in this work for fair partitioning of friends can be considered, without any modification, also in the context of ASHG-FSC. Moreover, we can observe that, for every fairness notion~$\phi$, fair partitioning of friends is a special case of ASHG-FSC.

\begin{observation}
    Let~$\phi\in\{\text{PROP},\text{MMS},\text{EF},\text{EFX}_0,\text{EFX},\text{EF1}\}$ be a desired fairness notion and~$\mathcal{I}=(G,(\util_a)_{a\in\agents},\parts)$ be an instance of fair partitioning of friends with additive utilities. Then,~$\mathcal{I}$ admits a~$\phi$-fair balanced~$\parts$-partition~$\pi$ if and only if ASHG-FSC~$\Gamma = (H=(A,E,\wFn),\parts,(n_i)_{i\in[\parts]})$, where~$\wFn(\{a,b\}) = \util_a(b)$ and~$n_i = \lfloor n/\parts \rfloor$ if~$i \leq (n\bmod k) + 1$ and~$n_i = \lceil n/\parts \rceil$ otherwise, admits~$\phi$-fair partition~$P$.
\end{observation}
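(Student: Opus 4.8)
The plan is to show that the constructed ASHG-FSC $\Gamma$ and the original instance $\mathcal{I}$ have, in a precise sense, the same admissible partitions and the same utilities, so that a fair partition of one immediately yields a fair partition of the other. The only genuine difference between the two models is that in $\mathcal{I}$ a balanced partition merely requires every part to have size in $\{\lfloor n/\parts\rfloor,\lceil n/\parts\rceil\}$, whereas in $\Gamma$ the $i$-th part must have exactly the prescribed size $n_i$. My first step is therefore to observe that the chosen sizes $(n_i)_{i\in[\parts]}$ form exactly the multiset of part sizes forced by balancedness, namely $n\bmod\parts$ copies of $\lceil n/\parts\rceil$ together with $\parts-(n\bmod\parts)$ copies of $\lfloor n/\parts\rfloor$; this is a direct arithmetic check against the definition of $n_i$.

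The second step is to verify that utilities agree. For additive $\util$ and any group $S\subseteq\agents$, the utility of $a$ in $\mathcal{I}$ is $\util_a(S)=\sum_{b\in S}\util_a(b)$, while in $\Gamma$ the additively separable utility of $a$ for the coalition $S$ is $\sum_{b\in S\setminus\{a\}}\wFn(\{a,b\})$. Since $\wFn(\{a,b\})=\util_a(b)$ by construction and $\util_a(a)=0$, the two quantities coincide for every agent and every group. Consequently the value an agent assigns to any potential coalition---and hence every quantity entering the definitions of EF, EFX$_0$, EFX, EF1, PROP and MMS (including each agent's $\operatorname{PROP-share}$ and $\operatorname{MMS-share}$)---is identical in the two instances, provided the collection of partitions over which any such share is computed is the same.

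With these two observations in hand, the equivalence follows because all of the fairness notions are invariant under relabeling of the parts. For the forward direction, given a $\phi$-fair balanced $\parts$-partition $\pttn$ of $\mathcal{I}$, its part sizes form precisely the multiset $(n_i)_i$, so I permute the indices of $\pttn$ so that the $i$-th part has size $n_i$; this relabeled partition respects $\vec{n}$ and, by label-invariance together with the agreement of utilities, remains $\phi$-fair, yielding a solution $P$ of $\Gamma$. For the converse, any $\phi$-fair partition $P$ of $\Gamma$ has all part sizes in $\{\lfloor n/\parts\rfloor,\lceil n/\parts\rceil\}$ and is therefore balanced, and the agreement of utilities makes it $\phi$-fair for $\mathcal{I}$ as well.

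The one point that needs care---and the main (if modest) obstacle---is the MMS case, where $\operatorname{MMS-share}(a)$ is itself a maximum of $\min_{i}\util_a(\pttn'_i)$ over a class of partitions. Here I would argue that the class of balanced $\parts$-partitions of $\mathcal{I}$ and the class of partitions of $\Gamma$ respecting $\vec{n}$ coincide as collections of \emph{unordered} partitions (again because the two size requirements define the same multiset of sizes), and that $\min_{i}\util_a(\pttn'_i)$ is unaffected by the labeling of the parts; hence $\operatorname{MMS-share}(a)$ is the same number in both models and the MMS guarantee transfers verbatim. Modulo this observation, every fairness notion is dispatched by the same relabeling argument.
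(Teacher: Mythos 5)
Your proof is correct, and it supplies precisely the verification that the paper leaves implicit: the paper states this observation with no proof at all, treating it as immediate that the two instances have identical utilities and that the prescribed size vector is exactly the multiset of sizes forced by balancedness. Your three ingredients --- agreement of the additive utilities with the edge weights, label-invariance of all six fairness notions, and the coincidence of the two partition classes over which the MMS-share is computed (the one genuinely non-vacuous point, which you rightly single out) --- are exactly what is needed.

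One caveat worth flagging: your ``direct arithmetic check against the definition of $n_i$'' would in fact fail against the formula as printed. Writing $r = n \bmod \parts$ with $r > 0$, the statement assigns size $\lfloor n/\parts\rfloor$ to the $r+1$ parts with $i \le r+1$ and $\lceil n/\parts\rceil$ to the remaining $\parts - r - 1$ parts, so the sizes sum to $n + \parts - 2r - 1 \neq n$ in general (and the ordering $n_1 \ge \cdots \ge n_\parts$ required by the paper's own definition of ASHG-FSC is violated as well). This is a typo in the statement: the intended sizes are $r$ parts of size $\lceil n/\parts\rceil$ and $\parts - r$ parts of size $\lfloor n/\parts\rfloor$, which is exactly the multiset your argument uses. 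So your proof establishes the intended statement, which is the right thing to prove; just be aware that the arithmetic check you appeal to exposes the typo rather than confirms the printed formula.
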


Consequently, we obtain that our model of fair partitioning of friends is a special case of ASHG-FSC. Therefore, whenever it is computationally hard to decide whether a~$\phi$-fair balanced~$\parts$-partition, the same hardness also applies to the decision problem asking whether a~$\phi$-fair coalitions structure exists in ASHG-FSC.

Therefore, a more interesting question is whether any of our positive algorithmic results carry over to the more general model of ASHG-FSC. Maybe surprisingly, it turns out that our algorithms are very robust in this regard. Specifically, our algorithms work even in the setting where the parts are not necessarily balanced, but the sizes are given as part of the input. In other words, they can decide the existence (and find a corresponding partition, if one exists) of~$\phi$-fair partitions also for instances of ASHG-FSC. In the remainder of this section, we describe (very minor) modifications that one needs to perform to make our algorithms work in the not necessarily balanced regime. We focus only on our most advanced tractability results.

We start with the extension of the algorithm from \Cref{thm:binaryFPTbyVC}. This algorithm assumes binary valuations and is \FPT with respect to the vertex cover number of the friendship graph. To extend the algorithm to parts of arbitrary sizes, we add, for each~$i\in[\parts]$, a single constraint that ensures that the size of part~$i$ is not greater than~$n_i$. This modification does not increase the number of variables and does not require extra guessing, so the running time remains the same.

\begin{theorem}
    Let~$\Gamma = (H,\parts,\vec{n})$ be an instance of ASHG-FSC with all edges of weight one. Then, for every notion of fairness~$\phi\in\{\text{PROP},\text{MMS},\text{EF},\text{EFX}_0,\text{EFX},\text{EF1}\}$, there is an \FPT algorithm parameterized by the vertex cover number~$\vc$ of~$H$ that decides whether a~$\phi$-fair partition exists.
\end{theorem}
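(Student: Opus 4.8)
The plan is to reuse the algorithm and ILP formulation from \Cref{thm:binaryFPTbyVC} essentially verbatim, modifying only the components that encode balancedness. Recall that the original algorithm (i) guesses the number~$\ell$ of main groups together with the partition~$U_g$ of the vertex cover~$U$ into these groups, (ii) guesses, for each type~$X \subseteq U$, which of the~$\hat{\ell}$ groups may receive vertices of type~$I^X$, uses this guess to verify~$\phi$ for the independent vertices, and (iii) writes an ILP with variables~$x^X_g$ counting how many vertices of type~$I^X$ land in group~$g$, whose feasibility certifies that the guessed pattern can be realized while meeting the~$\phi$-criterion for the cover vertices. The only place where the requirement ``all parts have size in~$\{\lfloor n/\parts\rfloor, \lceil n/\parts\rceil\}$'' enters is in the balancedness constraints, namely \eqref{eq:binaryFPTbyVC:ShareILP:balanced} and, for the extra color, \eqref{eq:binaryFPTbyVC:ShareILP:ExtraColorsDivisible}. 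Everything else in the construction -- the guessing phases, the type machinery, the~$\phi$-verification for~$I$, and the envy/share conditions \eqref{eq:binaryFPTbyVC:ShareILP:VCShare}, \eqref{eq:binaryFPTbyVC:EFILP:EFforMainGroups}, \eqref{eq:binaryFPTbyVC:EFILP:EFforOthers}, \eqref{eq:binaryFPTbyVC:EFILP:yForExtraColors} -- is indifferent to the actual sizes and carries over unchanged.

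The first step is therefore to replace the balancedness constraint \eqref{eq:binaryFPTbyVC:ShareILP:balanced} by the single exact-size constraint
\[
    |U_g| + \sum_{X \subseteq U} x^X_g = n_{\sigma(g)} \qquad \forall g \in [\ell],
\]
where~$\sigma$ is a guessed injection assigning to each main group~$g$ a distinct prescribed size index from~$[\parts]$. Since~$\ell \le \vc$, the number of such injections is at most~$\parts^{\vc} \le n^{\vc}$, so this extra guessing can either be folded into the exhaustive search or, more cheaply, handled by allowing the sizes~$n_i$ to appear as right-hand sides and letting the ILP itself choose the assignment through a standard assignment sub-structure; either way the variable count stays bounded by a function of~$\vc$. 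The extra-color case is handled analogously: instead of \eqref{eq:binaryFPTbyVC:ShareILP:ExtraColorsDivisible}, which asserted divisibility into equal balanced groups among colors~$\ell+1,\ldots,\parts$, we require that the multiset of vertices placed in the extra color can be partitioned into groups of the exact prescribed sizes~$\{n_i : i \in [\parts]\setminus \operatorname{im}(\sigma)\}$. Because the extra-color vertices all meet their~$\phi$-share (for the share notions) or because round-robin among these groups minimizes neighborhood occupancy (for the envy notions), the feasibility of such a split reduces to verifying that the total count equals~$\sum_{i \notin \operatorname{im}(\sigma)} n_i$, which is a single linear equation.

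The correctness argument then mirrors the two directions of the proof of \Cref{thm:binaryFPTbyVC} line by line. In the forward direction, a feasible ILP solution yields, via \eqref{eq:binaryFPTbyVC:ShareILP:ISPartition}, a partition of each~$I^X$ into the groups, and the new exact-size equations guarantee that group~$g$ has precisely the prescribed size~$n_{\sigma(g)}$ rather than merely a balanced size; all the~$\phi$-verification steps are unchanged, so the resulting partition is a~$\phi$-fair partition of~$\Gamma$. In the reverse direction, any~$\phi$-fair partition of~$\Gamma$ uses at most~$\vc$ groups on~$U$, and recording~$x^X_g = |I^X \cap \pi_g|$ together with the size assignment~$\sigma$ induced by~$\pi$ gives a feasible ILP solution, exactly as before. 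The subtlety with the extra color -- that the groups~$\ell+1,\ldots,\parts$ may now have genuinely different sizes -- is the only genuinely new point, and it is benign precisely because fairness for the vertices placed there is already guaranteed by the guess (their~$\phi$-share is met, or in the envy case their worst-group occupancy is controlled by the~$y^u$ variables), so any size-respecting distribution works.

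I expect the main obstacle to be the envy-based notions in the extra-color regime: the round-robin argument used in \Cref{thm:binaryFPTbyVC} to bound~$\max_{i>\ell} |\pi_i \cap N(u)|$ by~$y^u$ implicitly relied on those groups being (almost) equal in size, so that the most uniform distribution is achievable. When the extra-color groups have prescribed, possibly unequal sizes~$n_i$, the tightest achievable maximum neighborhood occupancy of~$u$ across these groups is no longer~$\lceil (\sum_{X, u\in X} x^X_{\ell+1})/(\parts-\ell)\rceil$ but instead depends on the size profile. The fix is to redefine the~$y^u$ bound in \eqref{eq:binaryFPTbyVC:EFILP:yForExtraColors} to reflect a fractional/greedy packing of~$u$'s neighbors into groups of the given sizes -- concretely, one shows that distributing neighbors so as to minimize the maximum per-group count is still computable from the size vector and the total neighbor count, and encodes the resulting threshold as the right-hand side of \eqref{eq:binaryFPTbyVC:EFILP:EFforOthers}. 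Verifying that this greedy/most-uniform assignment is simultaneously realizable for all~$u \in U$ by a single placement of the extra-color vertices is the crux; it follows because the placement is made type-by-type and each type contributes uniformly to every~$u$ it neighbors, so the per-$u$ worst-case bounds are achieved by one common round-robin-within-fixed-sizes schedule. Once this is established, the running time is identical to that of \Cref{thm:binaryFPTbyVC} -- the guessing is~$2^{2^{\Oh{\vc}}}$ (times an additional~$n^{\Oh{\vc}}$ for~$\sigma$, which is absorbed), the ILP has~$f(\vc)$ variables, and Lenstra's algorithm~\cite{Lenstra83,Dadush12} finishes the job.
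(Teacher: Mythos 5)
Your high-level plan coincides with the paper's: keep the guessing and the ILP of \Cref{thm:binaryFPTbyVC} untouched and only modify the constraints that encode sizes. The paper's own modification is, however, much lighter than yours: it adds, for each $i\in[\parts]$, only the upper bound that part $i$ has size at most $n_i$ (since the prescribed sizes sum to $n$, these upper bounds already force equality), so the $g$-th main group is hard-wired to the $g$-th prescribed size; permutations among the main groups are already covered because the cover partition $U_1,\ldots,U_\ell$ is guessed as an ordered tuple. No injection $\sigma$ is ever introduced, and neither the number of variables nor the amount of guessing changes.

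The concrete gap in your write-up is that your mechanism for deciding which prescribed size goes to which main group destroys the claimed \FPT running time. Guessing an injection $\sigma\colon[\ell]\to[\parts]$ costs up to $\parts^{\vc}\le n^{\vc}$ branches, which is \XP, not \FPT, so it cannot simply be ``folded into the exhaustive search''. Your fallback, a ``standard assignment sub-structure'' inside the ILP, needs indicator variables $z_{g,i}$ for $g\in[\ell]$, $i\in[\parts]$, i.e.\ up to $\vc\cdot\parts=\Omega(n)$ variables; hence the claim that ``either way the variable count stays bounded by a function of $\vc$'' is false, and Lenstra's theorem no longer yields an \FPT bound. To close this hole one has to remove the assignment freedom altogether, e.g.\ by an exchange argument showing it is without loss of generality that the cover-containing groups receive the $\ell$ largest sizes (for PROP and MMS this works because every vertex in a cover-free group has share $0$ and can be shifted into a cover group without hurting anyone; for the envy-based notions a separate argument is required). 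On the positive side, you correctly spotted a subtlety the paper passes over in silence: with unequal sizes $n_{\ell+1}\ge\cdots\ge n_\parts$, the round-robin bound $y^u+1=\bigl\lceil \sum_{X\ni u} x^X_{\ell+1}/(\parts-\ell)\bigr\rceil$ behind \eqref{eq:binaryFPTbyVC:EFILP:yForExtraColors} need not be achievable (a size-one extra group cannot absorb its round-robin quota), so the envy conditions for the extra groups must be re-derived. But your resolution of that point -- that the per-$u$ optimal packings are simultaneously realizable by one common schedule -- is asserted rather than proven, so that step of your proposal also remains open.
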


Next, one can observe that the definitions of signatures and partial partitions in the dynamic programming algorithm for bounded treewidth friendships graphs from \Cref{thm:tw:XP} do not contain any requirement on balanced parts. Consequently, the dynamic programming table stores information about \emph{all}~$\phi$-fair partitions and we can decide the existence of a balanced one just by reading the correct cell of the table in the root of the tree decomposition. Therefore, the only modification we need to accommodate arbitrary sizes vector~$\vec{n}$ is the final step of the algorithm where we, instead of checking the value stored in~$\operatorname{DP}_r$ for a balanced vector~$\vec{b}$, we check for the given vector~$\vec{n}$. The rest of the computation is still the same, and thus the running time remains the same.

\begin{theorem}
    Let~$\Gamma = (H,\parts,\vec{n})$ be an instance of ASHG-FSC with all edges of weight one. Then, for every notion of fairness~$\phi\in\{\text{PROP},\text{MMS},\text{EF},\text{EFX}_0,\text{EFX},\text{EF1}\}$, there is an \XP algorithm parameterized by the treewidth~$\tw$ of~$H$ and the number of parts~$\parts$, combined, that decides whether there is a~$\phi$-fair partition.
\end{theorem}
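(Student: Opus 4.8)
The plan is to reuse \emph{verbatim} the leaf-to-root dynamic program of \Cref{thm:tw:XP} and to change only the single place where balancedness is imposed. Recall that the signature $(P,f,c,s)$ used there records, among other data, the exact number $s(i)$ of already-forgotten agents placed in each part $\pttn_i$, and that $\operatorname{DP}_x[P,f,c,s]$ is \texttt{true} exactly when some partial $\parts$-partition of $V^x$ realizing this signature and $\phi$-fair for all already-forgotten agents exists. The key observation is that none of the four recurrences (leaf, introduce, forget, join) ever consults the target part sizes: the constraint $\lfloor n/\parts\rfloor \le |\pttn_i| \le \lceil n/\parts\rceil$ is used \emph{only} at the very end, when the answer is read off at the root by querying the cell whose size function is the balanced vector. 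In other words, the table already stores information about \emph{every} $\phi$-fair $\parts$-partition, indexed by its vector of part sizes.

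Given this, the adaptation to an ASHG-FSC instance $\Gamma=(H,\parts,\vec{n})$ with unit edge weights is immediate. We build exactly the same table and, instead of probing the balanced size vector, we answer \emph{yes} iff $\operatorname{DP}_r[\emptyset,\emptyset,\emptyset,\vec{n}]=\texttt{true}$. By the correctness statements established for the four node types in the proof of \Cref{thm:tw:XP} (\Cref{lem:tw:leaf:correct,lem:tw:intro:correct,lem:tw:forget:correct,lem:tw:join:correct}), this cell is \texttt{true} precisely when there is a $\parts$-partition with $|\pttn_i| = n_i$ for every $i$ that is $\phi$-fair for all agents, which is exactly what we need.

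The only genuine care is required for the two share-based notions, since their thresholds are evaluated inside the fairness predicate and could in principle depend on the prescribed sizes. For PROP there is nothing to do: $\operatorname{PROP-share}(a) = \util_a(\Fr(a))/\parts$ depends only on $\parts$ and on the degrees, not on $\vec{n}$. For MMS the share must be recomputed over partitions respecting $\vec{n}$; with unit weights this is still a routine greedy/round-robin computation in the spirit of the balanced case, the only difference being that the number of friends a part may absorb is now additionally capped by its prescribed size. These thresholds are precomputed once per agent, after which the on-the-fly fairness checks inside the recurrences are unchanged; the envy-based notions $\text{EF},\text{EFX}_0,\text{EFX},\text{EF1}$ need no modification at all, as their conditions only compare an agent's utility across parts and never mention sizes.

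Finally, neither the number of table cells nor the per-cell work is affected by this change, so the total running time remains $n^{\Oh{\parts\cdot\tw}}$, i.e.\ \XP when parameterized by $\parts+\tw$. The step I expect to demand the most attention is the MMS-share recomputation, because it is the sole point at which the prescribed sizes genuinely enter the fairness definition rather than merely the final lookup; everything else is an unchanged replay of the argument behind \Cref{thm:tw:XP}.
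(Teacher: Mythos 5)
Your proposal is correct and follows essentially the same route as the paper: reuse the dynamic program of \Cref{thm:tw:XP} unchanged—since its signatures and recurrences never impose balancedness—and modify only the final root lookup, querying the cell indexed by the prescribed size vector $\vec{n}$ instead of the balanced vector. Your additional care in recomputing the MMS-share thresholds over partitions respecting $\vec{n}$ is a valid refinement of a detail that the paper's own (very brief) argument leaves implicit.
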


Similarly, a trivial tweak is required for our result from \Cref{thm:monotone:MMS:exists}. Recall that the algorithm is initialized with a budget vector~$\vec{b} = (\lfloor n/\parts \rfloor, \ldots, \lfloor n/\parts \rfloor, \lceil n/\parts \rceil, \ldots, \lceil n/\parts \rceil)$. However, if we replace balanced~$\vec{b}$ with arbitrary~$\vec{n}$, the correctness arguments and the running time analysis remain the same. This is because the algorithm maintains two invariants independent of the balancedness of the constructed parts. This yields to our final result.

\begin{theorem}
    Let~$\Gamma = (H,\parts,\vec{n})$ be an instance of ASHG-FSC. If the graph~$H$ is a forest, an MMS and EFX partition is guaranteed to exist and can be found in polynomial time.
\end{theorem}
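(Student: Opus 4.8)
The plan is to run, essentially verbatim, the greedy BFS procedure from the proof of \Cref{thm:monotone:MMS:exists}, changing only its initialization: instead of seeding the budget vector with the balanced vector $(\lfloor n/\parts\rfloor,\ldots,\lceil n/\parts\rceil)$, we seed it with the prescribed sizes $\vec{n}=(n_1,\ldots,n_\parts)$. Everything else---the rooting of each tree, the BFS order, the computation of the assignment vector $\vec{\ell}$ by repeatedly charging one unit to a currently budget-maximal part, the selection of the best subsets $S$ and $S'\cup\{p_a\}$, and the optional move of an agent to its parent's part---stays identical. The first thing I would check is the bookkeeping invariant that guarantees a valid output: throughout the run we maintain $|\pttn_i|+b_i=n_i$ for every $i\in[\parts]$, so that once the budget reaches the zero vector the parts have exactly the prescribed sizes. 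This identity never refers to how the $n_i$ compare to one another, hence it is completely insensitive to replacing the balanced vector by an arbitrary $\vec{n}$.

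The heart of the argument is to re-derive the \emph{local} guarantee that \Cref{clm:trees_EFX_root,clm:trees_MMS_share_root,clm:trees_EFX_ineer,clm:trees_MMS_share_inner} consume. In the balanced analysis these claims invoke the global fact $|\ell_i-\ell_j|\le 1$; what they actually use, however, is only the one-sided inequality $\ell_{\pttn(a)}\ge \ell_j-1$ for every part $j$, together with the optimality of $S$ and of $S'\cup\{p_a\}$. I would therefore prove the following budget-local replacement: since the greedy charges units to budget-maximal parts, the part hosting $a$ receives at least as many friends of $a$ as any other part minus one, and a short counting argument yields $\ell_{\pttn(a)}\ge\lfloor \deg(a)/\parts\rfloor$. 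On the other side, by pigeonhole over $\parts$ parts, \emph{every} $\vec{n}$-respecting partition of $\Fr(a)$ places at most $\lfloor\deg(a)/\parts\rfloor$ friends in some part---a bound that is itself independent of the shape of $\vec{n}$. Feeding these two facts into the existing proofs reproduces EFX and the MMS bound for $a$ line for line, now measured against $\operatorname{MMS-share}(a)$ taken over $\vec{n}$-respecting partitions. The running-time statements \Cref{clm:linear_time_root,clm:linear_time_inner} are unaffected, as the only computation involved remains top-$\ell$ selection.

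The step I expect to be the main obstacle is establishing $\ell_{\pttn(a)}\ge\ell_j-1$ for non-root (inner) agents under a possibly very skewed $\vec{n}$. An inner agent cannot be placed freely in a budget-maximal part: it is confined to its current part or its parent's part, and the move-to-parent mechanism is the only lever. A priori, a large-capacity part could absorb far more of $a$'s children than $a$'s own part, destroying the inequality. The resolution I would pursue rests on the observation that the greedy always drains the largest remaining budgets first, so at the moment $a$ is processed the parts still holding substantial budget are precisely those to which $a$ or its parent was assigned, while any part whose capacity is well below the current ``water level'' is already (almost) full; combined with the constraint that $a$'s own children must fit into the remaining budget, this forces any part that could out-collect $a$ to lie within one unit of $\pttn(a)$. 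Formalizing this statement---that every agent is processed while its reachable parts are budget-maximal, or else the budget has already become near-uniform so that $\vec{\ell}$ is genuinely balanced---is the crux. Once it is in place, the verification is identical to the balanced case, and the linear-time bound for additive utilities is inherited unchanged.
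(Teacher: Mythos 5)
Your high-level plan is exactly the paper's (the paper's own justification for this theorem is a single paragraph asserting that re-seeding the budget vector with $\vec{n}$ changes nothing), and your diagnosis is in fact sharper than that assertion: you correctly notice that the balanced invariant $|\ell_i-\ell_j|\le 1$ from the proof of \Cref{thm:monotone:MMS:exists} is destroyed by a skewed $\vec{n}$, and that the fairness claims really only need the one-sided bound $\ell_{j'}\le\ell_{\pttn(a)}+1$. But as a proof the proposal has a genuine gap, which you flag yourself: the crux statement --- that every agent is processed while its own part is within one unit of budget-maximal --- is never established, and the intuition you offer for it (``the parts still holding substantial budget are precisely those to which $a$ or its parent was assigned'') is not accurate, since a part touched by neither $a$ nor its parent may well hold the largest remaining budget. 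The statement does hold, but for a simpler reason you missed: at the moment $a$ is assigned to part $j$, that assignment is itself a greedy max-pick, so $b_{j'}\le b_j+1$ for every $j'$; and once true, this property is preserved by every subsequent decrement, because a decrement of $b_j$ can only happen when $j$ is budget-maximal (so afterwards $b_{j'}\le b_j+1$ still holds), while decrements of other parts only lower the left-hand side. Since moving an agent to its parent's part is compensated in the children's assignment and never alters budgets, every decrement in the entire run is a greedy max-pick, and the invariant persists until $a$ is processed. This short induction is the missing idea; without it (and without fixing tie-breaking --- with arbitrary ties the greedy can output $\ell_{\pttn(a)}=0$, $\ell_{j'}=2$ from budgets $(5,6)$ even under the invariant, which breaks EFX, so ties must favour $\pttn(a)$) the proposal does not go through.

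Second, your intermediate claim $\ell_{\pttn(a)}\ge\lfloor\deg(a)/\parts\rfloor$ is false for non-root agents: the one-sided bound together with $\sum_{j'}\ell_{j'}=\deg(a)-1$ only yields $\ell_{\pttn(a)}\ge\lceil\deg(a)/\parts\rceil-1$. Concretely, with $\parts=3$ and $\vec{n}=(3,2,2)$ on a tree whose root has four children, one of which ($a$, highly valued by the root) has two children of its own, the algorithm processes $a$ when the budgets are $(0,1,1)$ (its own part first) and outputs $\vec{\ell}=(0,1,1)$, although $\lfloor\deg(a)/\parts\rfloor=1$. In this situation your two facts cannot certify MMS at all --- the pigeonhole bound on the MMS-share ($\util_a$ of a best singleton, here a child) is far above what $a$ receives (its parent's value) --- and MMS holds only because every partition of $\Fr(a)$ into $\parts$ blocks must place the parent in one of them, which caps the MMS-share; this is precisely the deficient case $\parts\mid\deg(a)$, $\ell_{\pttn(a)}=\deg(a)/\parts-1$ that \Cref{clm:trees_MMS_share_inner} closes via the move-to-parent option and the optimality of $S'\cup\{p_a\}$, and the same parent-aware case analysis is needed in the unbalanced setting. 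So ``line for line'' is too optimistic for MMS. With these two repairs --- the persistence invariant and the parent-aware MMS argument --- your plan does yield a complete proof, matching what the paper tersely asserts.
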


\section{Conclusion}\label{sec:conclusion}

Our results indicate that the problem of fair partitioning of friends heavily depends on the friendship graph and the fairness concept we adopt. Although we have provided a rather complete picture of the complexity of the problem, we believe that there are several directions for future work. Firstly, are there any other graph classes that allow for the tractability of the problem? Our preliminary results indicate that grid-graphs are such a class. What if friendship graphs are directed? Some of our results apply here, but there exist unsettled questions that deserve to be studied. Does the presence of enemies, i.e., agents giving negative utilities, drastically change the landscape of the problem?

A different direction would be to restrict the valuations of the agents and study whether this can lead to tractability. We strongly believe that our proofs of \Cref{thm:EFX:NPh,thm:MMS:nopolytime,thm:EF1_NPh_vc2} can be extended to graphs with constant vertex cover number and valuations given in unary. On the other hand though, we cannot see if the problems remain hard when agents have binary valuations, with or without constant vertex cover number. Are there any polynomial-time algorithms for the problems under binary valuations?

Finally, we should highlight that there exist other fairness notions, even more relaxed than ours, whose existence remains unresolved: PROP1 and EF$r$, for~$r \geq 2$, are among the best candidates to prove some further positive results.

\section*{Acknowledgements}

This work was co-funded by the European Union under the project Robotics and advanced industrial production (reg. no. CZ.02.01.01/00/22\_008/0004590).
Argyrios Deligkas and Stavros D. Ioannidis acknowledge the support of the EPSRC grant EP/X039862/1.
Dušan Knop and Šimon Schierreich acknowledge the support of the Czech Science Foundation through the project No. 22-19557S.
Šimon Schierreich acknowledges the additional support of the Grant Agency of the Czech Technical University in Prague, grant No. SGS23/205/OHK3/3T/18.

\bibliographystyle{plainnat}
\bibliography{references}

\end{document}